\newcommand{\MAC}{\textnormal{MAC}}
\renewcommand{\a}{\mathbf{a}}
\renewcommand{\b}{\mathbf{b}}
\renewcommand{\t}{\mathbf{t}}
\renewcommand{\u}{\mathbf{u}}
\newcommand{\x}{\mathbf{x}}
\newcommand{\y}{\mathbf{y}}
\newcommand{\z}{\mathbf{z}}
\newcommand{\A}{\mathcal{A}}
\newcommand{\B}{\mathcal{B}}
\newcommand{\C}{\mathcal{C}}
\newcommand{\F}{\mathcal{F}}
\newcommand{\G}{\mathcal{G}}
\newcommand{\J}{\mathcal{J}}
\newcommand{\K}{\mathcal{K}}
\renewcommand{\P}{\mathcal{P}}
\newcommand{\R}{\mathcal{R}}
\newcommand{\T}{\mathcal{T}}
\newcommand{\V}{\mathcal{V}}
\newcommand{\X}{\mathcal{X}}
\newcommand{\Y}{\mathcal{Y}}
\newcommand{\Z}{\mathcal{Z}}
\newcommand{\U}{\mathcal{U}}
\newcommand{\EE}{\mathbb{E}}
\newcommand{\PP}{\mathbb{P}}
\newcommand{\RR}{\mathbb{R}}
\newcommand{\eps}{\varepsilon}
\renewcommand{\phi}{\varphi}
\newcommand{\eins}{\textsubscript{1}}
\newcommand{\zwei}{\textsubscript{2}}
\newcommand{\nue}{$_\nu$}
\newcommand{\0}{^{(0)}}
\newcommand{\1}{^{(1)}}
\newcommand{\2}{^{(2)}}
\newcommand{\3}{^{(3)}}
\newcommand{\hq}{^{(q)}}
\newcommand{\hr}{^{(r)}}
\newcommand{\hqr}{^{(q,r)}}
\newcommand{\halpha}{^{(\alpha)}}
\newcommand{\zhalpha}{^{(2,\alpha)}}
\author{Moritz Wiese and Holger Boche}
\institute{Technische Universit\"at M\"unchen}
\title{Strong Secrecy for Multiple Access Channels}
\begin{document}

\maketitle

\begin{center}\small\textit{
  Dedicated to the memory of Rudolf Ahlswede}
\end{center}

\begin{abstract}
  We show strongly secret achievable rate regions for two different wiretap multiple-access channel coding problems. In the first problem, each encoder has a private message and both together have a common message to transmit. The encoders have entropy-limited access to common randomness. If no common randomness is available, then the achievable region derived here does not allow for the secret transmission of a common message. The second coding problem assumes that the encoders do not have a common message nor access to common randomness. However, they may have a conferencing link over which they may iteratively exchange rate-limited information. This can be used to form a common message and common randomness to reduce the second coding problem to the first one. We give the example of a channel where the achievable region equals zero without conferencing or common randomness and where conferencing establishes the possibility of secret message transmission. Both coding problems describe practically relevant 
networks which need to be secured against eavesdropping attacks.
\end{abstract}

\section*{Contents}

\noindent\textbf{\ref{sect:intro} Introduction\hfill\pageref{sect:intro}}\medskip

\noindent\textbf{\ref{sect:wiretapMAC} The Wiretap Multiple-Access Channel\hfill\pageref{sect:wiretapMAC}}\\
\indent\ref{subsect:commmess} With Common Message\dotfill\pageref{subsect:commmess}\\
\indent\ref{subsect:confenc} With Conferencing Encoders\dotfill\pageref{subsect:confenc}\medskip

\noindent\textbf{\ref{sect:thethms} Coding Theorems\hfill\pageref{sect:thethms}}\\
\indent\ref{subsect:commmessthm} For the Wiretap MAC with Common Message\dotfill\pageref{subsect:commmessthm}\\
\indent\ref{subsect:confencthm} For the Wiretap MAC with Conferencing Encoders\dotfill\pageref{subsect:confencthm}\medskip

\noindent\textbf{\ref{proofcomm} Proof of Theorem \ref{thmcomm}\hfill\pageref{proofcomm}}\\
\indent\ref{subsect:elrateregs} Elementary Rate Regions\dotfill\pageref{subsect:elrateregs}\\
\indent\ref{subsect:provesec} How to Prove Secrecy\dotfill\pageref{subsect:provesec}\\
\indent\ref{sect:setup} Probabilistic Bounds for Secrecy\dotfill\pageref{sect:setup}\\
\indent\ref{subsect:randcodcommmess} Random Coding for the Non-Wiretap MAC with Common Message\dotfill\pageref{subsect:randcodcommmess}\\
\indent\ref{MAC} Coding\dotfill\pageref{MAC}\\
\indent\ref{subsect:concl} Concluding Steps\dotfill\pageref{subsect:concl}\medskip

\noindent\textbf{\ref{sect:proofconf} Proof of Theorem \ref{thmconf}\hfill\pageref{sect:proofconf}}\\
\indent\ref{subsect:elratreg} Elementary Rate Regions\dotfill\pageref{subsect:elratreg}\\
\indent\ref{subsect:confcod} Coding\dotfill\pageref{subsect:confcod}\medskip

\noindent\textbf{\ref{sect:discussion} Discussion\hfill\pageref{sect:discussion}}\\
\indent\ref{subsect:confsectrans} Conferencing and Secret Transmission\dotfill\pageref{subsect:confsectrans}\\
\indent\ref{subsect:timesharing} Necessity of Time-Sharing in Random Coding\dotfill\pageref{subsect:timesharing}\medskip

\noindent\textbf{\ref{sect:lemmaproof1} Proof of Lemma \ref{gemconc}\hfill\pageref{sect:lemmaproof1}}\medskip

\noindent\textbf{\ref{sect:lemmaproof2} Proof of Lemma \ref{unionconv2}\hfill\pageref{sect:lemmaproof2}}

\section{Introduction}\label{sect:intro}

The wiretap Multiple-Access Channel (MAC) combines two areas where Rudolf Ahlswede has made major contributions. In the area of multi-user information theory, he \cite{AMAC} and Liao \cite{Liao} independently gave one of the first complete characterizations of the capacity region of a multi-user channel -- the MAC with one message per sender. Later, Dueck \cite{DMACConv} proved the strong converse for the MAC and Ahlswede \cite{AMACConv} gave an elementary proof immediately afterwards. Slepian and Wolf generalized the results from \cite{AMAC} and \cite{Liao} to the case where the senders additionally have a common message \cite{SW}. Willems used Slepian and Wolf's result to derive the capacity region of the MAC with conferencing encoders. This is a MAC without common message, but the encoders can exchange rate-limited information about their messages in an interactive conferencing protocol \cite{Wi1,Wi2}. The results of Slepian and Wolf as well as Willems' result were only recently generalized to general 
compound MACs with partial channel state information in \cite{WBBJ11}, arbitrarily varying MACs with conferencing encoders were treated in \cite{WB11}. The latter paper made substantial use of techniques developed by Ahlswede for single-sender arbitrarily varying channels in \cite{A1,A2,A3} and also of his and Cai's contribution to arbitrarily varying MACs \cite{AC}. 

The other area of Ahlswede's interest which plays a role in this paper is secrecy and common randomness. Among other problems, he considered together with Csisz\'ar in \cite{ACs1,ACs2} how a secret key can be shared at distant terminals in the presence of an eavesdropper. Work on secret key sharing aided by public communication goes back to Maurer \cite{Ma}. The first paper which exploits the statistics of a discrete memoryless channel to establish secret communication is due to Wyner \cite{Wy}. He considers the wiretap channel, the simplest model of a communication scenario where secrecy is relevant: a sender would like to transmit a message to a receiver over a discrete memoryless channel and transmission is overheard by a second receiver who should be kept ignorant of the message. It was noted by Wyner that a secret key shared at both legitimate terminals is not necessary to establish secret transmission -- if the channel statistics are taken into consideration, it is sufficient that the sender randomizes 
his inputs in order to secure transmission.

Since Wyner discovered this fact, information-theoretic secrecy for message transmission without a key shared between sender and legitimate receiver has been generalized in various directions. The first paper on multi-user information-theoretic security is due to Csisz\'ar and K\"orner \cite{CKBCC}. Here, the second receiver only is a partial eavesdropper: there is a common message intended for both receivers, but as in the original wiretap channel, an additional private message intended for the first receiver must be kept secret from the second. We come to multiple-access models below. An overview over the area is given in \cite{LPS}. 

The original secrecy criterion used in \cite{Wy} and \cite{CKBCC} and in most of the subsequent work until today has become known as the ``weak secrecy criterion''. Given a code, it measures the mutual information normalized by the code blocklength between the randomly chosen message and the eavesdropper's output corresponding to the application of the code and transmission over the channel. Maurer introduced the ``strong secrecy criterion'' in \cite{Maurer94a} by omitting the normalization. The advantage of this criterion was revealed in \cite{BBS}: it can be given an operational meaning, i.e. one can specify the attacks it withstands. It is possible to show that if transmission obeys the strong secrecy criterion, then the eavesdropper's average error tends to one for any decoder it might apply. Translated into practical secrecy schemes, this means that no matter how large the computing power of a possible eavesdropper might be, it will not succeed in breaking the security of this scheme. For the weak 
criterion, there are still only heuristic argumentations as to why it should be secret. Further secrecy metrics are presented in \cite{BL}, but without giving them an operational meaning, strong secrecy remains the strongest of these metrics. To our knowledge, there are three different approaches to establishing strong secrecy in a wiretap channel so far \cite{Ma,Cs99,De05}. In fact, the last of these approaches also applies to classical-quantum wiretap channels \cite{De05} and also was used to give an achievable rate for the classical compound wiretap channel \cite{BBS}.

There exist many MAC models where secrecy is an issue. This may even be the case when there is no eavesdropper, as each encoder might have access to noisy observations of the other sender's codeword but wants to protect its own message from decoding at the other sender \cite{LMYS,LP08,EU}. The case where the encoders have access to generalized feedback but only keep their messages secret from an external eavesdropper is considered in \cite{TLSP}. In the cognitive MAC, only one encoder has a private message, and together, the encoders have a common message. There are again two cases: In the case without an eavesdropper, the encoder without a private message has access to the codeword sent by the other encoder through a noisy channel and must be kept ignorant of the other encoder's private message \cite{LLP}. In \cite{SY}, the cognitive MAC without feedback was investigated where the messages must be kept secret from an eavesdropper and the encoders have unrestricted access to common randomness. All of these 
papers use the weak secrecy criterion.

The first part of this article generalizes and strengthens the achievability result from \cite{EUMAC} where multi-letter characterizations of an achievable region and of an outer bound on the capacity region of a MAC without common message and with an external eavesdropper under the weak secrecy criterion are given. The channel needs to satisfy certain relatively strong conditions for the bounds to work. Extensions to the Gaussian case can be found in \cite{EUMAC,TY1,HY}. 

We consider two senders Alice\eins\ and Alice\zwei. Each has a private message and together they have a common message. This message triple must be transmitted to Bob over a discrete memoryless MAC in such a way that Eve who obtains a version of the sent codewords through another discrete memoryless MAC cannot decode the messages. We apply the strong secrecy criterion. In order to find a code which satisfies this criterion, we use Devetak's approach \cite{De05}, which in the quantum case builds on the Ahlswede-Winter lemma \cite{AW} and classically on a Chernoff bound. It is similar to the approach taken in \cite{CWY}. As the senders have a common message and as the second part of the paper deals with the wiretap MAC with conferencing encoders, we assume that the encoders have access to a restricted amount of common randomness. Common randomness for encoding has so far only been used in \cite{SY}, but without setting any limitations on its amount. Note that this use of common randomness in order to establish 
secrecy differs from the use made in \cite{ACs1,ACs2}. We only obtain an achievable region. In this achievable region it is not possible to transmit a common message if no common randomness is available. Further it is notable that we use random coding and have to apply time-sharing before derandomizing. 

The wiretap MAC with common message and common randomness is also needed in the second part of this paper about the wiretap MAC with conferencing encoders. Conferencing was introduced by Willems in \cite{Wi1,Wi2} and is an iterative protocol for the senders of a MAC to exchange information about their messages. One assumes that the amount of information that is exchanged is rate-limited because otherwise one would obtain a single-encoder wiretap MAC. Willems already used the coding theorem for the MAC with common message to deduce an achievable region for the conferencing MAC. The same can be done for the wiretap MAC with conferencing encoders. More precisely, aside from the senders' private messages, there are no further messages to be transmitted, and no common randomness is available. However, conferencing is used to produce both a common message and common randomness, which allows the reduction. A consequence of the fact that no common message can be transmitted by the wiretap MAC with common message if 
there is no common randomness is that one has to use conferencing to establish some common randomness if this is supposed to enlarge the achievable region compared to what would be achievable without conferencing. Again, this consequence presumes that the achievable region equals the capacity region even though we cannot prove this.

Information-theoretic security has far-reaching practical consequences. As digital communication replaces more and more of the classical paper-based ways of communication even for the transmission of sensible data, the problem of securing these data becomes increasingly important. Information-theoretic secrecy provides an alternative to the traditional cryptographic approach which bases on the assumption of limited computing power. However, as information-theoretic security uses the imperfections of the channels to secure data, its models must be sufficiently complex to describe realistic scenarios. Our article shows how encoder cooperation can be utilized to secure data. The cooperation of base stations in mobile networks is included in future wireless network standards, and our work can be seen as a contribution to the theoretical analysis of how it fares when it comes to security. But already Csisz\'ar and K\"orner's paper on the broadcast channel with confidential messages shows how messages with 
different secrecy requirements can be combined in one transmission. A more recent example which also applies the strong secrecy criterion is given in \cite{WWB}.

\subsubsection{Organization of the paper:} The next section introduces the general model of a wiretap MAC and also presents the Willems conferencing protocol. Section \ref{sect:thethms} contains the two achievability theorems for the wiretap MAC with common message and the wiretap MAC with conferencing encoders. 

The common message theorem is treated in the rather long Section \ref{proofcomm}. First, the regions we claim to be achievable are decomposed into regions whose achievability can be shown more easily. Following Devetak, it is shown that it is sufficient to make Eve's output probability given a message triple almost independent of this triple in terms of variation distance. Then, in the mathematical core of the paper, we derive lower bounds on the randomness necessary to achieve strong secrecy using probabilistic concentration results. Here we also follow Devetak. Having derived these bounds, we finally find a realization of the random codes which defines a good wiretap code.

Section \ref{sect:proofconf} gives the proof of the achievability theorem for the wiretap MAC with conferencing encoders. We again have to decompose the claimed regions into regions whose achievability can be shown more easily. Then we can reduce the problem of achieving a certain rate pair with conferencing to the problem of achieving a certain rate triple by the wiretap MAC with common message more or less in the same way as done by Willems in the non-wiretap situation. Finally, Section \ref{sect:discussion} shows that conferencing may help in situations where no secret transmission is possible without and that for our approach it is necessary to do the time-sharing within the random coding.

\subsubsection{Notation:} For sets $\{1,\ldots,M\}$, where $M$ is a positive integer, we use the combinatorial shorthand $[M]$. For a real number $x$ we define $[x]_+:=\max\{x,0\}$.

For any set $\X$ and subset $A\subset\X$ we write $A^c:=\X\setminus A$. We let $1_A:\X\rightarrow\{0,1\}$ be the indicator function of $A$ which takes on the value 1 at $x\in\X$ if and only if $x\in A$. Given a probability space $(\Omega,\A,\PP)$ we write $\EE$ for the expectation corresponding to $\PP$ and for $A\in\A$ and a real-valued random variable $X$ we write $\EE[X;A]:=E[X1_A]$. 

The space of probability distributions on the finite set $\X$ is denoted by $\P(\X)$. In particular, it contains for every $x\in\X$ the probability measure $\delta_x$ defined by $\delta_x(x)=1$. The product of two probability distributions $P$ and $Q$ is denoted by $P\otimes Q$. A stochastic matrix with input alphabet $\X$ and output alphabet $\Z$ is written as a mapping $W:\X\rightarrow\P(\Z)$.  The $n$-fold memoryless extension of a channel $W:\X\rightarrow\P(\Z)$ is denoted by $W^{\otimes n}$, so that for $\x=(x_1,\ldots,x_n)\in\X^n$ and $\z=(z_1,\ldots,z_n)\in\Z^n$,
\[
  W^{\otimes n}(\z\vert\x)=\prod_{i=1}^nW(z_i\vert x_i).
\]
We also define for $P\in\P(\X)$ and $W:\X\rightarrow\P(\Z)$ the probability distribution $P\otimes W\in\P(\X\times\Z)$ by $(P\otimes W)(x,z)=P(x)W(z\vert x)$.

Every measure $\mu$ on the finite set $\X$ can be identified with a unique function $\mu:\X\rightarrow[0,\infty]$. Then for any subset $A\subset\X$ we have $\mu(A)=\sum_{x\in A}\mu(x)$. On the set of measures on $\X$, we define the total variation distance by
\[
  \lVert \mu_1-\mu_2\rVert:=\sum_{x\in\X}\lvert \mu_1(x)-\mu_2(x)\rvert.
\]

Given a random variable $X$ living on $\X$ and a $P\in\P(\X)$, we mean by $X\sim P$ that $P$ is the distribution of $X$. Given a pair of random variables $(X,Y)$ taking values in the finite set $\X\times\Y$, we write $P_X\in\P(\X)$ for the distribution of $X$ and $P_{X\vert Y}$ for the conditional distribution of $X$ given $Y$. We also write $T_{X,\delta}^n\subset\X^n$ for the subset of $\delta$-typical sequences with respect to $X$ and $T_{X\vert Y,\delta}^n(\y)\subset\Y^n$ for the subset of conditionally $\delta$-typical sequences with respect to $P_{X\vert Y}$ given $\y\in\Y^n$. Given a sequence $\x\in\X^n$ and an $x\in\X$, we let $N(x\vert\x)$ be the number of coordinates of $\x$ equal to $x$.

For random variables $X,Y,Z$ we write $H(X)$ for the entropy of $X$, $H(X\vert Y)$ for the conditional entropy of $X$ given $Y$, $I(X\wedge Y)$ for the mutual information of $X$ and $Y$ and $I(X\wedge Y\vert Z)$ for the conditional mutual information of $X$ and $Y$ given $Z$.

\textbf{Acknowledgment: }We would like to thank A. J. Pierrot for bringing the papers \cite{PB} and \cite{YRA} to our attention. They consider strong secrecy problems in multi-user settings with the help of resolvability theory. In particular, in \cite{YRA}, an achievable region for the wiretap MAC without common message or conferencing is derived.

\section{The Wiretap Multiple-Access Channel}\label{sect:wiretapMAC}

The wiretap Multiple-Access Channel (MAC) is described by a stochastic matrix
\[
  W:\X\times\Y\rightarrow\T\times\Z,
\]
where $\X,\Y,\T,\Z$ are finite sets. We write $W_b$ and $W_e$ for the marginal channels to $\T$ and $\Z$, so e.g.
\[
  W_b(t\vert x,y):=\sum_{z\in\Z}W(t,z\vert x,y).
\]
$\X$ and $\Y$ are the finite alphabets of Alice\eins\ and Alice\zwei, respectively. $\T$ is the finite alphabet of the receiver called Bob and the outputs received by the eavesdropper Eve are elements of the finite alphabet $\Z$. 

\subsection{With Common Message}\label{subsect:commmess}

Let $H_C$ be a nonnegative real number. A wiretap MAC code with common message and blocklength $n$ satisfying the  common randomness bound $H_C$ consists of a stochastic matrix
\begin{align*}
  G&:[K_0]\times[K_1]\times[K_2]\rightarrow\P(\X^n\times\Y^n)
\intertext{and a decoding function}
  \phi&:\T^n\rightarrow[K_0]\times[K_1]\times[K_2].
\end{align*}
$G$ is required to have the form
\[
  G(\x,\y\vert k_0,k_1,k_2)=\sum_{j\in\J}G_0(j\vert k_0)G_1(\x\vert k_0,k_1,j)G_2(\y\vert k_0,k_2,j),
\]
where $\J$ is some finite set and
\begin{align*}
  G_0&:[K_0]\rightarrow\P(\J),\\
  G_1&:[K_0]\times[K_1]\times\J\rightarrow\P(\X),\\
  G_2&:[K_0]\times[K_2]\times\J\rightarrow\P(\Y).
\end{align*}
Further, $G_0$ has to satisfy that $H(J\vert M_0)\leq nH_C$ for $M_0$ uniformly distributed on $[K_0]$ and $P_{J\vert M_0}=G_0$. $[K_0]$ is called the set of common messages, $[K_1]$ is the set of Alice\eins's private messages and $[K_2]$ the set of Alice\zwei's private messages. 

Let $M_0,M_1,M_2$ be independent random variables uniformly distributed on $[K_0]$, $[K_1]$ and $[K_2]$, respectively. Further, let $X^n,Y^n,T^n,Z^n$ be random variables such that for $(\x,\y,\t,\z)\in\X^n\times\Y^n\times\T^n\times\Z^n$
\begin{align*}
  P_{X^nY^n\vert M_0M_1M_2}(\x,\y\vert k_0,k_1,k_2)&=G(\x,\y\vert k_0,k_1,k_2),\\
  P_{T^nZ^n\vert X^nY^nM_0M_1M_2}(\t,\z\vert \x,\y,k_0,k_1,k_2)&=W^{\otimes n}(\t,\z\vert\x,\y).
\end{align*}
Then the average error of the code defined above equals
\[
  \PP[\phi(T^n)\neq (M_0,M_1,M_2)].
\]

\begin{definition}
A rate pair $(R_0,R_1,R_2)\in\RR_{\geq 0}^3$ is achievable by the wiretap MAC with common message under the common randomness bound $H_C\geq 0$ if for every $\eta>0$ and every $\eps\in(0,1)$ and $n$ large there exists a wiretap MAC code with common message and blocklength $n$ satisfying the common randomness bound $H_C$ and 
\begin{align*}
  \frac{1}{n}\log K_\nu&\geq R_\nu-\eta\qquad(\nu=0,1,2),\\
  \PP[\phi(T^n)\neq(M_0,M_1,M_2)]&\leq\eps,\\
  I(Z^n\wedge M_0M_1M_2)&\leq\eps.
\end{align*}
\end{definition}

\begin{remark}\label{ersterrem}
  It was shown in \cite{BBS} that no matter how Eve tries to decode the messages from the Alices, the average error must tend to one. More precisely, assume that a wiretap code with common message and blocklength $n$ is given, and assume that Eve has a decoding function 
\[
  \chi:\Z^n\rightarrow[K_0]\times[K_1]\times[K_2].
\]
Then
\[
  \PP[\chi(Z^n)\neq(M_0,M_1,M_2)]\geq 1-\eps'
\]
for some $\eps'$ which tends to zero as $\eps$ tends to zero. If $\eps$ tends to zero exponentially fast and $K_0,K_1,K_2$ grow exponentially, then $\eps'$ tends to zero at exponential speed.

More generally assume that $f:[K_0]\times[K_1]\times[K_2]\rightarrow[K']$ is a function satisfying $\PP[f(M_0,M_1,M_2)=k']=1/K'$ for all $k'\in[K']$. Then with the same argument as in \cite{BBS} one can show that for every function $g:\Z^n\rightarrow[K']$, one has $\PP[f(M)\neq g(Z^n)]\geq 1-1/K'-\eps'$ for the same $\eps'$ as above. That is, even for $K'=2$, blind guessing is the best way for Eve to estimate $f(M)$. In particular, no subset of the message random variables, like $M_0$ or $(M_1,M_2)$, can be reliably decoded by Eve.
\end{remark}

\subsection{With Conferencing Encoders}\label{subsect:confenc}

In the wiretap MAC with conferencing encoders, Alice\eins\ and Alice\zwei\ do not have a common message nor common randomness. However before forming their codewords, they may exchange some information about their private messages according to an iterative and randomized ``conferencing'' protocol whose deterministic form was introduced by Willems \cite{Wi1,Wi2}. If the respective message sets are $[K_1]$ and $[K_2]$, such a stochastic Willems conference can be described as follows. Let finite sets $\J_1$ and $\J_2$ be given which can be written as products
\begin{align*}
  \J_\nu=\J_{\nu,1}\times\ldots\times\J_{\nu,I}&&(\nu=1,2)
\end{align*}
for some positive integer $I$ which does not depend on $\nu$. A Willems conferencing stochastic matrix $c$ completely describing such a conference is determined in an iterative manner via sequences of stochastic matrices $c_{1,1},\ldots,c_{1,I}$ and $c_{2,1},\ldots,c_{2,I}$. $c_{1,i}$ describes the probability distribution of what Alice\eins\ tells Alice\zwei\ in the $i$-th conferencing iteration given the knowledge accumulated so far at Alice\eins. Thus in general, using the notation
\[
  \bar\nu:=\begin{cases}
             1\quad\text{if }\nu=2,\\
             2\quad\text{if }\nu=1,
           \end{cases}
\]
these stochastic matrices satisfy for $\nu=1,2$ and $i=2,\ldots,I$,
\begin{align*}
  c_{\nu,1}&:[K_\nu]\rightarrow\P(\J_{\nu,1}),\\
  c_{\nu,i}&:[K_\nu]\times\J_{\bar\nu,1}\times\ldots\times\J_{\bar\nu,i-1}\rightarrow\P(\J_{\nu,i}).
\end{align*}
The stochastic matrix $c:[K_1]\times[K_2]\rightarrow\P(\J_1\times\J_2)$ is obtained by setting 
\begin{align*}
  &\mathrel{\hphantom{=}}\;c(j_{1,1},\ldots,j_{1,I},j_{2,1},\ldots,j_{2,I}\vert k_1,k_2)\\
  &:=c_{1,1}(j_{1,1}\vert k_1)\,c_{2,1}(j_{2,1}\vert k_2)\cdots\\
  &\mathrel{\hphantom{=}}\;\cdots c_{1,I}(j_{1,I}\vert k_1,j_{2,1},\ldots,j_{2,I-1})\,c_{2,I}(j_{2,I}\vert k_2,j_{1,1},\ldots,j_{1,I-1}).
\end{align*}
We denote the $\J_1$- and $\J_2$-marginals of this stochastic matrix by $c_1$ and $c_2$, so $c_1(j_{1,1},\ldots,j_{1,I}\vert k_1,k_2)$ is obtained by summing over $j_{2,1},\ldots,j_{2,I}$ and $c_2$ is obtained analogously.

Now we define a wiretap MAC code with conferencing encoders. It consists of a Willems conferencing stochastic matrix $c:[K_1]\times[K_2]\rightarrow\P(\J_1\times\J_2)$ as above together with encoding stochastic matrices
\begin{align*}
  G_1&:[K_1]\times\J_2\rightarrow\X^n,\\
  G_2&:[K_2]\times\J_1\rightarrow\Y^n
\intertext{and a decoding function}
  \phi&:\T^n\rightarrow[K_1]\times[K_2].
\end{align*}
$[K_1]$ is the set of Alice\eins's messages and $[K_2]$ is the set of Alice\zwei's messages. A pair $(k_1,k_2)\in[K_1]\times[K_2]$ is encoded into the codeword pair $(\x,\y)\in\X^n\times\Y^n$ with probability
\begin{equation}\label{XY-distr}
  \sum_{(j_1,j_2)\in\J_1\times\J_2}c(j_1,j_2\vert k_1,k_2)\,G_1(\x\vert k_1,j_2)\,G_2(\y\vert k_2,j_1).
\end{equation}
In particular, conferencing generates common randomness. As both $c_1$ and $c_2$ may depend on both encoders' messages, the codewords may as well depend on both messages. Thus if conferencing were unrestricted, this would transform the MAC into a single-user wiretap channel with input alphabet $\X\times\Y$. However, Willems introduces a restriction in terms of the blocklength of the code which is used for transmission. For conferencing under conferencing capacities $C_1,C_2\geq 0$, he requires that for a blocklength-$n$ code, $\lvert\J_1\rvert$ and $\lvert\J_2\rvert$ satisfy
\begin{align}\label{eq:Willemsrates}
  \frac{1}{n}\log\lvert\J_\nu\rvert\leq C_\nu&&(\nu=1,2).
\end{align}
We also impose this constraint and define a wiretap MAC code with conferencing capacities $C_1,C_2\geq 0$ to be a wiretap MAC code with conferencing encoders satisfying \eqref{eq:Willemsrates}.

Let a wiretap MAC code with conferencing encoders be given and let $M_1,M_2$ be independent random variables uniformly distributed on $[K_1]$ and $[K_2]$, respectively. Let $X^n,Y^n,T^n,Z^n$ be random variables such that conditional on $(M_1,M_2)$, the distribution of $(X^n,Y^n)$ is given by \eqref{XY-distr} and such that 
\[
  P_{T^nZ^n\vert X^nY^nM_1M_2}=W^{\otimes n}.
\]
Then the average error of the code defined above equals
\[
  \PP[\phi(T^n)\neq (M_1,M_2)].
\]

\begin{definition}
A rate pair $(R_1,R_2)\in\RR_{\geq0}^3$ is achievable by the wiretap MAC with conferencing encoders at conferencing capacities $C_1,C_2>0$ if for every $\eta>0$ and every $\eps\in(0,1)$ and for $n$ large there exists a wiretap MAC code with conferencing capacities $C_1,C_2$ and blocklength $n$ satisfying
\begin{align*}
  \frac{1}{n}\log K_\nu&\geq R_\nu-\eta\qquad(\nu=1,2),\\
  \PP[\phi(T^n)\neq(M_1,M_2)]&\leq\eps,\\
  I(Z^n\wedge M_1M_2)&\leq\eps.
\end{align*}
\end{definition}

\begin{remark}
  Here again, as in Remark \ref{ersterrem}, the average decoding error for any decoder Eve might apply tends to 1 if the security criterion is satisfied.
\end{remark}

\section{Coding Theorems}\label{sect:thethms}

\subsection{For the Wiretap MAC with Common Message}\label{subsect:commmessthm}

Let $H_C\geq 0$ be the common randomness bound. The rate region whose achievability we are about to claim in Theorem \ref{thmcomm} can be written as the closure of the convex hull of the union of certain rate sets which are parametrized by the elements of a subset $\Pi_{H_C}$ of the set $\Pi$ which is defined as follows. $\Pi$ contains all probability distributions $p$ of random vectors $(U,V_1,V_2,X,Y,T,Z)$ living on sets $\U\times\V_1\times\V_2\times\X\times\Y\times\T\times\Z$, where $\U,\V_1,\V_2$ are finite subsets of the integers and where $p$ has the form
\[
  p=P_U\otimes P_{V_1\vert U}\otimes P_{V_2\vert U}\otimes P_{X\vert V_1}\otimes P_{Y\vert V_2}\otimes W.
\]

Next we define $\Pi_{H_C}$. There are four cases altogether, numbered Case 0 to Case 3. Case 0 corresponds to $H_C=0$ and if $H_C>0$, then $\Pi_{H_C}$ has the form $\Pi_{H_C}=\Pi_{H_C}\1\cup\Pi_{H_C}\2\cup\Pi_{H_C}\3$, and each of these subsets corresponds to one of these cases. The one condition all cases have in common is that $I(Z\wedge V_1V_2)\leq I(T\wedge V_1V_2)$.

\textbf{Case 0:} If $H_C=0$ define the set $\Pi_0$ as the set of those $p\in\Pi$ where $V_1$ and $V_2$ are independent of $U$ (so we can omit $U$ in this case and $V_1$ and $V_2$ are independent) and where $p$ satisfies the inequalities
\begin{align}
  I(Z\wedge V_1)&\leq I(T\wedge V_1\vert V_2),\label{HC01}\\
  I(Z\wedge V_2)&\leq I(T\wedge V_2\vert V_1).\label{HC02}
\end{align}
For $p\in\Pi_0$ define the set $\R\0(p)$ to be the set of nonnegative triples $(R_0,R_1,R_2)$ satisfying
\begin{align*}
  R_0&=0,\\
  R_1&\leq I(T\wedge V_1\vert V_2)-I(Z\wedge V_1)-[I(Z\wedge V_2\vert V_1)-I(T\wedge V_2\vert V_1)]_+,\\
  R_2&\leq I(T\wedge V_2\vert V_1)-I(Z\wedge V_2)-[I(Z\wedge V_1\vert V_2)-I(T\wedge V_1\vert V_2)]_+,\\
  R_1+R_2&\leq I(T\wedge V_1V_2)-I(Z\wedge V_1V_2).
\end{align*}

\textbf{Case 1:} $\Pi_{H_C}\1$ is the set of those $p\in\Pi$ which satisfy $I(Z\wedge U)<H_C$ and
\begin{align}
  I(Z\wedge V_1\vert U)&\leq I(T\wedge V_1\vert V_2U),\label{compi1}\\
  I(Z\wedge V_2\vert U)&\leq I(T\wedge V_2\vert V_1U),\label{compi2}\\
  I(Z\wedge V_1V_2\vert U)&\leq I(T\wedge V_1\vert V_2U)+I(T\wedge V_2\vert V_1U).\label{compi3}
\end{align}
Then we denote by $\R\1(p)$ the set of nonnegative real triples $(R_0,R_1,R_2)$ satisfying
\begin{align*}
  R_1&\leq I(T\wedge V_1\vert V_2U)-I(Z\wedge V_1\vert U)\\
      &\hspace{3.5cm}-[I(Z\wedge V_2\vert V_1U)-I(T\wedge V_2\vert V_1U)]_+,\\
  R_2&\leq I(T\wedge V_2\vert V_1U)-I(Z\wedge V_2\vert U)\\
      &\hspace{3.5cm}-[I(Z\wedge V_1\vert V_2U)-I(T\wedge V_1\vert V_2U)]_+,\\
  R_1+R_2&\leq I(T\wedge V_1V_2\vert U)-I(Z\wedge V_1V_2\vert U),\\
  R_0+R_1+R_2&\leq I(T\wedge V_1V_2)-I(Z\wedge V_1V_2).
\end{align*}

\textbf{Case 2:} The conditions for $p$ to be contained in $\Pi_{H_C}\2$ cannot be phrased as simply as for $\Pi_{H_C}\1$. Generally, if $p\in\Pi_{H_C}\2$ then 
\[
  \min\{I(Z\wedge V_1U),I(Z\wedge V_2U)\}<H_C\leq I(Z\wedge V_1V_2).
\]
This is sufficient if  $I(Z\wedge V_1\vert V_2U)=I(Z\wedge V_2\vert V_1U)$. If $I(Z\wedge V_1\vert V_2U)> I(Z\wedge V_2\vert V_1U)$ then we additionally require that
\begin{align*}
  &\mathrel{\hphantom{\leq}}\alpha\2_0:=\max\left(\frac{I(Z\wedge V_1U)-H_C}{I(Z\wedge V_1\vert V_2U)-I(Z\wedge V_2\vert V_1U)},1-\frac{I(T\wedge V_2\vert V_1U)}{I(Z\wedge V_2\vert V_1U)},0\right)\\
  &\leq\alpha\2_1:=\min\left(\frac{I(T\wedge V_1\vert V_2U)}{I(Z\wedge V_1\vert V_2U)},\frac{I(T\wedge V_1V_2\vert U)-I(Z\wedge V_2\vert V_1U)}{I(Z\wedge V_1\vert V_2U)-I(Z\wedge V_2\vert V_1U)},1\right)
\end{align*}
whereas if $I(Z\wedge V_1\vert V_2U)<I(Z\wedge V_2\vert V_1U)$ then we need
\begin{align*}
  &\mathrel{\hphantom{\leq}}\alpha\2_0:=\max\left(1-\frac{I(T\wedge V_2\vert V_1U)}{I(Z\wedge V_2\vert V_1U)},\frac{I(T\wedge V_1V_2\vert U)-I(Z\wedge V_2\vert V_1U)}{I(Z\wedge V_1\vert V_2U)-I(Z\wedge V_2\vert V_1U)},0\right)\\
  &\leq\alpha\2_1:=\min\left(\frac{H_C-I(Z\wedge V_1U)}{I(Z\wedge V_2\vert V_1U)-I(Z\wedge V_1\vert V_2U)},\frac{I(T\wedge V_1\vert V_2U)}{I(Z\wedge V_1\vert V_2U)},1\right).
\end{align*}
In the case of equality, i.e. if $I(Z\wedge V_1\vert V_2U)=I(Z\wedge V_2\vert V_1U)$, we define $\R\2(p)$ as
\begin{align*}
  R_1&\leq I(T\wedge V_1\vert V_2U),\\
  R_2&\leq I(T\wedge V_2\vert V_1U),\\
  R_1+R_2&\leq I(T\wedge V_1V_2\vert U)-I(Z\wedge V_1\vert V_2U),\\
  R_0+R_1+R_2&\leq I(T\wedge V_1V_2)-I(Z\wedge V_1V_2).
\end{align*}
If $I(Z\wedge V_1\vert V_2U)>I(Z\wedge V_2\vert V_1U)$, we define $\R\2(p)$ by
\begin{align}
  R_1&\leq I(T\wedge V_1\vert V_2U)-\alpha\2_0I(Z\wedge V_1\vert V_2U),\notag\\
  R_2&\leq I(T\wedge V_2\vert V_1U)-(1-\alpha\2_1)I(Z\wedge V_2\vert V_1U),\notag\\
  R_1+R_2&\leq I(T\wedge V_1V_2\vert U)-\alpha\2_0I(Z\wedge V_1\vert V_2U)\label{sumbound}\\&\hspace{4cm}-(1-\alpha\2_0)I(Z\wedge V_2\vert V_1U),\notag\\
  R_1+\frac{I(Z\wedge V_2\vert V_1U)}{I(Z\wedge V_1\vert V_2U)}R_2&\leq I(T\wedge V_2\vert V_1U)\label{wsumbound}\\&\hspace{1.8cm}+\left(\frac{I(T\wedge V_1\vert U)}{I(Z\wedge V_1\vert V_2U)}-1\right)I(Z\wedge V_2\vert V_1U),\notag\\
  R_0+R_1+R_2&\leq I(T\wedge V_1V_2)-I(Z\wedge V_1V_2).\notag
\end{align}
The bound \eqref{sumbound} on $R_1+R_2$ can be reformulated as
\begin{align*}
  R_1+R_2&\leq I(T\wedge V_1V_2\vert U)-I(Z\wedge V_1V_2\vert U)\\
  &\mathrel{\hphantom{\leq}}+\min\biggl\{H_C-I(Z\wedge U),I(Z\wedge V_1\vert U),\\
  &\mathrel{\hphantom{\leq}}\hspace{1.2cm}I(T\wedge V_1\vert V_2U)\left(\frac{I(Z\wedge V_2\vert V_1U)}{I(Z\wedge V_1\vert V_2U)}-1\right)+I(Z\wedge V_1\vert U)\biggr\},
\end{align*}
and if $I(Z\wedge V_2\vert V_1U)>0$, we can give the weighted sum bound \eqref{wsumbound} the almost symmetric form
\begin{align*}
  \frac{R_1}{I(Z\wedge V_1\vert V_2U)}+\frac{R_2}{I(Z\wedge V_2\vert V_1U)}&\leq\frac{I(T\wedge V_1\vert U)}{I(Z\wedge V_1\vert V_2U)}+\frac{I(T\wedge V_2\vert V_1U)}{I(Z\wedge V_2\vert V_1U)}-1.
\end{align*}
For the case that $I(Z\wedge V_1\vert V_2U)<I(Z\wedge V_2\vert V_1U)$, we define $\R\2(p)$ by exchanging the roles of $V_1$ and $V_2$.

\textbf{Case 3:} We define $\Pi_{H_C}\3$ to be the set of those $p\in\Pi$ with $I(Z\wedge V_1V_2)<H_C$ and for such a $p$ let $\R\3(p)$ equal
\begin{align*}
  R_1&\leq I(T\wedge V_1\vert V_2U),\\
  R_2&\leq I(T\wedge V_2\vert V_1U),\\
  R_1+R_2&\leq I(T\wedge V_1V_2\vert U),\\
  R_0+R_1+R_2&\leq I(T\wedge V_1V_2)-I(Z\wedge V_1V_2).
\end{align*}

\begin{theorem}\label{thmcomm}
  For the common randomness bound $H_C=0$, the wiretap MAC $W$ with common message achieves the set 
\begin{equation}\label{zero-achi}
  closure\Biggl(conv\Biggl(\bigcup_{p\in\Pi_{0}}\R\0(p)\Biggr)\Biggr).
\end{equation}
If $H_C>0$, then the closure of the convex hull of the set 
\[
  \bigcup_{p\in\Pi\1_{H_C}}\!\!\R\1(p)\;\cup\bigcup_{p\in\Pi\2_{H_C}}\!\!\R\2(p)\;\cup\bigcup_{p\in\Pi\3_{H_C}}\!\!\R\3(p)
\]
is achievable.
\end{theorem}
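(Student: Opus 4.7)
The plan is to follow the outline already sketched in the introduction to Section \ref{proofcomm}: decompose the region into elementary pieces, reduce strong secrecy to a variation-distance statement à la Devetak, prove that statement via probabilistic concentration, and finally realize the random code. Because the region is taken as a closure of a convex hull, it suffices to show achievability of $\R^{(i)}(p)$ for each $p\in\Pi_{H_C}^{(i)}$ separately and then appeal to time-sharing. Moreover, since a rate region defined by finitely many linear inequalities is the convex hull of its extreme points, I would reduce to achieving corner points of $\R^{(i)}(p)$: these are determined by which sum or weighted-sum bound is tight, and each corresponds to a concrete split of the total randomization budget between the common randomness $J$ and the dummy indices inserted inside the random coding.

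Next I would set up a random code of Slepian--Wolf type for the non-wiretap MAC with common message, as produced in the sections \ref{subsect:randcodcommmess}--\ref{MAC}. For a fixed $p\in\Pi$, i.i.d.\ draws generate codewords $\u(k_0,j)$, then $\v_1(k_0,k_1,j,l_1)$ and $\v_2(k_0,k_2,j,l_2)$ on top of it, and finally $\x,\y$ via $P_{X|V_1}$ and $P_{Y|V_2}$. Here $j\in\J$ represents the common randomness (with $H(J|M_0)\le nH_C$) and $l_1,l_2$ are auxiliary ``secrecy'' indices that will be averaged out to fool Eve. Reliability at Bob follows from standard joint-typicality decoding for the MAC with common message once the sum of real and dummy rates on each arm and across arms stays below the corresponding $I(T\wedge\cdot)$ terms in $\R^{(i)}(p)$; this produces the Bob-side constraints in each case.

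For secrecy I would invoke Devetak's reduction, proved in Section \ref{subsect:provesec}: it is enough to bound $\EE\lVert P_{Z^n|M_0M_1M_2=m}-Q^n\rVert$ for an appropriate target product distribution $Q$. Splitting the averaging sets according to which block of randomness ($J$, $L_1$, $L_2$) integrates out which part of Eve's information, and using the Ahlswede--Winter / Chernoff-type concentration developed in Section \ref{sect:setup}, one gets exponentially small variation distance provided the randomization in each layer exceeds the corresponding $I(Z\wedge\cdot)$ term plus a vanishing slack. The Case~1 conditions \eqref{compi1}--\eqref{compi3} are exactly what is needed for the layer rates $L_1,L_2$ to exist simultaneously with $R_0+R_1+R_2$ positive; Case~2 appears when $H_C$ is too small to hide $V_1V_2$ conditioned on $U$, forcing a time-sharing parameter $\alpha\in[\alpha_0^{(2)},\alpha_1^{(2)}]$ that convexly combines two operating points which shift part of the ``masking'' cost from one encoder to the other (this is precisely why the weighted sum bound \eqref{wsumbound} appears); Case~3 is the easy regime where $H_C$ alone exceeds $I(Z\wedge V_1V_2)$ so the common randomness behaves essentially as a shared key.

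After the concentration step, an expectation argument (codewords + messages + secrecy indices) produces a single realization of the code that simultaneously satisfies the error criterion and the variation-distance secrecy criterion; Section \ref{subsect:concl} converts this into a deterministic wiretap code meeting the definition. The main obstacle, as hinted in the discussion (Section \ref{subsect:timesharing}), is that the time-sharing over $p$ must be performed \emph{inside} the random coding rather than by concatenating codes: once one derandomizes, Eve's information about two concatenated blocks need not split as a sum of conditional mutual informations, so the secrecy criterion is not preserved under outer time-sharing. Consequently I would fix a single $p$ (and, in Case~2, a single $\alpha$) and build a single random code on $n$ time-shared positions drawn according to that joint distribution, and only then invoke concentration. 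Verifying the delicate Case~2 feasibility inequalities $\alpha_0^{(2)}\le\alpha_1^{(2)}$ and translating the resulting bounds into the rate inequalities defining $\R^{(2)}(p)$ is the most technical part and where I expect the bulk of the bookkeeping to lie.
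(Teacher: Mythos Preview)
Your overall architecture matches the paper: decompose each $\R^{(i)}(p)$ into simpler pieces, reduce secrecy to a total-variation bound, prove that bound by nested Chernoff estimates, and derandomize. Two points, however, are mis-stated in a way that would block the argument if carried out as written.

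First, the time-sharing that must be performed \emph{inside} the random coding is not ``over $p$''. The convex hull over $p$ is obtained by ordinary outer time-sharing: concatenation of two wiretap codes with independent message blocks preserves strong secrecy, since for memoryless $W$ one has $I(M_1M_2\wedge Z_1^nZ_2^{n'})=I(M_1\wedge Z_1^n)+I(M_2\wedge Z_2^{n'})$. What must be done inside is time-sharing between the \emph{two orderings of the nested Chernoff chain} for a fixed $p$. Averaging first over $l_1$ and then over $l_2$ forces $L_1\gtrsim 2^{nI(Z\wedge X\vert YU)}$ and $L_2\gtrsim 2^{nI(Z\wedge Y\vert U)}$ (Lemmas~\ref{AW3}--\ref{joint}); averaging in the opposite order forces $L_1\gtrsim 2^{nI(Z\wedge X\vert U)}$ and $L_2\gtrsim 2^{nI(Z\wedge Y\vert XU)}$. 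The paper builds two \emph{independent} random families $\G,\G'$ of blocklengths $n,n'$ with $n/(n+n')\approx\alpha$, analyzes $\G$ with one ordering and $\G'$ with the other, concatenates, and only then derandomizes. This $\alpha$-parametrization is already present in Cases~0 and~1 via Lemma~\ref{alpha-union}, not only in Case~2; your proposal treats Case~1 as if a single layering sufficed.

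Second, your stated reason for needing inner time-sharing---that Eve's information across concatenated blocks fails to split---is incorrect, as noted above. The real obstruction is on Bob's side: either Chernoff ordering alone may demand dummy rates $(1/n)\log L_\nu$ that exceed the corresponding MAC bound $I(T\wedge\cdot)$, so there is no decodable realization to derandomize. Only the $\alpha$-weighted combination $(J_0^{(\alpha)},J_1^{(\alpha)},J_2^{(\alpha)})$ of the two dummy-rate vectors may lie in Bob's region $\tilde\R(p)$; the interval $[\alpha_0^{(1)},\alpha_1^{(1)}]$ is precisely the feasibility range, and Section~\ref{subsect:timesharing} exhibits a channel with $\alpha_0^{(1)}>0$, so neither endpoint scheme is individually a code. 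Your ``single random code on $n$ time-shared positions'' does not capture this: you need two separate random ensembles with two separate secrecy analyses, glued before derandomization so that Bob's joint-typicality decoder sees only the averaged rates.
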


\begin{remark}\label{rem:auxcard}
  Using the standard Carath\'eodory-Fenchel technique, one can show that one may without loss of generality assume $\lvert\U\rvert\leq\lvert\X\rvert\lvert\Y\rvert+5$. However, $\lvert\V_1\rvert$ and $\lvert\V_2\rvert$ cannot be bounded in this way, as the application of the Carath\'eodory-Fenchel theorem does not preserve the conditional independence of $V_1$ and $V_2$. Thus a characterization of the above achievable region involving sets with upper-bounded cardinality is currently not available. As it would be important for an efficient calculation of the achievable region, it still requires further consideration. 
\end{remark}

\begin{remark}\label{rem:nocommrandnocommmess}
  If no common randomness is available, then no common message can be transmitted.
\end{remark}

\begin{remark}\label{rem:suffcommrand}
  We have $\R\1(p)\subset\R\2(p)\subset\R\3(p)$. This can be seen directly at the beginning of the proof in Subsection \ref{subsect:elrateregs} where we decompose the regions $\R^{(\nu)}(p)$ for $\nu=1,2$ into a union of simpler regions. 

In particular, if $H_C$ is larger than the capacity of the single-sender discrete memoryless channel $W_e$ with input alphabet $\X\times\Y$ and output alphabet $\Z$, then $\Pi_{H_C}\3=\Pi$ and the achievable set equals
\[
  closure\Biggl(conv\Biggl(\bigcup_{p\in\Pi}\R\3(p)\Biggr)\Biggr).
\]
In this case the maximal sum rate equals
\begin{equation}\label{eq:single-sender-secrecy-cap}
  \C:=\max_{p\in\Pi}\bigl(I(T\wedge V_1V_2)-I(Z\wedge V_1V_2)\bigr).
\end{equation}
This equals the secrecy capacity of the single-sender wiretap channel when Alice\eins\ and Alice\zwei\ together are considered as one single sender. In order to see this, we have to show that for any pair $(V_1',V_2')$ of random variables on any Cartesian product $\V_1\times\V_2$ of finite sets one can find random variables $(V_1,V_2,U)$ satisfying $P_{UV_1V_2}=P_U\otimes (P_{V_1\vert U}\otimes P_{V_2\vert U})$ and $P_{V_1V_2}=P_{V_1'V_2'}$. Given such arbitrary $(V_1',V_2')$ as above, just define $U=V_1'$ and $P_{V_1\vert U}=$ the identity on $\V_1$ and $P_{V_2\vert U}=P_{V_2'\vert V_1'}$. Then a simple calculation shows that the above conditions are satisfied. Thus \eqref{eq:single-sender-secrecy-cap} equals the secrecy capacity of the single-sender wiretap channel with Alice\eins\ and Alice\zwei\ combined into a single sender. The remaining conditions on $R_1$ and $R_2$ formulated in the definition of $\R\3(p)$ are not concerned with $W_e$, they are required by the non-wiretap MAC coding theorem applied 
to $W_b$.
\end{remark}

\subsection{For the Wiretap MAC with Conferencing Encoders}\label{subsect:confencthm}

For conferencing capacities $C_1,C_2>0$, the achievable rate region is parametrized by the members of $\Pi_{C_1+C_2}$. We have Cases 1-3 from the common message part. 

\textbf{Case 1:} For $p\in\Pi_{C_1+C_2}\1$ we define $\R\1(p,C_1,C_2)$ by
\begin{align*}
  R_1&\leq I(T\wedge V_1\vert V_2U)-I(Z\wedge V_1\vert U)\\
  &\mathrel{\hphantom{\leq}}-[I(Z\wedge V_2\vert V_1U)-I(T\wedge V_2\vert V_1U)]_++C_1-[I(Z\wedge U)-C_2]_+,\\
  R_2&\leq I(T\wedge V_2\vert V_1U)-I(Z\wedge V_2\vert U)\\
  &\mathrel{\hphantom{\leq}}-[I(Z\wedge V_1\vert V_2U)-I(T\wedge V_1\vert V_2U)]_++C_2-[I(Z\wedge U)-C_1]_+,\\
  R_1+R_2&\leq\min\{I(T\wedge V_1V_2\vert U)+C_1+C_2, I(T\wedge V_1V_2))\}-I(Z\wedge V_1V_2)).
\end{align*}

\textbf{Case 2:} For $p\in\Pi_{C_1+C_2}\2$, we set $J_0\halpha:=\alpha I(Z\wedge V_2U)+(1-\alpha)I(Z\wedge V_1U)$. For $\alpha\in[\alpha\2_0,\alpha\2_1]$ define the set $\R\2_{\alpha}(p,C_1,C_2)$ by
\begin{align*}
  R_1&\leq I(T\wedge V_1\vert V_2U)-\alpha I(Z\wedge V_1\vert V_2U)+C_1-[J_0\halpha-C_2]_+,\\
  R_2&\leq I(T\wedge V_2\vert V_1U)-(1-\alpha)I(Z\wedge V_2\vert V_1U)+C_2-[J_0\halpha-C_1]_+,\\
  R_1+R_2&\leq\min\{I(T\wedge V_1V_2\vert U)+C_1+C_2,I(T\wedge V_1V_2)\}-I(Z\wedge V_1V_2).
\end{align*}
Then we set 
\[
  \R\2(p,C_1,C_2):=\bigcup_{\alpha\2_0\leq\alpha\leq\alpha\2_1}\R\2_{\alpha}(p,C_1,C_2).
\]

\textbf{Case 3:} For $p\in\Pi_{C_1+C_2}\3$ we define $\R\3(p,C_1,C_2)$ by
\begin{align*}
  R_1&\leq I(T\wedge V_1\vert V_2U)+C_1-[I(Z\wedge V_1V_2)-C_2]_+,\\
  R_2&\leq I(T\wedge V_2\vert V_1U)+C_2-[I(Z\wedge V_1V_2)-C_1]_+,\\
  R_1+R_2&\leq \min\{I(T\wedge V_1V_2\vert U)+C_1+C_2,I(T\wedge V_1V_2)\}-I(Z\wedge V_1V_2).
\end{align*}

\begin{theorem}\label{thmconf}
  For the conferencing capacities $C_1,C_2>0$, the wiretap MAC $W$ with conferencing encoders achieves the closure of the convex hull of the set 
\[
  \bigcup_{p\in\Pi\1_{H_C}}\!\!\R\1(p,C_1,C_2)\;\cup\bigcup_{p\in\Pi\2_{H_C}}\!\!\R\2(p,C_1,C_2)\;\cup\bigcup_{p\in\Pi\3_{H_C}}\!\!\R\3(p,C_1,C_2).
\]
\end{theorem}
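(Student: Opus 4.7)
The plan is to reduce the wiretap MAC with conferencing encoders to the wiretap MAC with common message treated in Theorem \ref{thmcomm}, extending Willems' classical reduction \cite{Wi1,Wi2} from the non-wiretap setting. The guiding idea is that the two encoders can put their conferencing links to two distinct uses: exchanging pieces of their private messages, which after conferencing produces a common message, and exchanging uniformly random bits independent of the messages, which produces common randomness. Once these are set up, transmission over $W$ is handled by Theorem \ref{thmcomm} applied to the resulting message triple under the common randomness bound supplied by the conferencing phase.

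First, following the strategy announced in Subsection \ref{subsect:elratreg} (parallel to Subsection \ref{subsect:elrateregs} for Theorem \ref{thmcomm}), I would decompose the regions $\R^{(i)}(p,C_1,C_2)$ for $i=1,2,3$ into unions of elementary rate regions parametrized by a quadruple $(a_1,a_2,b_1,b_2)$ of nonnegative reals with $a_\nu+b_\nu\leq C_\nu$, and additionally, in Case 2, by the time-sharing parameter $\alpha\in[\alpha\2_0,\alpha\2_1]$. Here $a_\nu$ is intended as the rate at which Alice\nue\ sends a public portion of her message through the conference, and $b_\nu$ the rate of uniform random bits she transmits in the same round. Establishing the decomposition is a linear-programming exercise: one must verify that each point of $\R^{(i)}(p,C_1,C_2)$ is attainable by some admissible $(a_1,a_2,b_1,b_2)$, and this is precisely where the bracketed terms such as $[I(Z\wedge U)-C_{\bar\nu}]_+$ and $[J_0\halpha-C_{\bar\nu}]_+$ in the theorem statement appear naturally.

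For each elementary region I would use the following single-round conferencing protocol. Alice\nue\ splits her message as $k_\nu=(k_\nu^{(p)},k_\nu^{(s)})$ with $k_\nu^{(p)}$ of rate $a_\nu$, draws an independent uniform index $j_\nu^{(r)}$ of rate $b_\nu$, and sends the pair $(k_\nu^{(p)},j_\nu^{(r)})$ to the other encoder. Because $a_\nu+b_\nu\leq C_\nu$, the Willems constraint \eqref{eq:Willemsrates} is respected. After the conference both encoders share the common message $k_0:=(k_1^{(p)},k_2^{(p)})$ of rate $a_1+a_2$ and the common randomness $j:=(j_1^{(r)},j_2^{(r)})$ which, by construction, is independent of $k_0$ so that $H(j\vert k_0)=n(b_1+b_2)$. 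I then feed the triple $(k_0,k_1^{(s)},k_2^{(s)})$ into the wiretap MAC code with common message supplied by Theorem \ref{thmcomm} under common randomness bound $H_C:=b_1+b_2$, identifying the range of $j$ with the auxiliary set $\J$ appearing in the definition of such a code. Since the original private rates become $R_\nu=R_\nu'+a_\nu$, the reliability and strong secrecy guarantees of Theorem \ref{thmcomm} carry over immediately.

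The main obstacle is the final bookkeeping step: sweeping $(a_1,a_2,b_1,b_2)$ (and $\alpha$ in Case 2) subject simultaneously to $a_\nu+b_\nu\leq C_\nu$ and to the case-dependent restrictions on $H_C$ from Theorem \ref{thmcomm} (namely $H_C>I(Z\wedge U)$ in Case 1, the two-sided interval condition in Case 2, and $H_C>I(Z\wedge V_1V_2)$ in Case 3), one must verify that the achievable $(R_1,R_2)$ trace out exactly the region claimed in Theorem \ref{thmconf}. Case 2 is the most delicate because the time-sharing parameter $\alpha$ from Theorem \ref{thmcomm} has to be propagated faithfully through the reduction; it is precisely this propagation, combined with the requirement $p\in\Pi\2_{C_1+C_2}$ and the admissibility of the splits $(a_1,a_2,b_1,b_2)$, that forces the somewhat unwieldy two-sided bounds on $\alpha\2_0$ and $\alpha\2_1$. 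Once this verification is done the achievability proof is complete.
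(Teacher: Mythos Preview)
Your proposal is correct and follows essentially the same route as the paper's proof in Section~\ref{sect:proofconf}: a Willems-style reduction to Theorem~\ref{thmcomm} via a single-round stochastic conference that splits each link into a message-carrying part and an independent random part, followed by a region-decomposition bookkeeping step. The only cosmetic difference is that the paper parametrizes the elementary regions by a single scalar $\beta$ (the fraction of the required common randomness $J_0\halpha$ routed through Alice\eins's link, with the remainder through Alice\zwei's) rather than your four-tuple $(a_1,a_2,b_1,b_2)$; the paper's choice amounts to fixing $b_1=\beta J_0\halpha$, $b_2=(1-\beta)J_0\halpha$ and letting the message splits $a_\nu$ absorb the residual capacity $C_\nu-b_\nu$, which is exactly the optimal corner of your feasible set.
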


\begin{remark}
  Remark \ref{rem:auxcard} applies here, too.
\end{remark}

\begin{remark}
  The stochastic conferencing protocols employed to achieve the sets in Theorem \ref{thmconf} are non-iterative. That means that the $c$ we use in the proof have the form
\[
  c(v_1,v_2\vert k_1,k_2)=c_1(v_1\vert k_1)c_2(v_2\vert k_2).
\]
\end{remark}

\begin{remark}\label{remzeroachi}
  If $C_1=C_2=0$, then the maximal rate set whose achievability we can show is \eqref{zero-achi}. Conferencing only enlarges this set in the presence of a wiretapper if it is used to establish common randomness between the encoders. At least this is true for the achievable region we can show, it cannot be verified in general as long as one does not have a converse. The reason is that conferencing generates a common message shared by Alice\eins\ and Alice\zwei. As noted in Remark \ref{rem:nocommrandnocommmess}, a common message can only be kept secret if common randomness is available. As the Alices do not have common randomness a priori, this also has to be generated by conferencing, so the Willems conferencing protocol has to be stochastic. 
\end{remark}

\begin{remark}
With the coding method we apply, conferencing may enable secure transmission if this is not possible without. That means that there are wiretap MACs where the achievable region without conferencing as derived in Theorem \ref{thmcomm} only contains the rate pair $(0,0)$ whereas it contains non-trivial rate pairs with $C_1,C_2>0$. See Section \ref{sect:discussion} for an example.
\end{remark}

\begin{remark}
If $C_1,C_2$ are sufficiently large, then the maximal achievable sum rate equals the secrecy capacity $\C$ of the single-sender wiretap channel with input alphabet $\X\times\Y$ and channel matrix $W$, see \eqref{eq:single-sender-secrecy-cap}. In fact, this happens if 
\begin{enumerate}[1)]
  \item $C_1+C_2$ is strictly larger than the capacity of the single-sender discrete memoryless channel $W_e$ with input alphabet $\X\times\Y$ and output alphabet $\Z$,
  \item $C_1+C_2\geq\min_{p\in\Pi^*}I(T\wedge U)$, where $\Pi^*$ contains those $p\in\Pi$ which achieve $\C$.
\end{enumerate}
Condition 1) is sufficient to guarantee that $\C$ is achievable by an element of $\Pi_{C_1+C_2}\3$ which then equals $\Pi$, see Remark \ref{rem:suffcommrand}. In particular $\Pi^*$ is nonempty, and 2) ensures that the maximum over $\Pi$ of the sum rate bounds from $\R\3(p,C_1,C_2)$ equals $\C$. 
\end{remark}

\section{Proof of Theorem \ref{thmcomm}}\label{proofcomm}

\subsection{Elementary Rate Regions}\label{subsect:elrateregs}

For Cases 0, 1 and 2 we first show the achievability of certain rate regions whose union or convex combination then yields the achievable regions claimed in the theorem. 

\subsubsection{For Case 0 and 1:}

We only consider Case 1, Case 0 is analogous. The considerations hold for $I(Z\wedge V_1\vert U)<I(Z\wedge V_1\vert V_2U)$ which is equivalent to $I(Z\wedge V_2\vert U)<I(Z\wedge V_2\vert V_1U)$. In the case of equality we can prove the achievability of $\R(p)$ directly. Define 
\begin{align*}
  \alpha\1_0&:=\biggl[\frac{I(T\wedge V_2\vert V_1U)-I(Z\wedge V_2\vert V_1U)}{I(Z\wedge V_2\vert U)-I(Z\wedge V_2\vert V_1U)}\biggr]_+,\\
  \alpha\1_1&:=\min\biggl\{\frac{I(T\wedge V_1\vert V_2U)-I(Z\wedge V_1\vert U)}{I(Z\wedge V_1\vert V_2U)-I(Z\wedge V_1\vert U)},1\biggr\}.
\end{align*}
Note that conditions \eqref{compi1}-\eqref{compi3} are equivalent to $\alpha\1_0\leq\alpha\1_1$. For $\alpha\in[\alpha\1_0,\alpha\1_1]$ we define a rate region $\R\1_\alpha(p)$ by the bounds
\begin{align*}
  R_1&\leq I(T\wedge V_1\vert V_2U)-\alpha I(Z\wedge V_1\vert V_2U)-(1-\alpha)I(Z\wedge V_1\vert U),\\
  R_2&\leq I(T\wedge V_2\vert V_1U)-\alpha I(Z\wedge V_2\vert U)-(1-\alpha)I(Z\wedge V_2\vert V_1U),\\
  R_1+R_2&\leq I(T\wedge V_1V_2\vert U)-I(Z\wedge V_1V_2\vert U),\\
  R_0+R_1+R_2&\leq I(T\wedge V_1V_2)-I(Z\wedge V_1V_2).
\end{align*}

\begin{lemma}\label{alpha-union}
We have
\begin{equation*}
  \R\1(p)=\bigcup_{\alpha\1_0\leq\alpha\leq\alpha\1_1}\R\1_\alpha(p).
\end{equation*}
Thus if $\R\1_\alpha(p)$ is an achievable rate region for every $\alpha\in[\alpha\1_0,\alpha\1_1]$, then $\R\1(p)$ is achievable.
\end{lemma}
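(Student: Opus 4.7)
The plan is to prove the set equality; the achievability implication then follows because a rate triple in the union lies in one of the (achievable) $\R\1_\alpha(p)$ and is therefore achievable itself. The key structural observation is that both of the $R_1$- and $R_2$-defining expressions of $\R\1_\alpha(p)$ are affine in $\alpha$: setting
\begin{align*}
f_1(\alpha)&:=I(T\wedge V_1|V_2U)-\alpha I(Z\wedge V_1|V_2U)-(1-\alpha)I(Z\wedge V_1|U),\\
f_2(\alpha)&:=I(T\wedge V_2|V_1U)-\alpha I(Z\wedge V_2|U)-(1-\alpha)I(Z\wedge V_2|V_1U),
\end{align*}
two chain-rule expansions of $I(Z\wedge V_1V_2|U)$ yield the identity $I(Z\wedge V_1|V_2U)-I(Z\wedge V_1|U)=I(Z\wedge V_2|V_1U)-I(Z\wedge V_2|U)=:\Delta$, and our standing assumption $I(Z\wedge V_1|U)<I(Z\wedge V_1|V_2U)$ makes $\Delta>0$. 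Thus $f_1$ is linear decreasing with slope $-\Delta$, $f_2$ linear increasing with slope $+\Delta$, and $f_1+f_2$ is constant in $\alpha$.

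Since the sum-rate bound $I(T\wedge V_1V_2|U)-I(Z\wedge V_1V_2|U)$ and the total-rate bound $I(T\wedge V_1V_2)-I(Z\wedge V_1V_2)$ appear identically in $\R\1(p)$ and in every $\R\1_\alpha(p)$, the set equality reduces to matching the individual $R_1$- and $R_2$-bounds. A direct substitution, distinguishing whether $\alpha\1_0=0$ versus $\alpha\1_0>0$ (i.e., whether $[I(Z\wedge V_2|V_1U)-I(T\wedge V_2|V_1U)]_+$ vanishes) and whether $\alpha\1_1=1$ versus $\alpha\1_1<1$, shows that the corner values $f_1(\alpha\1_0)$ and $f_2(\alpha\1_1)$ reproduce exactly the two bounds appearing in the definition of $\R\1(p)$; in the non-trivial cases this verification uses $\Delta$ as the common difference.

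The inclusion $\R\1_\alpha(p)\subset\R\1(p)$ is then immediate from monotonicity: for $\alpha\in[\alpha\1_0,\alpha\1_1]$, $f_1(\alpha)\leq f_1(\alpha\1_0)$ and $f_2(\alpha)\leq f_2(\alpha\1_1)$, and the remaining bounds coincide. For the converse, given $(R_0,R_1,R_2)\in\R\1(p)$, the condition $R_1\leq f_1(\alpha)$ cuts out an interval $[\alpha\1_0,\beta_1]\subset[\alpha\1_0,\alpha\1_1]$ (with $\beta_1=\alpha\1_1$ if the constraint is slack), and analogously $R_2\leq f_2(\alpha)$ gives $[\beta_2,\alpha\1_1]$. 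We pick any $\alpha$ in the intersection; because $f_1+f_2$ is constant, the intersection is non-empty exactly when $R_1+R_2\leq f_1+f_2$.

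The main obstacle is this last comparison: the sum-rate constraint $R_1+R_2\leq S:=I(T\wedge V_1V_2|U)-I(Z\wedge V_1V_2|U)$ of $\R\1(p)$ must imply $R_1+R_2\leq f_1+f_2$. A short calculation gives $f_1+f_2-S=I(T\wedge V_1|V_2U)-I(T\wedge V_1|U)$, which is not sign-determined for an arbitrary joint distribution but equals $I(V_1\wedge V_2|TU)\geq0$ once one invokes the conditional independence $V_1\perp V_2|U$ that is built into the factorization of $p\in\Pi$. This is the only spot where the argument actually uses the MAC Markov structure of $p$; everything else is linear bookkeeping in $\alpha$.
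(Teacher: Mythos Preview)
Your argument is correct and follows essentially the same route as the paper. The paper packages the ``find an $\alpha$ in the intersection of two intervals'' step into the general Lemma~\ref{gemconc} (proved in the appendix), whose hypothesis $r_1+r_2\geq r_{12}$ is precisely your inequality $f_1+f_2\geq S$; you derive this inequality directly and are more explicit than the paper that it is exactly here---via $I(T\wedge V_1\vert V_2U)-I(T\wedge V_1\vert U)=I(V_1\wedge V_2\vert TU)-I(V_1\wedge V_2\vert U)=I(V_1\wedge V_2\vert TU)\geq 0$---that the conditional independence $V_1\perp V_2\mid U$ built into $p\in\Pi$ enters. The endpoint verifications $f_1(\alpha\1_0),f_2(\alpha\1_1)$ against the $\R\1(p)$ bounds are the same case distinction the paper carries out.
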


For the proof we use the following lemma which is proved in the appendix.

\begin{lemma}\label{gemconc}
  Assume that $a_1,a_2,b_1,b_2,c,d,r_1,r_2,r_{12},r_{012}$ are nonnegative reals satisfying
\begin{align*}
  a_1> b_1,\quad a_2< b_2,\quad a_1+a_2=b_1+b_2=c,\quad r_1+r_2\geq r_{12}.
\end{align*}
Let $0\leq\alpha_0\leq\alpha_1\leq1$. For every $\alpha\in[\alpha_0,\alpha_1]$, let a three-dimensional convex subset $\K_\alpha$ of $\mathbb{R}_{\geq 0}^3$ be defined by
\begin{align*}
  R_1&\leq r_1-\alpha a_1-(1-\alpha)b_1,\\
  R_2&\leq r_2-\alpha a_2-(1-\alpha)b_2,\\
  R_1+R_2&\leq r_{12}-c,\\
  R_0+R_1+R_2&\leq r_{012}-d
\end{align*}
and assume that $\K_\alpha\neq\varnothing$ for every $\alpha$. Then
\begin{equation}\label{statementalpha-union}
  \bigcup_{\alpha_0\leq\alpha\leq\alpha_1}\K_\alpha=\K,
\end{equation}
where $\K$ is defined by
\begin{align}
  R_1&\leq r_1-\alpha_0a_1-(1-\alpha_0)b_1,\label{set1}\\
  R_2&\leq r_2-\alpha_1a_2-(1-\alpha_1)b_2,\label{set2}\\
  R_1+R_2&\leq r_{12}-c,\label{set4}\\
  R_0+R_1+R_2&\leq r_{012}-d.\label{set3} 
\end{align}
\end{lemma}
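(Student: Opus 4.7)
The plan is to prove the set equality in two directions, with the reverse inclusion being the substantive one. The key structural observation I would exploit first is that the hypothesis $a_1+a_2=b_1+b_2$ forces $a_1-b_1=b_2-a_2$; denote this common value by $\Delta$. Since $a_1>b_1$ we have $\Delta>0$, and consequently the $R_1$-bound $r_1-\alpha a_1-(1-\alpha)b_1=(r_1-b_1)-\alpha\Delta$ is strictly decreasing in $\alpha$ while the $R_2$-bound $(r_2-b_2)+\alpha\Delta$ is strictly increasing in $\alpha$. The sum-bound and $R_0$-bound in $\K_\alpha$ are independent of $\alpha$ and coincide with those in $\K$.

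For the inclusion $\bigcup_\alpha \K_\alpha\subseteq\K$ I would simply apply monotonicity: if $\alpha\in[\alpha_0,\alpha_1]$ and $(R_0,R_1,R_2)\in\K_\alpha$, then the $R_1$-bound at $\alpha$ is at most the $R_1$-bound at $\alpha_0$ (which is \eqref{set1}), and similarly the $R_2$-bound at $\alpha$ is at most the $R_2$-bound at $\alpha_1$ (which is \eqref{set2}); the remaining bounds \eqref{set4} and \eqref{set3} carry over verbatim. This part is one line per inequality.

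For the harder inclusion $\K\subseteq\bigcup_\alpha \K_\alpha$, I would fix $(R_0,R_1,R_2)\in\K$ and characterize the $\alpha$ for which the point lies in $\K_\alpha$. Using the linear reformulation above, the condition $R_1\leq (r_1-b_1)-\alpha\Delta$ is equivalent to $\alpha\leq (r_1-b_1-R_1)/\Delta=:\beta_1$, and $R_2\leq (r_2-b_2)+\alpha\Delta$ is equivalent to $\alpha\geq (R_2-r_2+b_2)/\Delta=:\beta_2$. (If $\Delta=0$ the reformulation degenerates but $a_i=b_i$, making the $\alpha$-dependence disappear and the statement trivial, so I would dispose of that case separately.) Hence I must find $\alpha\in[\alpha_0,\alpha_1]\cap[\beta_2,\beta_1]$, and the existence of such $\alpha$ amounts to the four inequalities $\alpha_0\leq\beta_1$, $\beta_2\leq\alpha_1$, $\alpha_0\leq\alpha_1$, and $\beta_2\leq\beta_1$. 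The first is exactly \eqref{set1}, the second is exactly \eqref{set2}, the third is a hypothesis, and the fourth reduces, after multiplying by $\Delta$ and using $b_1+b_2=c$, to $R_1+R_2\leq r_1+r_2-c$.

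The one step that needs a little care—and which I expect to be the only real obstacle—is precisely this last inequality $R_1+R_2\leq r_1+r_2-c$: it is not literally one of the defining inequalities of $\K$. I would derive it by chaining $R_1+R_2\leq r_{12}-c$ (from \eqref{set4}) with the standing hypothesis $r_1+r_2\geq r_{12}$, giving $R_1+R_2\leq r_{12}-c\leq r_1+r_2-c$ as required. Once the admissible interval is shown nonempty, pick any $\alpha$ in it; the point then satisfies the $R_1$- and $R_2$-bounds of $\K_\alpha$, and the sum and $R_0$ bounds transfer unchanged from $\K$, so $(R_0,R_1,R_2)\in\K_\alpha$, completing \eqref{statementalpha-union}. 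The nonemptiness hypothesis $\K_\alpha\neq\varnothing$ is not needed for the argument itself but is implicitly used to guarantee that the right-hand sides above are nonnegative so that the intersection lies in $\mathbb{R}_{\geq0}^3$.
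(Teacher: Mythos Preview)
Your proof is correct and follows essentially the same approach as the paper's. The paper also disposes of the inclusion $\bigcup_\alpha\K_\alpha\subset\K$ as obvious and, for the reverse inclusion, picks for a given $(R_0,R_1,R_2)\in\K$ the specific $\alpha^*=\beta\alpha_1+(1-\beta)\alpha_0$ at which the $R_1$-bound is tight, then verifies the $R_2$-bound via the same key step you identified: $R_1+R_2\leq r_{12}-c\leq r_1+r_2-c$, using $a_1+a_2=b_1+b_2=c$. Your version is slightly cleaner in that you characterize the full admissible interval $[\max(\alpha_0,\beta_2),\min(\alpha_1,\beta_1)]$ rather than exhibiting one point, but the substance is identical. (Minor remark: the case $\Delta=0$ you mention cannot occur, since the hypothesis $a_1>b_1$ is strict.)
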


\begin{proof}[Lemma \ref{alpha-union}]
  The proof is a direct application of Lemma \ref{gemconc} by setting
\begin{align*}
  r_1&=I(T\wedge V_1\vert V_2U), & r_2&=I(T\wedge V_2\vert V_1U),\\
  r_{12}&=I(T\wedge V_1V_2\vert U), & r_{012}&=I(T\wedge V_1V_2),\\
  a_1&=I(Z\wedge V_1\vert V_2U), & a_2&=I(Z\wedge V_2\vert U),\\
  b_1&=I(Z\wedge V_1\vert U), & b_2&=I(Z\wedge V_2\vert V_1U),\\
  \alpha_0&=\alpha\1_0, & \alpha_1&=\alpha\1_1.
\end{align*}
We just need to show that the bounds \eqref{set1} and \eqref{set2} coincide with those from the definition of $\R\1(p)$. This is easy for the case $\alpha\1_0=0$ because in that case we have $I(T\wedge V_2\vert V_1U)\geq I(Z\wedge V_2\vert V_1U)$ and the positive part in the bound on $R_1$ in the definition of $\R\1(p)$ vanishes. Similarly $\alpha\1_1=1$ implies $I(T\wedge V_1\vert V_2U)\geq I(Z\wedge V_1\vert V_2U)$ and the positive part in the bound on $R_2$ in the definition of $\R\1(p)$ vanishes. Now assume that $\alpha\1_0>0$. This assumption implies $I(Z\wedge V_2\vert V_1U)>I(T\wedge V_2\vert V_1U)$. Thus we obtain for the equivalent of \eqref{set1}
\begin{align*}
  &\;\quad I(T\wedge V_1\vert V_2U)-I(Z\wedge V_1\vert U)\\
  &\qquad-\frac{I(T\wedge V_2\vert V_1U)-I(Z\wedge V_2\vert V_1U)}{I(Z\wedge V_2\vert U)-I(Z\wedge V_2\vert V_1U)}
  (I(Z\wedge V_1\vert V_2U)-I(Z\wedge V_1\vert U))\\
  &=I(T\wedge V_1\vert V_2U)-I(Z\wedge V_1\vert U)\\
  &\qquad-\frac{I(T\wedge V_2\vert V_1U)-I(Z\wedge V_2\vert V_1U)}{I(Z\wedge V_2\vert U)-I(Z\wedge V_2\vert V_1U)}
  (I(Z\wedge V_2\vert V_1U)-I(Z\wedge V_2\vert U))\\
  &=I(T\wedge V_1\vert V_2U)+I(T\wedge V_2\vert V_1U)-I(Z\wedge V_1V_2\vert U)\\
  &=I(T\wedge V_1\vert V_2U)-I(Z\wedge V_1\vert U)-[I(Z\wedge V_2\vert V_1U)-I(T\wedge V_2\vert V_1U)]_+.
\end{align*}
If $\alpha\1_1<1$, we obtain the analog for the bound on $R_2$. This shows with Lemma \ref{gemconc} that $\R\1(p)$ can be represented as the union of the sets $\R\1_\alpha(p)$ for $\alpha\1_0\leq\alpha\leq\alpha\1_1$.\qed
\end{proof}

\subsubsection{For Case 2:} 

Here we assume that $I(Z\wedge V_1\vert V_2U)\neq I(Z\wedge V_2\vert V_1U)$ which is equivalent to $I(Z\wedge V_1U)\neq I(Z\wedge V_2U)$. In the case of equality, the achievability of $\R\2(p)$ can be shown directly. Define for $\alpha\in[\alpha\2_0,\alpha\2_1]$ the rate set $\R\2_\alpha(p)$ by the conditions
\begin{align*}
  R_1&\leq I(T\wedge V_1\vert V_2U)-\alpha I(Z\wedge V_1\vert V_2U),\\
  R_2&\leq I(T\wedge V_2\vert V_1U)-(1-\alpha)I(Z\wedge V_2\vert V_1U),\\
  R_1+R_2&\leq I(T\wedge V_1V_2\vert U)-\alpha I(Z\wedge V_1\vert V_2U)-(1-\alpha)I(Z\wedge V_2\vert V_1U),\\
  R_0+R_1+R_2&\leq I(T\wedge V_1V_2)-I(Z\wedge V_1V_2).
\end{align*}

\begin{lemma}\label{lemunioncase2}
  We have that
\[
  \R\2(p)=\bigcup_{\alpha\2_0\leq\alpha\leq\alpha\2_1}\R\2_\alpha(p).
\]
In particular, if $\R\2_\alpha(p)$ is achievable for every $\alpha\in[\alpha\2_0,\alpha\2_1]$, then so is $\R\2(p)$.
\end{lemma}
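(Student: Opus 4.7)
Write $a := I(Z\wedge V_1\vert V_2U)$ and $b := I(Z\wedge V_2\vert V_1U)$; the case of interest is $a>b>0$ (the opposite inequality is symmetric, and the equality case was handled directly just above the lemma). Mirroring the structure of Lemma \ref{alpha-union}, the plan is to prove the equality by establishing the two inclusions $\bigcup_\alpha\R\2_\alpha(p)\subseteq\R\2(p)$ and $\R\2(p)\subseteq\bigcup_\alpha\R\2_\alpha(p)$ separately, and then the ``in particular'' clause follows since a union of achievable sets is achievable.

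For the forward direction I fix $\alpha\in[\alpha\2_0,\alpha\2_1]$ and a point $(R_0,R_1,R_2)\in\R\2_\alpha(p)$ and check each of the five defining inequalities of $\R\2(p)$. Three of them follow by monotonicity in $\alpha$: the RHS of the $R_1$-bound is decreasing in $\alpha$ (so the worst case is $\alpha=\alpha\2_0$); the RHS of the $R_2$-bound is increasing in $\alpha$ (worst case $\alpha=\alpha\2_1$); the RHS of the sum-bound is decreasing in $\alpha$ because $a>b$ (worst case $\alpha\2_0$). The $R_0+R_1+R_2$-bound is $\alpha$-independent. The weighted sum bound \eqref{wsumbound} is produced by eliminating $\alpha$ between the $R_2$- and sum-constraints of $\R\2_\alpha(p)$: scaling the $R_2$-bound by $(a-b)/b$ and adding it to the sum-bound cancels the $\alpha$-terms, and the chain-rule identity $I(T\wedge V_1V_2\vert U)=I(T\wedge V_1\vert U)+I(T\wedge V_2\vert V_1U)$ rewrites the result into the form of \eqref{wsumbound}.

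For the reverse direction I take $(R_0,R_1,R_2)\in\R\2(p)$ and exhibit an $\alpha\in[\alpha\2_0,\alpha\2_1]$ fulfilling the three $\alpha$-dependent constraints of $\R\2_\alpha(p)$ (the $R_0$-constraint being $\alpha$-free). Solving each of them for $\alpha$ yields one upper bound from the $R_1$-constraint, one lower bound from the $R_2$-constraint, and (using $a>b$) a second upper bound from the sum constraint. A valid $\alpha$ exists iff every lower bound is at most every upper bound; three of the pairwise consistency conditions involve $\alpha\2_0$ or $\alpha\2_1$ on one side and collapse precisely to the $R_1$-, $R_2$-, and sum-bounds of $\R\2(p)$, while the only genuinely new pairing---between the lower bound from the $R_2$-constraint and the upper bound from the sum constraint---translates, via the same chain-rule manipulation, into exactly \eqref{wsumbound}. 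Combined with $\alpha\2_0\leq\alpha\2_1$ (the defining hypothesis of Case 2), this closes the argument.

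The main obstacle is the algebraic identification of \eqref{wsumbound} with the pairwise consistency condition between the $R_2$- and sum-constraints; it involves careful bookkeeping and the chain-rule rewriting above, but no new idea beyond what was used in Lemma \ref{gemconc}. All remaining verifications are monotonicity arguments strictly parallel to those in the proof of Lemma \ref{alpha-union}.
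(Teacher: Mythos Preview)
Your direct two-inclusion argument is sound and differs from the paper's route: the paper simply invokes Lemma~\ref{unionconv2} (a statement about the \emph{convex hull} of the $\K_\alpha$) with the indicated substitutions, whereas you argue directly on the union. Your approach actually shows a bit more---that the union is already convex---and avoids the separate appendix lemma, at the cost of redoing the pairwise-feasibility bookkeeping by hand.

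Two points need attention. First, in the reverse inclusion you enumerate three consistency checks involving $\alpha\2_0$ or $\alpha\2_1$ and one ``new'' pairing (the $R_2$-lower-bound against the sum-upper-bound), but there is a \emph{second} pairing not touching $\alpha\2_0,\alpha\2_1$: the $R_2$-lower-bound against the $R_1$-upper-bound, which gives $bR_1+aR_2\leq br_1+ar_2-ab$. This is implied by your main weighted bound since $r_{12}\leq r_1+r_2$, so the omission is harmless, but you should note it. Second, your elimination actually produces $bR_1+aR_2\leq br_{12}+(a-b)r_2-ab$ (with $a=I(Z\wedge V_1\vert V_2U)$, $b=I(Z\wedge V_2\vert V_1U)$), while \eqref{wsumbound} as printed has $R_1+(b/a)R_2$ on the left, i.e.\ $aR_1+bR_2$ after clearing denominators; the left-hand coefficients are swapped. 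A numeric instance ($r_1=r_2=1$, $r_{12}=2$, $a=1$, $b=1/2$, $\alpha\2_0=0$, $\alpha\2_1=1$; the point $(R_1,R_2)=(0.3,0.9)$ satisfies the printed \eqref{wsumbound} but lies in no $\R\2_\alpha(p)$) shows your derived bound is the correct one and \eqref{wsumbound} contains a typo. So your method is right, but do not claim the two expressions coincide---flag the discrepancy instead.
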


\begin{remark}
  The similarity between the rate regions for Case 1 and Case 2 becomes clear in these decompositions. The description for Case 2 is more complex because $\alpha\2_0$ and $\alpha\2_1$ are defined through three minima/maxima. This is due to the fact that the sum $\alpha I(Z\wedge V_1\vert V_2U)+(1-\alpha)I(Z\wedge V_2\vert V_1U)$ is not constant in $\alpha$. Hence the conditions for $\alpha\2_0\leq\alpha\2_1$ cannot be reformulated into simple conditions on the corresponding $p$.
\end{remark}

One obtains Lemma \ref{lemunioncase2} from the next lemma by making the following replacements:
\begin{align*}
  r_1&=I(T\wedge V_1\vert V_2U),& r_2&=I(T\wedge V_2\vert V_1U),\\
  r_{12}&=I(T\wedge V_1V_2\vert U),& r_{012}&=I(T\wedge V_1V_2),\\
  a&=I(Z\wedge V_1\vert V_2U),&b&=I(Z\wedge V_2\vert V_1U),\\
  c&=I(Z\wedge V_1V_2),\\
  \alpha_0&=\alpha\2_0, & \alpha_1&=\alpha\2_1.
\end{align*}

\begin{lemma}\label{unionconv2}
  Let $r_1,r_2,r_{12},r_{012},a,b,c$ be nonnegative reals with $\max(r_1,r_2)\leq r_{12}\leq r_1+r_2$. Let $\alpha_0,\alpha_1\in[0,1]$ be given such that for every $\alpha\in[\alpha_0,\alpha_1]$ the set $\K_\alpha$ defined by 
\begin{align*}
  R_1&\leq r_1-\alpha a,\\
  R_2&\leq r_2-(1-\alpha)b,\\
  R_1+R_2&\leq r_{12}-\alpha a-(1-\alpha)b,\\
  R_0+R_1+R_2&\leq r_{012}-c
\end{align*}
is nonempty. If $a\leq b$, the convex hull of the union of these sets is given by the set $\K$ which is characterized by
\begin{align}
  0\leq R_1&\leq r_1-\alpha_0a,\label{conv21}\\
  0\leq R_2&\leq r_2-(1-\alpha_1)b,\label{conv22}\\
  R_1+R_2&\leq r_{12}-\alpha_1a-(1-\alpha_1)b,\label{conv23}\\
  bR_1+aR_2&\leq r_{12}a+r_1(b-a)-ab,\label{conv24}\\
  R_0+R_1+R_2&\leq r_{012}-c.\label{conv25}
\end{align}
If $a>b$, the convex hull of the union of the sets $\K_\alpha$ is given by analogous bounds where $a$ and $b$ are exchanged in \eqref{conv24}.
\end{lemma}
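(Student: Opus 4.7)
I prove $\mathrm{conv}\bigl(\bigcup_\alpha \K_\alpha\bigr) = \K$ by double inclusion. The inclusion $\mathrm{conv}(\bigcup_\alpha \K_\alpha) \subseteq \K$ amounts to verifying each defining inequality of $\K$ holds on every $\K_\alpha$, since linear inequalities are preserved under convex combinations. Bounds \eqref{conv21}, \eqref{conv22}, \eqref{conv23}, \eqref{conv25} follow from monotonicity of the respective right-hand sides in $\alpha \in [\alpha_0, \alpha_1]$; the hypothesis $a \leq b$ is used crucially for \eqref{conv23}, whose right-hand side $r_{12} - \alpha a - (1-\alpha) b$ is nondecreasing in $\alpha$. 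The nontrivial bound \eqref{conv24} comes from the nonnegative linear combination of $R_1 \leq r_1 - \alpha a$ with weight $b - a$ and of $R_1 + R_2 \leq r_{12} - \alpha a - (1-\alpha) b$ with weight $a$: the $\alpha$-dependent terms on the right cancel to yield $bR_1 + aR_2 \leq r_1(b - a) + a r_{12} - ab$ independently of $\alpha$.

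For the reverse inclusion $\K \subseteq \mathrm{conv}(\bigcup_\alpha \K_\alpha)$, I use an extreme-point argument. The key geometric observation is that, for every $\alpha$, the corner of $\K_\alpha$ at which $R_1 = r_1 - \alpha a$ meets the sum-rate facet has coordinates $(r_1 - \alpha a,\ r_{12} - r_1 - (1-\alpha) b)$, and a direct substitution shows all these corners (for $\alpha \in [\alpha_0, \alpha_1]$) lie on the hyperplane \eqref{conv24}. Hence the diagonal facet of the projection of $\K$ onto the $(R_1, R_2)$-plane is precisely the line segment traced by these corners as $\alpha$ varies, with endpoints at $\alpha = \alpha_0$ (a vertex of $\K_{\alpha_0}$) and $\alpha = \alpha_1$ (a vertex of $\K_{\alpha_1}$). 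I would then enumerate the remaining vertices of this projected polytope --- the origin, the axis intercepts $(r_1 - \alpha_0 a, 0)$ and $(0, r_2 - (1-\alpha_1) b)$, and the corner $(r_{12} - r_2 - \alpha_1 a,\ r_2 - (1-\alpha_1) b)$ on the sum-bound facet --- and verify each lies in $\K_{\alpha_0}$ or $\K_{\alpha_1}$ using the hypotheses $\max(r_1, r_2) \leq r_{12} \leq r_1 + r_2$ and nonemptiness of the endpoint regions.

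The $R_0$-coordinate is handled via the common sum bound \eqref{conv25}: given $(R_0, R_1, R_2) \in \K$, after writing $(R_1, R_2) = \sum_i \lambda_i (R_1^{(i)}, R_2^{(i)})$ with each $(R_1^{(i)}, R_2^{(i)})$ the $(R_1, R_2)$-projection of some point in $\K_{\alpha_i}$, I distribute $R_0$ in proportion to the slacks $s_i := r_{012} - c - R_1^{(i)} - R_2^{(i)} \geq 0$ by setting $R_0^{(i)} := R_0 s_i / s$, where $s := \sum_i \lambda_i s_i = r_{012} - c - R_1 - R_2$. The feasibility $R_0 \leq s$ is exactly \eqref{conv25}, and each $(R_0^{(i)}, R_1^{(i)}, R_2^{(i)})$ then lies in $\K_{\alpha_i}$; the degenerate case $s = 0$ forces $R_0 = 0$ and one sets $R_0^{(i)} := 0$.

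The main obstacle is the casework inherent in the vertex enumeration: depending on the relative magnitudes of the parameters, various of \eqref{conv21}--\eqref{conv24} may be redundant and some of the listed vertices may coincide or fail to lie in $\RR_{\geq 0}^2$. Each such configuration must be addressed separately, but in every case the diagonal facet \eqref{conv24} plays the same role of interpolating linearly between extreme corners of $\K_{\alpha_0}$ and $\K_{\alpha_1}$. The case $a > b$ follows by the symmetric argument exchanging $R_1 \leftrightarrow R_2$, $r_1 \leftrightarrow r_2$, $a \leftrightarrow b$, and $\alpha \leftrightarrow 1 - \alpha$.
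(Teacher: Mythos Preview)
Your proposal is correct and follows essentially the same approach as the paper: double inclusion, with the reverse inclusion handled by checking that the extreme points of $\K$ (those on the facet \eqref{conv24} meeting one of the other facets) lie in $\K_{\alpha_0}$ or $\K_{\alpha_1}$. Your derivation of \eqref{conv24} as the nonnegative combination $(b-a)\cdot[\text{$R_1$-bound}] + a\cdot[\text{sum bound}]$ is in fact cleaner than the paper's parametric verification, and your explicit treatment of the $R_0$ coordinate via slack-proportional distribution fills in a step the paper leaves implicit.
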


The proof of Lemma \ref{unionconv2} can be found in the appendix.

\subsection{How to Prove Secrecy}\label{subsect:provesec}

Proving secrecy using Chernoff-type concentration inequalities (see Subsection \ref{sect:setup}) is the core of Devetak's approach to the wiretap channel \cite{De05}. Due to the multi-user structure of the inputs of the wiretap MAC, we need several such Chernoff-type inequalities basing on each other compared to the one used by Devetak (actually an application of the Ahlswede-Winter lemma). However, once these are established, the way of obtaining secrecy is exactly the same as presented by Devetak. With the help of the inequalities one obtains a code with stochastic encoding and a measure $\vartheta$ (not necessarily a probability measure!) such that for all $k_0,k_1,k_2$
\begin{equation}\label{totvarsmall}
  \lVert P_{Z^{n}\vert M_0=k_0,M_1=k_1,M_2=k_2}-\vartheta\rVert\leq\frac{\eps}{2}.
\end{equation}
Given this, we now derive an upper bound on $I(Z^n\wedge M_0M_1M_2)$, where the random triple $(M_0,M_1,M_2)$ is uniformly distributed on the possible input message triples and $Z^n$ represents the output received by Eve. Observe that
\begin{align}
  &\mathrel{\hphantom{=}}I(Z^{n}\wedge M_0M_1M_2)\notag\\
  &=\frac{1}{K_0K_1K_2}\sum_{k_0,k_1,k_2}(H(Z^{n})-H(Z^{n}\vert M_0=k_0,M_1=k_1,M_2=k_2)).\label{darr}
\end{align}
By \cite[Lemma 2.7]{CK}, every summand on the right-hand side is upper-bounded by $\eps_{k_0k_1k_2}\log(\lvert\Z\rvert^n/\eps_{k_0k_1k_2})$ if
\[
  \eps_{k_0k_1k_2}:=\lVert P_{Z^n}-P_{Z^n\vert M_0=k_0,M_1=k_1,M_2=k_2}\rVert\leq\frac{1}{2}.
\]
But due to \eqref{totvarsmall}, 
\begin{align*}
  &\quad\;\lVert P_{Z^{n}}-P_{Z^{n}\vert M_0=k_0,M_1=k_1,M_2=k_2}\rVert\\
  &\leq\lVert P_{Z^{n}}-\vartheta\rVert+\lVert \vartheta-P_{Z^{n}\vert M_0=k_0,M_1=k_1,M_2=k_2}\rVert\\
  &\leq\frac{1}{K_0K_1K_2}\sum_{\tilde k_0,\tilde k_1,\tilde k_2}\lVert P_{Z^{n}\vert M_0=\tilde k_0,M_1=\tilde k_1,M_2=\tilde k_2}-\vartheta\rVert+\frac{\eps}{2}\\
  &\leq\eps.
\end{align*}
Thus if $\eps$ tends to zero exponentially in blocklength, then \eqref{darr} is upper-bounded by $\eps\log(\lvert\Z\rvert^n/\eps)$ which tends to zero in $n$.

\subsection{Probabilistic Bounds for Secrecy}\label{sect:setup}

In this subsection we define the random variables from which we will build a stochastic wiretap code in Subsection \ref{MAC}. For this family of random variables we prove several Chernoff-type estimates which will serve to find a code satisfying \eqref{totvarsmall}. For Case 3, two such estimates are sufficient, Case 0 and 2 require three and Case 1 requires four. Within each case, one deals with the joint typicality of the inputs at Alice\eins\ and Alice\zwei, and the other estimates base on each other. This is due to the complex structure of our family of random variables. Still, all the cases are nothing but a generalization of Devetak's approach taken in \cite{De05}. For each case, we first show the probabilistic bounds in one paragraph and then in another paragraph how to achieve \eqref{totvarsmall} from those bounds.

Let $p=P_U\otimes P_{X\vert U}\otimes P_{Y\vert U}\otimes W\in\Pi$, i.e. $p$ is the distribution of a random vector $(U,X,Y,T,Z)$. The auxiliary random variables $V_1$ and $V_2$ will be introduced later in the usual way of prefixing a channel as a means of additional randomization. Let $\delta>0$ and define for any $n$
\begin{align*}
  P_U^n(\u)&:=\frac{P_U^{\otimes n}(\u)}{P_U^{\otimes n}(T_{U,\delta}^n)}&(\u\in T_{U,\delta}^n),\\
  P_{X\vert U}^n(\x\vert\u)&:=\frac{P_{X\vert U}^{\otimes n}(\x\vert\u)}{P_{X\vert U}^{\otimes n}(T_{X\vert U,\delta}^n(\u)\vert\u)}&(\x\in T_{X\vert U,\delta}^n(\u),\u\in T_{U,\delta}^n),\\
  P_{Y\vert U}^n(\y\vert\u)&:=\frac{P_{Y\vert U}^{\otimes n}(\y\vert\u)}{P_{Y\vert U}^{\otimes n}(T_{Y\vert U,\delta}^n(\u)\vert\u)}&(\y\in T_{Y\vert U,\delta}^n(\u),\u\in T_{U,\delta}^n).
\end{align*}
Let $L_0,L_1,L_2$ be positive integers. We define $L_0$ independent families of random variables $(U^{l_0},\F_{l_0})$ as follows. $U^{l_0}$ is distributed according to $P_{U}^n$. We let $\F_{l_0}:=\{X^{l_0l_1},Y^{l_0l_2}:l_1\in[L_1],l_2\in[L_2]\}$ be a set of random variables which are independent given $U^{l_0}$ and which satisfy $X^{l_0l_1}\sim P_{X\vert U}^{n}(\,\cdot\,\vert U^{l_0})$ and $Y^{l_0l_2}\sim P_{Y\vert U}^{n}(\,\cdot\,\vert U^{l_0})$. Finally we define
\begin{equation}\label{ffamily}
  \F:=\bigcup_{l_0\in[L_0]}(U^{l_0},\F_{l_0}).
\end{equation}

Throughout the section, let a small $\eps>0$ be fixed. The core of the proofs of all the lemmas of this subsection is the following Chernoff bound, see e.g.\ \cite{APP}.

\begin{lemma}\label{Chernie}
  Let $b>0$ and $0<\eps<1/2$. For an independent sequence of random variables $Z_1,\ldots,Z_L$ with values in $[0,b]$ with $\mu_l:=\EE[X_l]$ and with $\mu:=\frac{1}{L}\sum_l\mu_l$ one has
\begin{align*}
  \PP\biggl[\frac{1}{L}\sum_{l=1}^LZ_l>(1+\eps)\mu\biggr]&\leq\exp\left(-L\cdot\frac{\eps^2\mu}{2b\ln 2}\right)\\
  \intertext{and}
  \PP\biggl[\frac{1}{L}\sum_{l=1}^LZ_l<(1-\eps)\mu\biggr]&\leq\exp\left(-L\cdot\frac{\eps^2\mu}{2b\ln 2}\right).
\end{align*}
\end{lemma}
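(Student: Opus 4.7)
The claim is a textbook Chernoff bound for bounded independent random variables, so the plan is to run the standard exponential-moment argument. First I would reduce to the case $b=1$ by rescaling $Z_l\mapsto Z_l/b$; the threshold $(1+\eps)\mu$ and the exponent $\eps^2\mu/(2b\ln 2)$ both transform consistently, so the rescaled statement is equivalent.

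For the upper tail, I would apply Markov's inequality to the exponential moment: for any auxiliary parameter $t>0$,
\begin{equation*}
  \Pr\!\Bigl[\tfrac{1}{L}\sum_{l=1}^L Z_l > (1+\eps)\mu\Bigr]
  \le e^{-tL(1+\eps)\mu}\prod_{l=1}^L \EE[e^{tZ_l}].
\end{equation*}
Since $Z_l\in[0,1]$, the convexity bound $e^{tz}\le 1+z(e^t-1)$ on $[0,1]$ together with $1+x\le e^x$ gives $\EE[e^{tZ_l}]\le\exp(\mu_l(e^t-1))$, so by independence the product is at most $\exp(L\mu(e^t-1))$. Substituting and choosing the optimizer $t=\ln(1+\eps)$ yields the Chernoff exponent
\begin{equation*}
  \Pr\!\Bigl[\tfrac{1}{L}\sum_{l=1}^L Z_l > (1+\eps)\mu\Bigr]\le \exp\bigl(-L\mu\,((1+\eps)\ln(1+\eps)-\eps)\bigr).
\end{equation*}
The lower-tail bound is handled symmetrically: apply Markov to $e^{-t\sum_l Z_l}$ with $t>0$, use the analogous estimate $e^{-tz}\le 1-z(1-e^{-t})$ on $[0,1]$, and optimize to obtain the exponent $L\mu\,(\eps+(1-\eps)\ln(1-\eps))$.

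The final step is to replace both exponents by the common quadratic lower bound $\eps^2\mu/(2b\ln 2)$ appearing in the statement; this is a one-variable elementary inequality that can be verified by Taylor expansion of $(1+\eps)\ln(1+\eps)-\eps$ and $\eps+(1-\eps)\ln(1-\eps)$ around $\eps=0$, together with a monotonicity check on the range $\eps\in(0,1/2)$. The entire argument is conceptually routine; no step is substantial beyond the reduction to $b=1$ and the optimization of $t$. The only genuine care needed is the bookkeeping that produces the stated constant $1/(2\ln 2)$ in the exponent on both tails with the same final form, rather than the slightly asymmetric exponents the raw optimization produces.
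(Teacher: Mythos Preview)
The paper does not prove this lemma at all: it is stated with the preface ``see e.g.\ \cite{APP}'' and used as a black box, so there is no proof in the paper to compare against. Your approach is the standard exponential-moment derivation and is the right one; it is exactly what one finds in the cited reference and in any textbook treatment.

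One small caution on your last step: the raw upper-tail exponent $(1+\eps)\ln(1+\eps)-\eps$ has Taylor expansion $\eps^2/2-\eps^3/6+\cdots$, so it is \emph{below} $\eps^2/2$ for all $\eps>0$, and a naive ``Taylor plus monotonicity'' argument will not directly yield the constant $1/(2\ln 2)$ in natural-exponential form. You should check whether the paper's $\exp$ is meant base $2$ (the $\ln 2$ in the denominator strongly suggests this convention, and then the target exponent in natural units is $\eps^2/2$), and be prepared to use the restriction $\eps<1/2$ together with a sharper elementary inequality (or simply quote the form from \cite{APP}) rather than a bare second-order Taylor bound. This is purely a bookkeeping issue with the constant, not a gap in the method.
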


In order to obtain useful bounds in the following we collect here some well-known estimates concerning typical sets, see e.g.\ \cite[Lemma 17.8]{CK}. Let $(A,B)$ be a random pair on the finite Cartesian product $\A\times\B$. Let $\xi,\zeta>0$. Then there exists a $\tilde c=\tilde c(\lvert\A\rvert\lvert\B\rvert)>0$ such that for sufficiently large $n$
\begin{align}
  P_{B\vert A}^{\otimes n}(T_{B\vert A,\zeta}^n(\a)^c\vert\a)&\leq2^{-n\tilde c\zeta^2}.\label{untyp2}
\end{align}
Further there is a $\tau=\tau(P_{AB},\xi,\zeta)$ with $\tau\rightarrow 0$ as $\xi,\zeta\rightarrow0$ such that 
\begin{align}
  P_{B\vert A}^{\otimes n}(\b\vert\a)&\leq2^{-n(H(B\vert A)-\tau)}&&\text{if }\a\in T_{A,\xi}^n,\b\in T_{B\vert A,\zeta}^n,\label{probbed}
\end{align}
and that for $n$ sufficiently large,
\begin{align}
  \lvert T_{A,\xi}^n\rvert&\leq 2^{n(H(A)+\tau)},\label{cardunbed}\\
  \lvert T_{B\vert A,\zeta}^n(\a)\rvert&\leq 2^{n(H(B\vert A)+\tau)}&&\text{if }\a\in T_{A,\xi}^n.\label{cardbed}
\end{align}
We set
\[
  c:=\tilde c(\lvert\U\rvert\lvert\X\rvert\lvert\Y\rvert\lvert\Z\rvert),
\]
this is the minimal $\tilde c$ we will need in the following.

\subsubsection{Bounds for Case 0 and 1:}

Let $L_0,L_1,L_2$ be arbitrary. Due to their conditional independence, the $X^{l_0l_1}$ and $Y^{l_0l_2}$ cannot be required to be jointly conditionally typical given $U^{l_0}$. However, the next lemma shows that most of them are jointly conditionally typical with high probability. 

\begin{lemma}\label{lemgemtyp3}
  For $(l_0,l_2)\in[L_0]\times[L_2]$, let the event $A\1_*(l_0,l_2)$ be defined by
\begin{multline*}
  A\1_*(l_0,l_2)\\:=\bigl\{\lvert\{l_1\in[L_1]:X^{l_0l_1}\in T_{X\vert YU,\delta}^{n}(Y^{l_0l_2},U^{l_0})\}\rvert\geq(1-\eps)(1-2\cdot2^{-{n}c\delta^2})L_1\}.
\end{multline*}
Then
\[
  \PP[A\1_*(l_0,l_2)^c]\leq\exp\left(-L_1\cdot\frac{\eps^2(1-2\cdot2^{-{n}c\delta^2})}{2\ln 2}\right).
\]
\end{lemma}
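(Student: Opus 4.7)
The claim is a Chernoff-type concentration bound for conditionally i.i.d.\ Bernoulli indicators. The plan is to condition on $(U^{l_0},Y^{l_0l_2})$, lower-bound the conditional probability that a single $X^{l_0l_1}$ lies in $T_{X\vert YU,\delta}^n(Y^{l_0l_2},U^{l_0})$ using the conditional independence $X\perp Y\mid U$ built into $\Pi$, and then apply Lemma~\ref{Chernie} to the resulting indicator sum.

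First I fix $(l_0,l_2)$ and condition on $U^{l_0}=\u$ and $Y^{l_0l_2}=\y$. By construction of $P_U^n$ and $P_{Y\vert U}^n$, only pairs with $\u\in T_{U,\delta}^n$ and $\y\in T_{Y\vert U,\delta}^n(\u)$ occur, and the bound I derive will be uniform in such $(\u,\y)$. Conditionally on $\u$, the variables $X^{l_0l_1}$ for $l_1\in[L_1]$ are i.i.d.\ with law $P_{X\vert U}^n(\,\cdot\,\vert\u)$ and independent of $\y$, so the indicators
\[
  Z_{l_1}:=1_{\{X^{l_0l_1}\in T_{X\vert YU,\delta}^n(\y,\u)\}},\qquad l_1\in[L_1],
\]
are i.i.d.\ $[0,1]$-valued Bernoulli with a common mean $\mu$ depending only on $(\u,\y)$.

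Next I lower-bound $\mu$. Since $p\in\Pi$ satisfies $X\perp Y\mid U$, we have $P_{X\vert YU}=P_{X\vert U}$, so the estimate \eqref{untyp2} applied to the pair $((Y,U),X)$ yields $P_{X\vert U}^{\otimes n}(T_{X\vert YU,\delta}^n(\y,\u)^c\vert\u)\leq 2^{-nc\delta^2}$, and the same estimate applied to $(U,X)$ gives $P_{X\vert U}^{\otimes n}(T_{X\vert U,\delta}^n(\u)^c\vert\u)\leq 2^{-nc\delta^2}$. Writing
\[
  \mu=\frac{P_{X\vert U}^{\otimes n}\bigl(T_{X\vert YU,\delta}^n(\y,\u)\cap T_{X\vert U,\delta}^n(\u)\bigm\vert\u\bigr)}{P_{X\vert U}^{\otimes n}\bigl(T_{X\vert U,\delta}^n(\u)\bigm\vert\u\bigr)}
\]
and bounding the denominator by $1$, a union bound then gives $\mu\geq 1-2\cdot 2^{-nc\delta^2}$.

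Finally I apply Lemma~\ref{Chernie} with $b=1$ to $Z_1,\ldots,Z_{L_1}$ conditionally on $(\u,\y)$. Its lower-tail estimate controls $\PP[\sum_{l_1}Z_{l_1}<(1-\eps)\mu L_1\mid\u,\y]$ by $\exp(-L_1\eps^2\mu/(2\ln 2))\leq\exp(-L_1\eps^2(1-2\cdot 2^{-nc\delta^2})/(2\ln 2))$. Because $(1-\eps)\mu\geq(1-\eps)(1-2\cdot 2^{-nc\delta^2})$, the event $A\1_*(l_0,l_2)^c$ is contained in $\{\tfrac{1}{L_1}\sum_{l_1}Z_{l_1}<(1-\eps)\mu\}$, so the same bound applies to $\PP[A\1_*(l_0,l_2)^c\mid\u,\y]$, and averaging over $(\u,\y)$ removes the conditioning. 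The one conceptual step is the identification $P_{X\vert YU}=P_{X\vert U}$ via conditional independence, which is what enables \eqref{untyp2} to be invoked; after that the Chernoff application is mechanical.
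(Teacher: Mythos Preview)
Your proof is correct and follows essentially the same route as the paper: condition on $(U^{l_0},Y^{l_0l_2})=(\u,\y)$, use the conditional independence $P_{X\vert YU}=P_{X\vert U}$ together with \eqref{untyp2} to lower-bound the success probability $\mu$ by $1-2\cdot 2^{-nc\delta^2}$, and then apply Lemma~\ref{Chernie} with $b=1$. The only cosmetic difference is that the paper bounds $1-\mu$ by $2^{-nc\delta^2}/(1-2^{-nc\delta^2})$ and then uses $x/(1-x)\leq 2x$, whereas you bound the denominator of $\mu$ by $1$ and apply a union bound directly; both arrive at the same estimate.
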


\begin{proof}
Let $\u\in T_{U,\delta}^n$ and $\y\in T_{Y\vert U,\delta}^n(\u)$. We first condition on the event $\{Y^{l_0l_2}=\y,U^{l_0}=\u\}$. Due to \eqref{untyp2}, we have
\begin{align*}
  &\PP[X^{11}\notin T_{X\vert YU,\delta}^{n}(\y,\u)\vert Y^{l_0l_2}=\y,U^1=\u]\\
     &=\frac{1}{P_{X\vert U}^{\otimes {n}}(T_{X\vert U,\delta}^{n}(\u)\vert\u)}\sum_{\x\in T_ {X\vert U,\delta}^{n}(\u)\setminus T_{X\vert YU,\delta}^{n}(\y,\u)}P_{X\vert U}^{\otimes {n}}(\x\vert\u)\\
     &\leq\frac{1}{P_{X\vert U}^{\otimes {n}}(T_{X\vert U,\delta}^{n}(\u)\vert\u)}\sum_{\x\notin T_{X\vert YU,\delta}^{n}(\y,\u)}P_{X\vert YU}^{\otimes {n}}(\x\vert\y,\u)\\
     &\leq\frac{2^{-{n}c\delta^2}}{1-2^{-{n}c\delta^2}}.
\end{align*}
In particular,
\begin{align*}
  \mu&:=\PP[X^{11}\in T_{X\vert YU,\delta}^{n}(\y,\u)\vert Y^{11}=\y,U^1=\u]
  \geq 1-2\cdot 2^{-{n}c\delta^2}.
\end{align*}
Therefore
\begin{multline*}
  \PP[A\1_*(l_0,l_2)^c\vert Y^{l_0l_2}=\y,U^{l_0}=\u]\\
  \leq\PP\biggl[\sum_{l_1}1_{T_{X\vert YU,\delta}^{n}(\y,\u)}(X^{l_0l_1})\leq(1-\eps)\mu\,L_1\biggl\vert Y^{l_0l_2}=\y,U^{l_0}=\u\biggr],
\end{multline*}
which by Lemma \ref{Chernie} can be bounded by
\[
  \exp\left(-L_1\cdot\frac{\eps^2\mu}{2\ln 2}\right)\leq\exp\left(-L_1\cdot\frac{\eps^2(1-2\cdot2^{-{n}c\delta^2})}{2\ln 2}\right).
\]
This completes the proof as this bound is independent of $(\y,\u)$.\qed
\end{proof}

Lemma \ref{lemgemtyp3} is not needed for a single sender. As we cannot guarantee the joint conditional typicality of both senders' inputs, we need to introduce an explicit bound on the channel transition probabilities. This is done in the set $E\1_1$. Then we prove three lemmas each of which exploits one of the three types of independence contained in $\F$. Altogether these lemmas provide lower bounds on $L_0,L_1,L_2$ which if satisfied allow the construction of a wiretap code satisfying \eqref{totvarsmall}. Let 
\[
  E\1_1(\u,\x,\y):=\{\z\in T_{Z\vert YU,2\lvert\X\rvert\delta}^{n}(\y,\u):W_e^{\otimes {n}}(\z\vert \x,\y)\leq2^{-{n}(H(Z\vert XY)-f_2(\delta))}\},
\]
where $f_2(\delta)=\tau(P_{UXYZ},3\delta,\delta)$ (see \eqref{probbed}). Let 
\[
  \vartheta\1_{\u\y}(\z):=\EE[W_e^{\otimes {n}}(\z\vert X^{11},\y)1_{E\1_1(\u,X^{11},\y)}(\z)\vert U^1=\u]
\]
and for
\[
  F\1_1(\u,\y):=\{\z\in T_{Z\vert YU,2\lvert\X\rvert\delta}^{n}(\y,\u):\vartheta\1_{\u\y}(\z)\geq \eps\lvert T_{Z\vert YU,2\lvert\X\rvert\delta}^{n}(\y,\u)\rvert^{-1}\}
\]
define
\[
  \hat\vartheta\1_{\u\y}:=\vartheta\1_{\u\y}\cdot1_{F\1_1(\u,\y)},\quad E\1_2(\u,\x,\y):=E\1_1(\u,\x,\y)\cap F\1_1(\u,\y).
\]

\begin{lemma}\label{AW3}
  For every $\z\in\Z^n$ and $(l_0,l_2)\in [L_0]\times[L_2]$, let $A\1_1(l_0,l_2,\z)$ be the event that
\[
  \frac{1}{L_1}\sum_{l_1}W_e^{\otimes {n}}(\z\vert X^{l_0l_1},Y^{l_0l_2})1_{E\1_2(U^{l_0},X^{l_0l_1},Y^{l_0l_2})}(\z)\in[(1\pm\eps)\hat\vartheta\1_{U^{l_0}Y^{l_0l_2}}(\z)].
\]
Then
\begin{align*}
  \PP[A\1_1(l_0,l_2,\z)^c]
  &\leq 2\exp\left(-L_1\cdot\frac{\eps^32^{-{n}(I(Z\wedge X\vert YU)+f_1(\delta)+f_2(\delta))}}{2\ln 2}\right)
\end{align*}
for $f_1(\delta)=\tau(P_{UYZ},2\delta,2\lvert\X\rvert\delta)$ and $n$ sufficiently large.
\end{lemma}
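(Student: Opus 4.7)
The plan is to apply the Chernoff bound of Lemma \ref{Chernie} after conditioning on $(U^{l_0},Y^{l_0l_2})$. Fix $\u\in T_{U,\delta}^n$ and $\y\in T_{Y\vert U,\delta}^n(\u)$, and condition on $\{U^{l_0}=\u,\,Y^{l_0l_2}=\y\}$. Under this conditioning the random variables
\[
  Z_{l_1}:=W_e^{\otimes n}(\z\vert X^{l_0l_1},\y)\,1_{E\1_2(\u,X^{l_0l_1},\y)}(\z),\qquad l_1\in[L_1],
\]
are i.i.d.\ with $X^{l_0l_1}\sim P_{X\vert U}^n(\cdot\vert\u)$. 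If $\z\notin F\1_1(\u,\y)$, the definition $E\1_2=E\1_1\cap F\1_1$ forces every $Z_{l_1}=0$, and $\hat\vartheta\1_{\u\y}(\z)=0$ as well, so $A\1_1(l_0,l_2,\z)$ holds trivially. I therefore assume $\z\in F\1_1(\u,\y)$.

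Next I would establish the two quantities needed for Chernoff. For the almost-sure bound $Z_{l_1}\leq b$, note that on $\{Z_{l_1}>0\}$ one has $\z\in E\1_1(\u,X^{l_0l_1},\y)$, which by the very definition of $E\1_1$ gives $W_e^{\otimes n}(\z\vert X^{l_0l_1},\y)\leq b:=2^{-n(H(Z\vert XY)-f_2(\delta))}$. For the expectation, since $F\1_1$ depends only on $(\u,\y)$,
\[
  \mu:=\EE[Z_{l_1}\mid U^{l_0}=\u]=1_{F\1_1(\u,\y)}(\z)\,\vartheta\1_{\u\y}(\z)=\hat\vartheta\1_{\u\y}(\z).
\]
Since I have assumed $\z\in F\1_1(\u,\y)$, the defining property of $F\1_1$ combined with the cardinality estimate \eqref{cardbed} applied with $(A,B)=(UY,Z)$, $\xi=2\delta$, $\zeta=2\lvert\X\rvert\delta$ yields
\[
  \mu\geq\eps\,\lvert T_{Z\vert YU,2\lvert\X\rvert\delta}^n(\y,\u)\rvert^{-1}\geq\eps\cdot 2^{-n(H(Z\vert YU)+f_1(\delta))}.
\]

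Then I would exploit the Markov chain $U\to(X,Y)\to Z$, which holds because the factorization $p=P_U\otimes P_{X\vert U}\otimes P_{Y\vert U}\otimes W$ makes $Z$ a function of $(X,Y)$ and noise independent of $U$; hence $H(Z\vert XYU)=H(Z\vert XY)$ and
\[
  \frac{\mu}{b}\geq\eps\cdot 2^{-n(H(Z\vert YU)-H(Z\vert XY)+f_1(\delta)+f_2(\delta))}=\eps\cdot 2^{-n(I(Z\wedge X\vert YU)+f_1(\delta)+f_2(\delta))}.
\]
Plugging into Lemma \ref{Chernie} and taking a union of its upper- and lower-tail forms (accounting for the factor $2$ in the claim) yields
\[
  \PP[A\1_1(l_0,l_2,\z)^c\mid U^{l_0}=\u,Y^{l_0l_2}=\y]\leq 2\exp\!\left(-L_1\cdot\frac{\eps^2\mu}{2b\ln 2}\right)\leq 2\exp\!\left(-L_1\cdot\frac{\eps^3\,2^{-n(I(Z\wedge X\vert YU)+f_1(\delta)+f_2(\delta))}}{2\ln 2}\right).
\]
Because the bound is independent of $(\u,\y)$, deconditioning preserves it and the lemma follows.

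The step that requires care rather than real work is the bookkeeping: verifying that $f_2(\delta)=\tau(P_{UXYZ},3\delta,\delta)$ is indeed the correct exponent produced by \eqref{probbed} for the channel bound embedded in $E\1_1$, that $f_1(\delta)=\tau(P_{UYZ},2\delta,2\lvert\X\rvert\delta)$ is the correct exponent in the cardinality bound \eqref{cardbed} used above, and that the Markov identity $H(Z\vert XYU)=H(Z\vert XY)$ really is available at this point; none of these raises an independent technical obstacle.
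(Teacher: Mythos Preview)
Your proposal is correct and follows essentially the same approach as the paper: condition on $(U^{l_0},Y^{l_0l_2})$, identify the conditionally i.i.d.\ bounded random variables, compute their mean as $\hat\vartheta\1_{\u\y}(\z)$, lower-bound it via the definition of $F\1_1(\u,\y)$ and \eqref{cardbed}, apply Lemma \ref{Chernie}, and decondition. You are simply more explicit than the paper about the trivial case $\z\notin F\1_1(\u,\y)$ and about the Markov identity $H(Z\vert XYU)=H(Z\vert XY)$, but there is no substantive difference in method.
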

\begin{proof}
  For $\u\in T_{U,\delta}^n$ and $\y\in T_{Y\vert U,\delta}^n(\u)$ we condition on the event $\{Y^{l_0l_2}=\y,U^{l_0}=\u\}$. The conditional expectation of the bounded conditionally i.i.d.\ random variables 
\begin{align*}
  W_e^{\otimes {n}}(\z\vert X^{l_0l_1},\y)1_{E\1_2(\u,X^{l_0l_1},\y)}(\z)\leq2^{-{n}(H(Z\vert XY)-f_2(\delta))}&&(l_1\in[L_1])
\end{align*}
is $\hat\vartheta\1_{\u\y}(\z)$. We use Lemma \ref{Chernie}, the definition of $F\1_1(\u,\y)$, and \eqref{cardbed} to obtain for $n$ sufficiently large
\begin{align*}
  &\mathrel{\hphantom{\leq}}\PP[A\1_1(l_0,l_2,\z)^c\vert Y^{l_0l_2}=\y,U^{l_0}=\u]\\
  &\leq2\exp\left(-L_1\cdot\frac{\eps^2\hat\vartheta\1_{\u\y}(\z)2^{{n}(H(Z\vert XY)-f_2(\delta))}}{2\ln 2}\right)\\
  &\leq2\exp\left(-L_1\cdot\frac{\eps^32^{-{n}(I(Z\wedge X\vert YU)+f_1(\delta)+f_2(\delta))}}{2\ln 2}\right).
\end{align*}
This bound is uniform in $\u$ and $\y$, so the proof is complete.\qed
\end{proof}

For the next lemma, define
\[
  \vartheta\1_\u(\z):=\EE[W_e^{\otimes {n}}(\z\vert X^{11},Y^{11})1_{E\1_2(\u,X^{11},Y^{11})}(\z)\vert U^1=\u].
\]
Further let
\[
  F\1_2(\u):=\{\z\in T_{Z\vert U,3\lvert\Y\rvert\lvert\X\rvert\delta}^{n}(\u):\vartheta\1_\u(\z)\geq\eps\lvert T_{Z\vert U,3\lvert\Y\rvert\lvert\X\rvert\delta}^{n}(\u)\rvert^{-1}\}
\]
and
\[
  \hat\vartheta\1_\u=\vartheta\1_\u\cdot 1_{F\1_2(\u)},\quad E\1_0(\u,\x,\y):=E\1_2(\u,\x,\y)\cap F\1_2(\u,\y).
\]

\begin{lemma}\label{joint}
  For every $\z\in\Z^n$ and $l_0\in[L_0]$, let $A\1_2(l_0,\z)$ be the event
\[
  \frac{1}{L_1L_2}\sum_{l_1l_2}W_e^{\otimes {n}}(\z\vert X^{l_0l_1},Y^{l_0l_2})1_{E\1_0(U^{l_0},X^{l_0l_1},Y^{l_0l_2})}(\z)\in[(1\pm3\eps)\hat\vartheta\1_{U^{l_0}}(\z)].
\]
Then for $\eps$ sufficiently small and $n$ sufficiently large,
\begin{align*}
  \PP[A\1_2(l_0,\z)^c]
  &\leq 2\lvert\Y\rvert^{n}\exp\left(-L_1\cdot\frac{\eps^32^{-{n}(I(Z\wedge X\vert YU)+f_1(\delta)+f_2(\delta))}}{2\ln 2}\right)\\
  &\hphantom{\mathrel{\leq}}+2\exp\left(-L_2\cdot\frac{\eps^32^{-{n}(I(Z\wedge Y\vert U)+f_1(\delta)+f_4(\delta))}}{4\ln 2}\right),
\end{align*}
where $f_4(\delta)=\tau(P_{UZ},\delta,3\lvert\Y\rvert\lvert\X\rvert\delta)$.
\end{lemma}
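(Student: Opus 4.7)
The plan is to bootstrap Lemma~\ref{AW3} by decomposing the double sum into an outer average (over $l_2$) of inner averages (over $l_1$), controlling each stage by a Chernoff bound, and gluing them together via a union bound. Writing
\[
  Q(l_2,\z):=\frac{1}{L_1}\sum_{l_1}W_e^{\otimes n}(\z\vert X^{l_0l_1},Y^{l_0l_2})1_{E\1_0(U^{l_0},X^{l_0l_1},Y^{l_0l_2})}(\z),
\]
the quantity appearing in $A\1_2(l_0,\z)$ is $L_2^{-1}\sum_{l_2}Q(l_2,\z)$, and I would first pin down all the $Q(l_2,\z)$ simultaneously and then average them.

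\textbf{Inner stage.} Condition on $U^{l_0}=\u\in T_{U,\delta}^n$. Since $1_{F\1_2(\u)}(\z)$ is deterministic under this conditioning, it factors out of $Q(l_2,\z)$: when $Y^{l_0l_2}=\y$, the sum $Q(l_2,\z)$ equals $1_{F\1_2(\u)}(\z)$ times the quantity estimated in Lemma~\ref{AW3}. A crude union bound of the failure event from that lemma over all $\y\in\Y^n$ produces the first summand in the claimed bound; on the complementary good event one has $Q(l_2,\z)\in[(1\pm\eps)\hat\vartheta\1_{\u Y^{l_0l_2}}(\z)\cdot 1_{F\1_2(\u)}(\z)]$ simultaneously for every $l_2\in[L_2]$.

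\textbf{Outer stage.} It remains to concentrate $S:=L_2^{-1}\sum_{l_2}\hat\vartheta\1_{\u Y^{l_0l_2}}(\z)$ around $\hat\vartheta\1_\u(\z)$ under $\PP[\,\cdot\,\vert U^{l_0}=\u]$. Conditionally the summands are i.i.d.\ with common mean $\EE[\hat\vartheta\1_{\u Y^{11}}(\z)\vert U^1=\u]$, which equals $\vartheta\1_\u(\z)$ minus the contribution from $F\1_1(\u,Y^{11})^c$; by definition of $F\1_1$ this omitted piece is at most a small multiple of $\vartheta\1_\u(\z)$ and hence negligible relative to $\hat\vartheta\1_\u(\z)$ on $F\1_2(\u)$. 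Each summand is pointwise bounded above by $2^{-n(H(Z\vert YU)-f_1(\delta))}$ because $\hat\vartheta\1_{\u\y}$ is supported on $T_{Z\vert YU,2\lvert\X\rvert\delta}^n(\y,\u)$ and is the $P_{X\vert U}^n(\cdot\vert\u)$-average of $W_e^{\otimes n}(\z\vert\cdot,\y)$; on $F\1_2(\u)$ the mean satisfies $\hat\vartheta\1_\u(\z)\geq\eps\cdot 2^{-n(H(Z\vert U)+f_4(\delta))}$. The ratio of the pointwise ceiling to the mean is therefore at most $\eps^{-1}2^{n(I(Z\wedge Y\vert U)+f_1(\delta)+f_4(\delta))}$, so Lemma~\ref{Chernie} yields the second exponential term.

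\textbf{Combining, and the main obstacle.} Composing the $(1\pm\eps)$ deviation from the inner stage with the $(1\pm\eps)$ from the outer stage, and absorbing the truncation loss coming from $F\1_1^c$, gives an overall deviation of $(1\pm 3\eps)$ for $\eps$ small, which is precisely the event $A\1_2(l_0,\z)$; the failure probability is the sum of the two stage-failures by union bound. The genuinely delicate point is the bookkeeping of the truncations $F\1_1,F\1_2$: one must verify both that the $F\1_1^c$ contribution to the conditional mean is truly $O(\eps)\hat\vartheta\1_\u(\z)$ on $F\1_2(\u)$ (rather than corrupting the mean itself), and that the pointwise ceiling $2^{-n(H(Z\vert YU)-f_1(\delta))}$ on $\hat\vartheta\1_{\u\y}(\z)$ is available whenever the summand is nonzero. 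Once both are handled, the Chernoff exponents fit exactly and the two union bounds combine into the claimed estimate.
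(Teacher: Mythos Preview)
Your approach is essentially the paper's: condition on $U^{l_0}=\u$, union-bound the inner-stage failure (Lemma~\ref{AW3}) over all $\y$ so that the $l_1$-averages concentrate uniformly in $\y$ (the paper packages this as the event $(X^{l_01},\ldots,X^{l_0L_1})\in B_\u$), and then apply a second Chernoff bound to the conditionally i.i.d.\ $l_2$-summands using the pointwise ceiling $\hat\vartheta\1_{\u\y}(\z)\leq(1-2^{-nc\delta^2})^{-1}2^{-n(H(Z\vert YU)-f_1(\delta))}$ and the floor $\hat\vartheta\1_\u(\z)\geq\eps\lvert T_{Z\vert U,3\lvert\Y\rvert\lvert\X\rvert\delta}^n(\u)\rvert^{-1}$. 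One simplification to your ``main obstacle'': the $F\1_1^c$-correction you worry about is not there, because by the nested definitions $E\1_2=E\1_1\cap F\1_1$ and $\hat\vartheta\1_{\u\y}=\vartheta\1_{\u\y}1_{F\1_1(\u,\y)}$ one has $\EE[\hat\vartheta\1_{\u Y^{11}}(\z)\vert U^1=\u]=\vartheta\1_\u(\z)$ \emph{exactly}, which on $F\1_2(\u)$ equals $\hat\vartheta\1_\u(\z)$ and off $F\1_2(\u)$ is irrelevant since the $1_{F\1_2(\u)}$ factor in $E\1_0$ kills both sides.
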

\begin{proof}

Let $\u\in T_{U,\delta}^n$. We define the set $B_\u\subset(T_{X\vert U,\delta}^n(\u))^{L_1}$ as
\begin{multline*}
  \bigcap_{\y\in T_{Y\vert U,\delta}^n(\u)}\Bigl\{(\x^1,\ldots,\x^{L_1})\in(T_{X\vert U,\delta}^n(\u))^{L_1}:\\\frac{1}{L_1}\sum_{l_1}W_e^{\otimes {n}}(\z\vert \x^{l_0l_1},\y)1_{E\1_0(\u,X^{l_0l_1},\y)}(\z)\in[(1\pm\eps)\hat\vartheta\1_{\u\y}(\z)]\Bigr\}.
\end{multline*}
One has 
\begin{align*}
  &\PP[A\1_2(l_0,\z)^c\vert U^{l_0}=\u]\\
  &\leq\PP\bigl[\{(X^{l_01},\ldots,X^{l_0L_1})\notin B_\u\}\vert U^{l_0}=\u\bigr]\\
  &\mathrel{\hphantom{=}}+\sum_{(\x^1,\ldots,\x^{L_1})\in B_\u}\PP\bigl[A\1_2(l_0,\z)^c\vert X^{l_01}=\x^1,\ldots,X^{l_0L_1}=\x^{L_1},U^{l_0}=\u\bigr]\cdot\\
  &\hspace{.3\columnwidth}\cdot\PP[X^{l_01}=\x^1,\ldots,X^{l_0L_1}=\x^{L_1}\vert U^{l_0}=\u\bigr].
\end{align*}
From the proof of Lemma \ref{AW3} it follows that 
\begin{multline}\label{badx}
  \PP\bigl[\{(X^{l_01},\ldots,X^{l_0L_1})\notin B_\u\}\vert U^{l_0}=\u\bigr]\\\leq 2\lvert\Y\rvert^{n}\exp\left(-L_1\cdot\frac{\eps^32^{-{n}(I(Z\wedge X\vert YU)+f_1(\delta)+f_2(\delta))}}{2\ln 2}\right),
\end{multline}
which gives a bound independent of $\u$. Now let $(\x^1,\ldots,\x^{L_1})\in B_\u$. By \eqref{untyp2} and \eqref{probbed},
\begin{align*}
  \hat\vartheta\1_{\u\y}(\z)&=\EE[W_e^{\otimes {n}}(\z\vert X^{11},\y)1_{E\1_2(\u,X^{11},\y)}(\z)\vert U^1=\u]\\
  &\leq \EE[W_e^{\otimes {n}}(\z\vert X^{11},\y)\vert U^1=\u]\\
  &\leq \frac{1}{P_{X\vert U}^{\otimes {n}}(T_{X\vert U,\delta}^{n}(\u)\vert\u)}(P_{Z\vert YU})^{\otimes {n}}(\z\vert\y,\u)\\
  &\leq(1-2^{-{n}c\delta^2})^{-1}2^{-{n}(H(Z\vert YU)-f_1(\delta))}.
\end{align*}
Hence the random variables 
\begin{align*}
  \tilde W\1_{\u\z}(l_0,l_2):=\frac{1}{L_1}\sum_{l_1}W_e^{\otimes {n}}(\z\vert\x^{l_1},Y^{l_0l_2})1_{E\1_0(\u,\x^{l_1},Y^{l_0l_2})}(\z)&&(l_2\in[L_2]),
\end{align*}
which are independent conditional on $\{U^{l_0}=\u\}$, are upper-bounded by 
\[
  \frac{(1+\eps)}{(1-2^{-{n}c\delta^2})}\cdot2^{-{n}(H(Z\vert YU)-f_1(\delta))}.
\]
For their conditional expectation we have 
\begin{multline*}
  \mu_{l_0l_2}:=\EE[\tilde W\1_{\u\z}(l_0,l_2)\vert U^{l_0}=\u]\\\in[(1\pm\eps)\EE[\hat\vartheta\1_{\u Y^{l_0l_2}}(\z)\vert U^1=\u]]=[(1\pm\eps)\hat\vartheta\1_\u(\z)].
\end{multline*}
Thus their arithmetic mean $\bar\mu=(1/L_2)\sum_{l_2}\mu_{l_0l_2}$ must also be contained in $[(1\pm\eps)\hat\vartheta\1_\u(\z)]$. Applying Lemma \ref{Chernie}, we conclude
\begin{align*}
  &\mathrel{\hphantom{=}}\PP\bigl[A\1_2(l_0,\z)^c\vert X^{l_01}=\x^1,\ldots,X^{l_0L_1}=\x^{L_1},U^{l_0}=\u\bigr]\\
  &=\PP\biggl[\frac{1}{L_2}\sum_{l_2}\tilde W\1_{\u\z}(l_0,l_2)\notin[(1\pm3\eps)\hat\vartheta\1_\u(\z)]\biggl\vert U^{l_0}=\u\biggr]\\
  &\leq\PP\biggl[\frac{1}{L_2}\sum_{l_2}\tilde W\1_{\u\z}(l_0,l_2)\notin[(1\pm\eps)\bar\mu]\biggl\vert U^{l_0}=\u\biggr]\\
  &\leq2\exp\left(-L_2\cdot\frac{\eps^2(1-2^{-{n}c\delta^2})2^{{n}(H(Z\vert YU)-f_1(\delta))}(1-\eps)\hat\vartheta\1_{\u}(\z)}{2(1+\eps)\ln 2}\right).
\end{align*}
  Due to the definition of $F\1_2(\u)$ and to \eqref{cardbed}, this is smaller than
\begin{equation}\label{goodx}
  2\exp\left(-L_2\cdot\frac{\eps^32^{-{n}(I(Z\wedge Y\vert U)+f_1(\delta)+f_4(\delta))}}{4\ln 2}\right)
\end{equation}
if $\eps$ is sufficiently small and $n$ is sufficiently large, giving a bound independent of $\u$ and $\x^1,\ldots,\x^{L_1}$. Adding the bounds \eqref{badx} and \eqref{goodx} concludes the proof.\qed
\end{proof}

The next lemma is only needed in Case 1. Let $A\1_2(\z):=A\1_2(1,\z)\cap\ldots\cap A\1_2(L_0,\z)$. For every $\z$, we then define a new probability measure by $\hat\PP\1_\z:=\PP[\cdot\vert A\1_2(\z)]$. With $\vartheta\1(\z):=\hat\EE\1_\z[\hat\vartheta\1_{U^1}(\z)]$ define
\[
  F\1_0:=\{\z\in T_{Z,4\lvert\Y\rvert\lvert\X\rvert\lvert\U\rvert\delta}^{n}:\vartheta\1(\z)\geq\lvert T_{Z,4\lvert\Y\rvert\lvert\X\rvert\lvert\U\rvert\delta}^{n}\rvert^{-1}\}
\]
and $\hat\vartheta\1:=\vartheta\1\cdot1_{F\1_0}$.

\begin{lemma}\label{theucase}
  Let $\z\in F\1_0$ and let $A\1_0(\z)$ be the event that
\[
  \frac{1}{L_0L_1L_2}\sum_{l_0,l_1,l_2}W_e^{\otimes {n}}(\z\vert X^{l_0l_1},Y^{l_0l_2})1_{E\1_0(U^{l_0},X^{l_0l_1},Y^{l_0l_2})}(\z)\in[(1\pm5\eps)\hat\vartheta\1(\z)].
\]
Then for $f_6(\delta)=\tau(P_{Z},4\lvert\Y\rvert\lvert\X\rvert\lvert\U\rvert\delta,\delta)$, sufficiently small $\eps$ and $n$ sufficiently large,
\begin{align*}
  &\mathrel{\hphantom{\leq}}\PP[A\1_0(\z)^c]\\
  &\leq 2L_0\lvert\Y\rvert^{n}\exp\left(-L_1\cdot\frac{\eps^32^{-{n}(I(Z\wedge X\vert YU)+f_1(\delta)+f_2(\delta))}}{2\ln 2}\right)\\
  &\mathrel{\hphantom{\leq}}+2L_0\exp\left(-L_2\cdot\frac{\eps^32^{-{n}(I(Z\wedge Y\vert U)+f_1(\delta)+f_4(\delta))}}{4\ln 2}\right)\\
  &\mathrel{\hphantom{\leq}}+2\exp\left(-L_0\cdot\frac{\eps^32^{-{n}(I(Z\wedge U)+f_4(\delta)+f_6(\delta))}}{4\ln 2}\right).
\end{align*}
\end{lemma}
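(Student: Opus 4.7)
The plan is to mimic the two-stage Chernoff argument used in Lemma \ref{joint}: the already-established event $A\1_2(\z) = \bigcap_{l_0\in[L_0]} A\1_2(l_0,\z)$ controls the inner averages, and one further Chernoff step handles the outer sum over $l_0$. First I split
\[
  \PP[A\1_0(\z)^c] \leq \PP[A\1_2(\z)^c] + \hat\PP\1_\z[A\1_0(\z)^c].
\]
A union bound over $l_0$ together with Lemma \ref{joint} gives $\PP[A\1_2(\z)^c] \leq L_0\cdot(\text{RHS of Lemma \ref{joint}})$, producing the first two terms of the claimed inequality.

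For the conditional term, introduce
\[
  Z_{l_0} := \frac{1}{L_1 L_2}\sum_{l_1,l_2} W_e^{\otimes n}(\z\vert X^{l_0 l_1},Y^{l_0 l_2})\,1_{E\1_0(U^{l_0},X^{l_0 l_1},Y^{l_0 l_2})}(\z).
\]
Each $A\1_2(l_0,\z)$ is measurable with respect to $(U^{l_0},\F_{l_0})$, and since these families are independent under $\PP$, conditioning on their intersection preserves the product structure; hence under $\hat\PP\1_\z$ the $Z_{l_0}$ remain mutually independent and (by the symmetry of the underlying law) identically distributed. The conditioning also forces $Z_{l_0} \in [(1\pm 3\eps)\hat\vartheta\1_{U^{l_0}}(\z)]$ almost surely, and the typical-set estimate which in the proof of Lemma \ref{joint} bounded $\hat\vartheta\1_{\u\y}(\z)$ carries over (now averaging out both $X$ and $Y$ and using $\z \in T_{Z\vert U,3\lvert\Y\rvert\lvert\X\rvert\delta}^n(\u)$ thanks to $F\1_2(\u)$) to yield
\[
  \hat\vartheta\1_\u(\z) \leq (1-2^{-nc\delta^2})^{-2}\cdot 2^{-n(H(Z\vert U)-f_4(\delta))},
\]
which supplies a uniform upper bound $b$ on the $Z_{l_0}$.

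By symmetry and the above almost-sure inclusion, $\mu := \hat\EE\1_\z[Z_{l_0}] \in [(1\pm 3\eps)\hat\vartheta\1(\z)]$, and for $\z\in F\1_0$ one has $\hat\vartheta\1(\z) \geq \lvert T_{Z,4\lvert\Y\rvert\lvert\X\rvert\lvert\U\rvert\delta}^n\rvert^{-1} \geq 2^{-n(H(Z)+f_6(\delta))}$ by \eqref{cardunbed}. Applying Lemma \ref{Chernie} to the i.i.d.\ $Z_{l_0}$ with tolerance $\eps\mu$ and range $b$ therefore gives
\[
  \hat\PP\1_\z\Bigl[\Bigl\lvert\tfrac{1}{L_0}\sum_{l_0}Z_{l_0}-\mu\Bigr\rvert > \eps\mu\Bigr] \leq 2\exp\Bigl(-L_0\cdot\tfrac{\eps^2\mu}{2b\ln 2}\Bigr),
\]
and substituting the above bounds on $\mu$ and $b$ (and using $\eps^2 \geq \eps^3$ for $\eps \leq 1$) turns the exponent into the advertised $L_0\cdot\eps^3\cdot 2^{-n(I(Z\wedge U)+f_4(\delta)+f_6(\delta))}/(4\ln 2)$. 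Since $(1\pm\eps)(1\pm 3\eps) \subset (1\pm 5\eps)$ for small $\eps$, one has $[(1\pm\eps)\mu] \subset [(1\pm 5\eps)\hat\vartheta\1(\z)]$, so $A\1_0(\z)^c$ is contained in the complement of the Chernoff event, closing the argument. The delicate step will be verifying that $\hat\PP\1_\z$ genuinely preserves independence across $l_0$ (without which the outer Chernoff is unavailable) and cleanly stacking the two-stage multiplicative tolerance $(1\pm 3\eps)(1\pm\eps)$ inside the target $(1\pm 5\eps)$; the remaining typical-set calibrations are direct analogues of those already carried out in the proofs of Lemmas \ref{AW3} and \ref{joint}.
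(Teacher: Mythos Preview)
Your proposal is correct and follows essentially the same route as the paper: the same split $\PP[A\1_0(\z)^c]\leq\PP[A\1_2(\z)^c]+\hat\PP\1_\z[A\1_0(\z)^c]$, the same union bound over $l_0$ combined with Lemma~\ref{joint} for the first piece, and the same outer Chernoff step over the $l_0$-averages (your $Z_{l_0}$ are the paper's $\tilde W\1_\z(l_0)$), with the same bound $\hat\vartheta\1_\u(\z)\leq(1-2^{-nc\delta^2})^{-2}2^{-n(H(Z\vert U)-f_4(\delta))}$ and the same use of the $F\1_0$ threshold together with \eqref{cardunbed}. Your observation that $\eps^2\geq\eps^3$ is a harmless weakening to match the stated exponent, and your justification that conditioning on the product event $A\1_2(\z)$ preserves independence across $l_0$ is exactly the point the paper also singles out.
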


\begin{proof}
We have
\begin{align}
  \PP[A\1_0(\z)^c]
  &\leq\hat\PP\1_\z[A\1_0(\z)^c]+\PP[A\1_2(\z)^c].\label{totwahr3}
\end{align}
By Lemma \ref{joint}, for $\eps$ sufficiently small and $n$ sufficiently large,
\begin{equation}\label{by}
\begin{aligned}
  \PP[A\1_2(\z)^c]
  &\leq 2L_0\lvert\Y\rvert^{n}\exp\left(-L_1\cdot\frac{\eps^32^{-{n}(I(Z\wedge X\vert YU)+f_1(\delta)+f_2(\delta))}}{2\ln 2}\right)\\
  &\mathrel{\hphantom{\leq}}+2L_0\exp\left(-L_2\cdot\frac{\eps^32^{-{n}(I(Z\wedge Y\vert U)+f_1(\delta)+f_4(\delta))}}{4\ln 2}\right).
\end{aligned}
\end{equation}
In order to bound $\hat\PP\1_\z[A\1_0(\z)^c]$, note that the sets $A\1_2(1,\z),\ldots,A\1_2(L_0,\z)$ are independent with respect to $\PP$. Thus under $\hat\PP\1_\z$, the random variables 
\begin{align*}
  \tilde W\1_\z(l_0):=\frac{1}{L_1L_2}\sum_{l_1,l_2}W_e^{\otimes {n}}(\z\vert X^{l_0l_1},Y^{l_0l_2})1_{E\1_0(U^{l_0},X^{l_0l_1},Y^{l_0l_2})}(\z)&&(l_0\in[L_0])
\end{align*}
retain their independence and are upper-bounded by 
\[
  (1+3\eps)\max_{\u\in T_{U,\delta}^n}\hat\vartheta\1_{\u}(\z).
\]
We can further bound this last term as follows: for $\u\in T_{U,\delta}^n$, applying \eqref{untyp2} and \eqref{probbed}, 
\begin{align*}
  \hat\vartheta\1_\u(\z)&=\EE[W_e^{\otimes {n}}(\z\vert X^{11},Y^{11})1_{E\1_0(\u,X^{11},Y^{11})}(\z)\vert U^1=\u]\\
  &\leq\EE[W_e^{\otimes {n}}(\z\vert X^{11},Y^{11})\vert U^1=\u]\\
  &\leq\frac{1}{P_1^{\otimes {n}}(T_{X\vert U,\delta}^{n}(\u)\vert\u)P_2^{\otimes {n}}(T_{Y\vert U,\delta}^{n}(\u)\vert\u)}
	  P_{Z\vert U}^{\otimes n}(\z\vert\u)\\
  &\leq(1-2^{-{n}c_1\delta^2})^{-2}2^{-{n}(H(Z\vert U)-f_4(\delta))}.
\end{align*}
Observing that $\hat\EE\1_\z[\tilde W\1_\z(1)]\in[(1\pm3\eps)\hat\vartheta\1(\z)]$ and applying Lemma \ref{Chernie} and \eqref{cardbed} in the usual way  yields
\begin{align*}
  \hat\PP\1_\z[A\1_0(\z)^c]&\leq2\exp\left(-L_0\cdot\frac{\eps^2(1-2^{-nc\delta^2})^2\,2^{n(H(Z\vert U)-f_4(\delta))}(1-3\eps)\,\hat\vartheta\1(\z)}{2(1+3\eps)\ln 2}\right)\\
  &\leq2\exp\left(-L_0\cdot\frac{\eps^32^{-n(I(Z\wedge U)+f_4(\delta)+f_6(\delta))}}{4\ln 2}\right)
\end{align*}
if $\eps$ is sufficiently small and $n$ sufficiently large. Inserting this and \eqref{by} in \eqref{totwahr3} completes the proof.\qed
\end{proof}

We finally note that results analogous to Lemma \ref{lemgemtyp3}-\ref{theucase} hold where the roles of $X$ and $Y$ are exchanged. We denote the corresponding events by $A\1_*(l_0,l_2)'$ and $A\1_1(l_0,l_2,\z)',A\1_2(l_0,\z)',A\1_0(\z)'$.

\subsubsection{Secrecy for Case 0 and 1:} The following lemma links the above probabilistic bounds to secrecy. In the next subsection, roughly speaking, we will associate a family $\F$ to every message triple $(k_0,k_1,k_2)$. If $L_0,L_1,L_2$ are large enough, the bounds of Lemma \ref{sec3} are satisfied for every such $\F$ with high probability. Hence there is a joint realization of the $\F$ such that the statement of the lemma is satisfied for every message triple. By an appropriate choice of random code one then obtains \eqref{totvarsmall}.

\begin{lemma}\label{sec3}
  Denote by $p\1$ the bound on $\PP[A\1_2(l_0,\z)^c]$ derived in Lemma \ref{joint}. Let $\{\u^{l_0},\x^{l_0l_1},\y^{l_0l_2}:(l_0,l_1,l_2)\in[L_0]\times[L_1]\times[L_2]\}$ be a realization of $\F$ satisfying the conditions of
\begin{align}
  \bigcap_{l_0,l_2}&A\1_*(l_0,l_2),\label{cond3,*}\\
  \bigcap_{l_0,l_2}\bigcap_{\z\in\Z^n}&A\1_1(l_0,l_2,\z),\label{cond3,1}\\
  \bigcap_{l_0}\bigcap_{\z\in\Z^n}&A\1_2(l_0,\z),\label{cond3,2}\\
  \bigcap_{\z\in F\1_0}&A\1_0(\z).\label{cond3,3}
\end{align}
Then
\[
  \lVert\hat\vartheta\1-\frac{1}{L_0L_1L_2}\sum_{l_0,l_1,l_2}W_e^{\otimes n}(\cdot\vert\x^{l_0l_1},\y^{l_0l_2})\rVert\leq 20\eps+9\cdot2^{-nc\delta^2}+L_0\lvert\Z\rvert^np\1.
\]
The same inequality is true if we require conditions \textnormal{(\ref{cond3,*}$'$)}-\textnormal{(\ref{cond3,3}$'$)} which contain the primed equivalents of \eqref{cond3,*}-\eqref{cond3,3} defined at the end of the previous paragraph. If $L_0=1$, then \eqref{cond3,3} and \textnormal{(\ref{cond3,3}$'$)} do not have to hold.
\end{lemma}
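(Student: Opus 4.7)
My plan is to triangulate through the $E\1_0$-truncated average
\[
  \bar W_E(\z) := \frac{1}{L_0L_1L_2}\sum_{l_0,l_1,l_2}W_e^{\otimes n}(\z\vert\x^{l_0l_1},\y^{l_0l_2})\,1_{E\1_0(\u^{l_0},\x^{l_0l_1},\y^{l_0l_2})}(\z)
\]
and write $\|\hat\vartheta\1-\bar W\|\leq\|\bar W-\bar W_E\|+\|\bar W_E-\hat\vartheta\1\|$, where $\bar W$ denotes the full average. The first summand collects the typicality losses controlled by condition \eqref{cond3,*} together with standard typicality and AEP estimates, while the second is the Chernoff-type error controlled by conditions \eqref{cond3,1}--\eqref{cond3,3}; this is Devetak's scheme adapted to the multi-user family $\F$.

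For the first summand, $\bar W_E\leq\bar W$ pointwise gives $\|\bar W-\bar W_E\|=\frac{1}{L_0L_1L_2}\sum_{l_0,l_1,l_2}W_e^{\otimes n}(E\1_0^c\vert\x^{l_0l_1},\y^{l_0l_2})$. I would use condition \eqref{cond3,*} to discard the at most $\eps+2\cdot 2^{-nc\delta^2}$ fraction of "bad" $l_1$ for which $\x^{l_0l_1}\notin T_{X\vert YU,\delta}^n(\y^{l_0l_2},\u^{l_0})$; for the remaining indices a union bound over $E\1_0^c=E\1_1^c\cup F\1_1(\u,\y)^c\cup F\1_2(\u)^c$ applies: \eqref{untyp2} controls the atypicality inside $E\1_1^c$, the standard AEP estimate on $T_{Z\vert XY,\delta}^n(\x,\y)^c$ controls the too-large-probability component, and the defining thresholds of $F\1_1$ and $F\1_2$ each contribute at most $\eps$. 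For the second summand I partition the $\z$-sum at $F\1_0$. On $F\1_0$, condition \eqref{cond3,3} gives $|\bar W_E(\z)-\hat\vartheta\1(\z)|\leq 5\eps\,\hat\vartheta\1(\z)$, whose sum is at most $5\eps(1+O(L_0 p\1))$ after I note $\sum_\z\hat\vartheta\1(\z)\leq(1-L_0 p\1)^{-1}$: indeed, $\vartheta\1(\z)\PP[A\1_2(\z)]=\EE[\hat\vartheta\1_{U^1}(\z);A\1_2(\z)]$, and Lemma \ref{joint} with a union bound over $l_0$ yields $\PP[A\1_2(\z)^c]\leq L_0 p\1$. On $F\1_0^c$, $\hat\vartheta\1(\z)=0$ and
\[
  \sum_{\z\notin F\1_0}\bar W_E(\z)\leq 1-(1-5\eps)\sum_\z\hat\vartheta\1(\z),
\]
paired with a matching lower bound on $\sum_\z\hat\vartheta\1(\z)$ obtained from the defining inequality of $F\1_0$ and the translation $\vartheta\1\leftrightarrow\EE[\hat\vartheta\1_{U^1}]$. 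Summing this translation pointwise across $\z\in\Z^n$ is what introduces the crude $L_0\lvert\Z\rvert^n p\1$ term, and collecting the contributions yields the stated $20\eps+9\cdot 2^{-nc\delta^2}+L_0\lvert\Z\rvert^n p\1$.

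The primed statement is the same argument with $X$ and $Y$ interchanged. For $L_0=1$, $\hat\vartheta\1$ coincides with $\hat\vartheta\1_{\u^1}$ and condition \eqref{cond3,2} alone provides the required estimate at every $\z$, so \eqref{cond3,3} is superfluous. The main technical obstacle will be the $F\1_0^c$-contribution: since \eqref{cond3,3} furnishes Chernoff accuracy only on $F\1_0$, off of it I must cascade through $\bar W_E(\z)\leq(1+3\eps)L_0^{-1}\sum_{l_0}\hat\vartheta\1_{\u^{l_0}}(\z)$ from condition \eqref{cond3,2} and then the transition $\hat\vartheta\1_\u\to\vartheta\1\to\hat\vartheta\1$ via the conditional measure $\hat\PP\1_\z=\PP[\,\cdot\,\vert A\1_2(\z)]$. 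Performing this uniformly in $\z\in\Z^n$ is what ultimately forces the relatively crude $L_0\lvert\Z\rvert^n p\1$ error term rather than an $\eps$-small bound, and getting the constants $20\eps$ and $9\cdot 2^{-nc\delta^2}$ correct requires careful accumulation of the four contributions from the defining conditions of $E\1_0$ plus the $A\1_*$-slack.
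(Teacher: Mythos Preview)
Your two-term triangulation through $\bar W_E$ and the subsequent split of $\lVert\bar W_E-\hat\vartheta\1\rVert$ at $F\1_0$ is exactly the paper's decomposition, just regrouped: the paper telescopes into five pieces, of which your first summand $\lVert\bar W-\bar W_E\rVert$ is the sum of the three ``peeling'' terms (removing $1_{F\1_2}$, $1_{F\1_1}$, $1_{E\1_1}$ in turn) and your second summand is the sum of the two ``target'' terms. So the architecture is right.

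The gap is in your treatment of the first summand. You write that it is ``controlled by condition~\eqref{cond3,*} together with standard typicality and AEP estimates'' and that ``the defining thresholds of $F\1_1$ and $F\1_2$ each contribute at most~$\eps$''. This is not correct: the thresholds defining $F\1_1(\u,\y)$ and $F\1_2(\u)$ are thresholds on the \emph{expected} measures $\vartheta\1_{\u\y}$ and $\vartheta\1_\u$, not on $W_e^{\otimes n}(\,\cdot\,\vert\x^{l_0l_1},\y^{l_0l_2})$ for the realized codewords. From the definition you only get $\vartheta\1_{\u\y}(F\1_1(\u,\y)^c)\leq\eps$, which says nothing about $W_e^{\otimes n}(F\1_1(\u,\y)^c\vert\x,\y)$ for an individual good $\x$. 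To pass from the empirical average over $l_1$ (resp.\ $l_1,l_2$) to $\hat\vartheta\1_{\u\y}$ (resp.\ $\hat\vartheta\1_\u$), you must invoke the Chernoff conditions~\eqref{cond3,1} and~\eqref{cond3,2}; only then can the threshold bounds $\hat\vartheta\1_{\u\y}(\Z^n)\geq\vartheta\1_{\u\y}(\Z^n)-\eps$ and $\hat\vartheta\1_\u(\Z^n)\geq\vartheta\1_\u(\Z^n)-\eps$ be applied. Concretely, the paper bounds the $F\1_1$-peeling term by
\[
  1-\frac{1}{L_0L_1L_2}\sum_{l_0,l_1,l_2}W_e^{\otimes n}(E\1_2\vert\x^{l_0l_1},\y^{l_0l_2})
  \;\leq\;1-\frac{1-\eps}{L_0L_2}\sum_{l_0,l_2}\hat\vartheta\1_{\u^{l_0}\y^{l_0l_2}}(\Z^n),
\]
where the inequality is precisely~\eqref{cond3,1}; the $F\1_2$-peeling term analogously needs~\eqref{cond3,2}. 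Your union-bound-per-index approach cannot produce this estimate. Once you reroute conditions~\eqref{cond3,1} and~\eqref{cond3,2} into the first summand (the $E\1_1^c$ piece and the bad-$l_1$ piece do use only~\eqref{cond3,*} and typicality, as you say), the rest of your outline matches the paper, including the appearance of $L_0\lvert\Z\rvert^n p\1$ from summing the $\PP[A\1_2(\z)^c]$ correction over all $\z$.
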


We now prove the above lemma. We have
\begin{align}
  &\quad\;\lVert\hat\vartheta\1-\frac{1}{L_0L_1L_2}\sum_{l_0,l_1,l_2}W_e^{\otimes n}(\cdot\vert\x^{l_0l_1},\y^{l_0l_2})\rVert\notag\\
  &\leq\lVert\hat\vartheta\1-\frac{1}{L_0L_1L_2}\sum_{l_0,l_1,l_2}W_e^{\otimes n}(\cdot\vert\x^{l_0l_1},\y^{l_0l_2})1_{E\1_0(\u^{l_0},\x^{l_0l_1},\y^{l_0l_2})}1_{F\1_0}\rVert\label{3I}\\
  &+\lVert\frac{1}{L_0L_1L_2}\sum_{l_0,l_1,l_2}W_e^{\otimes n}(\cdot\vert\x^{l_0l_1},\y^{l_0l_2})1_{E\1_0(\u^{l_0},\x^{l_0l_1},\y^{l_0l_2})}(1-1_{F\1_0})\rVert\label{3II}\\
  &+\lVert\frac{1}{L_0L_1L_2}\sum_{l_0,l_1,l_2}W_e^{\otimes n}(\cdot\vert\x^{l_0l_1},\y^{l_0l_2})1_{E\1_2(\u^{l_0},\x^{l_0l_1},\y^{l_0l_2})}(1-1_{F\1_2(\u^{l_0})})\rVert\label{3III}\\
  &+\lVert\frac{1}{L_0L_1L_2}\sum_{l_0,l_1,l_2}W_e^{\otimes n}(\cdot\vert\x^{l_0l_1},\y^{l_0l_2})1_{E\1_1(\u^{l_0},\x^{l_0l_1},\y^{l_0l_2})}(1-1_{F\1_1(\u^{l_0},\y^{l_0l_2})})\rVert\label{3IV}\\
  &+\lVert\frac{1}{L_0L_1L_2}\sum_{l_0,l_1,l_2}W_e^{\otimes n}(\cdot\vert\x^{l_0l_1},\y^{l_0l_2})(1-1_{E\1_1(\u^{l_0},\x^{l_0l_1},\y^{l_0l_2})})\rVert.\label{3V}
\end{align}
Due to \eqref{cond3,3}, we know that $\eqref{3I}\leq5\eps$.

Next we consider \eqref{3IV}. Due to \eqref{cond3,1} we have
\begin{align*}
  &\mathrel{\hphantom{=}}\eqref{3IV}\\
  &\leq1-\frac{1}{L_0L_1L_2}\sum_{l_0,l_1,l_2}W_e^{\otimes n}(E\1_2(\u^{l_0},\x^{l_0l_1},\y^{l_0l_2})\vert\x^{l_0l_1},\y^{l_0l_2})\\
  &\leq1-\frac{1-\eps}{L_0L_2}\sum_{l_0,l_2}\hat\vartheta\1_{\u^{l_0}\y^{l_0l_2}}(\Z^n)
\end{align*}
(we defined the general measure of a set in the notation section at the beginning of the paper). The support of $\vartheta\1_{\u^{l_0}\y^{l_0l_2}}$ is contained in $T_{Z\vert YU,2\lvert\X\rvert\delta}^n(\y^{l_0l_2},\u^{l_0})$, so by the definition of $F\1_1(\u^{l_0},\y^{l_0l_2})$ we obtain
\begin{equation}\label{thetabound}
  \hat\vartheta\1_{\u^{l_0}\y^{l_0l_2}}(\Z^n)\geq\vartheta\1_{\u^{l_0}\y^{l_0l_2}}(\Z^n)-\eps.
\end{equation}

\begin{lemma}\label{lemabschlemma}
  If $\u\in T_{U,\delta}^n$ and $\y\in T_{Y\vert U,\delta}^n(\u)$, then 
\[
  \vartheta\1_{\u\y}(\Z^n)\geq1-2\cdot2^{-nc\delta^2}.
\]
\end{lemma}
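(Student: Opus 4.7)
The plan is to unfold the definition
\[
  \vartheta\1_{\u\y}(\Z^n)=\EE\bigl[W_e^{\otimes n}(E\1_1(\u,X^{11},\y)\mid X^{11},\y)\bigm\vert U^1=\u\bigr]
\]
and lower-bound the inner measure on a high-probability event. Concretely, I would introduce the event $A:=\{X^{11}\in T_{X\vert YU,\delta}^{n}(\y,\u)\}$ of joint conditional typicality of the random input $X^{11}$ with the fixed $\y$ given $\u$. Restricting the expectation to $A$ yields
\[
  \vartheta\1_{\u\y}(\Z^n)\;\geq\;\EE\bigl[W_e^{\otimes n}(E\1_1(\u,X^{11},\y)\mid X^{11},\y)\cdot 1_A\bigm\vert U^1=\u\bigr],
\]
and the remaining task is (i) to show $\PP[A\mid U^1=\u]$ is close to $1$ and (ii) to show that on $A$ the conditional probability $W_e^{\otimes n}(E\1_1(\u,X^{11},\y)\mid X^{11},\y)$ is close to $1$.

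For step (i), the argument in the proof of Lemma \ref{lemgemtyp3} applies verbatim with $\y$ held fixed: the ratio $P_{X\vert U}^{\otimes n}(T_{X\vert YU,\delta}^{n}(\y,\u)^c\cap T_{X\vert U,\delta}^n(\u)\mid\u)/P_{X\vert U}^{\otimes n}(T_{X\vert U,\delta}^n(\u)\mid\u)$ is bounded via \eqref{untyp2} by $2^{-nc\delta^2}/(1-2^{-nc\delta^2})$, which for $n$ large is at most $2^{-nc\delta^2}+O(2^{-2nc\delta^2})$.

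For step (ii), the plan is to prove the inclusion $T_{Z\vert XYU,\delta}^n(X^{11},\y,\u)\subset E\1_1(\u,X^{11},\y)$ pointwise on $A$. The containment splits into two parts. First, on $A$ the triple $(\u,\y,X^{11})$ lies in $T_{UYX,3\delta}^n$, so by \eqref{probbed} applied to $B=Z$, $A=(X,Y,U)$, and using $H(Z\vert XYU)=H(Z\vert XY)$ (since $W_e$ does not depend on $U$), every $\z\in T_{Z\vert XYU,\delta}^n(X^{11},\y,\u)$ satisfies $W_e^{\otimes n}(\z\vert X^{11},\y)\leq 2^{-n(H(Z\vert XY)-f_2(\delta))}$, which is precisely the second defining condition of $E\1_1$. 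Second, summing the typicality count $N(x,y,u,z\vert X^{11},\y,\u,\z)$ over $x\in\X$ and using $\sum_xP_{XYUZ}(x,y,u,z)=P_{YUZ}(y,u,z)$ together with the triangle inequality loses at most a factor of $\lvert\X\rvert$ in the typicality slack, showing that $\z\in T_{Z\vert YU,\lvert\X\rvert\delta}^n(\y,\u)\subset T_{Z\vert YU,2\lvert\X\rvert\delta}^n(\y,\u)$; this is the first defining condition of $E\1_1$. Given this inclusion, applying \eqref{untyp2} again to $B=Z$, $A=(X,Y,U)$ yields $W_e^{\otimes n}(E\1_1(\u,X^{11},\y)^c\mid X^{11},\y)\leq 2^{-nc\delta^2}$ on $A$.

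Combining (i) and (ii) gives
\[
  \vartheta\1_{\u\y}(\Z^n)\;\geq\;(1-2^{-nc\delta^2})\PP[A\mid U^1=\u]\;\geq\;1-2\cdot 2^{-nc\delta^2}
\]
for $n$ large, which is the claimed bound. The main technical obstacle is verifying the type-counting step that upgrades $\delta$-typicality over $(X,Y,U)$ to $2\lvert\X\rvert\delta$-typicality over $(Y,U)$: this is what forces the otherwise unexplained factor $2\lvert\X\rvert$ appearing in the definition of $E\1_1$ in Subsection~\ref{sect:setup}, and it is the reason the crude bound in the statement of the lemma carries a leading constant $2$.
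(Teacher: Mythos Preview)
Your proposal is correct and follows essentially the same route as the paper: restrict to the event $A=\{X^{11}\in T_{X\vert YU,\delta}^n(\y,\u)\}$, establish the inclusion $T_{Z\vert XYU,\delta}^n(X^{11},\y,\u)\subset E\1_1(\u,X^{11},\y)$ on $A$ via the type-counting/triangle-inequality argument and \eqref{probbed}, then combine the two $2^{-nc\delta^2}$ losses from \eqref{untyp2}. One minor remark: the marginalization step actually produces slack $2\lvert\X\rvert\delta$ rather than $\lvert\X\rvert\delta$ (there are two error contributions, one from $\z$-typicality and one from $\x$-typicality, each summed over $\X$), which is why the paper defines $E\1_1$ with the factor $2\lvert\X\rvert$; your final inclusion in $T_{Z\vert YU,2\lvert\X\rvert\delta}^n(\y,\u)$ is nonetheless correct.
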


\begin{proof}
  First of all note that
\begin{align}
  &\mathrel{\hphantom{=}}\vartheta\1_{\u\y}(\Z^n)\notag\\
  &=\EE[W_e^{\otimes n}(E\1_1(\u,X^{11},\y)\vert X^{11},\y)\vert U^1=\u]\notag\\
  &\geq\EE[W_e^{\otimes n}(E\1_1(\u,X^{11},\y)\vert X^{11},\y);X^{11}\in T_{X\vert YU,\delta}^n(\y,\u)\vert U^1=\u].\label{rhs}
\end{align}
Now we claim that for $\x\in T_{X\vert YU,\delta}^n(\y,\u)$ 
\begin{equation}\label{claim3}
  T_{Z\vert YXU,\delta}^n(\y,\x,\u)\subset T_{Z\vert YU,2\lvert\X\rvert\delta}^n(\y,\u).
\end{equation}
To verify this, let $(z,y,u)\in\Z\times\Y\times\U$ and $\z\in T_{Z\vert YXU,\delta}^n(\y,\x,\u)$. Then
\begin{align*}
  &\quad\;\left\lvert\frac{1}{{n}}N(z,y,u\vert\z,\y,\u)-P_{Z\vert YU}(z\vert y,u)\frac{1}{n}N(y,u\vert\y,\u)\right\rvert\\
  &\leq\sum_{x}\left\lvert\frac{1}{{n}}N(z,y,x,u\vert\z,\y,\x,\u)-W(z\vert x,y)\frac{1}{n}N(y,x,u\vert\y,\x,\u)\right\rvert\\
  &\mathrel{\hphantom{\leq}}+\sum_{x}W(z\vert x,y)\left\lvert\frac{1}{n}N(y,x,u\vert\y,\x,\u)-P_{X\vert YU}(x\vert y,u)\frac{1}{n}N(y,u\vert\y,\u)\right\rvert\\
  &\leq2\lvert\X\rvert\delta.
\end{align*}
This proves \eqref{claim3}. Due to the choice of $f_2(\delta)$ and to \eqref{probbed}, we thus see that $T_{Z\vert YXU,\delta}^n(\y,\x,\u)$ is contained in $E\1_1(\u,\x,\y)$ for $\x\in T_{X\vert YU,\delta}^n(\y,\u)$, and we have that \eqref{rhs} is lower-bounded by 
\begin{equation}\label{zwischenequation3}
  \EE[W_e^{\otimes n}(T_{Z\vert YXU,\delta}^n(\y,X^{11},\u)\vert X^{11},\y);X^{11}\in T_{X\vert YU,\delta}^n(\y,\u)\vert U^1=\u].
\end{equation}
Further, as in the proof of Lemma \ref{lemgemtyp3} one sees that
\begin{equation}\label{asin3}
  \PP[X^{11}\in T_{X\vert YU,\delta}^n(\y,\u)\vert U^1=\u]\geq1-\frac{2^{-nc\delta^2}}{1-2^{-nc\delta^2}}.
\end{equation} 
Due to \eqref{asin3} and \eqref{untyp2}, we can lower-bound \eqref{zwischenequation3} for sufficiently large $n$ by 
\[
  (1-2^{-nc\delta^2})\cdot\left(1-\frac{2^{-nc\delta^2}}{1-2^{-nc\delta^2}}\right)\geq 1-2\cdot2^{-nc\delta^2},
\]
which proves Lemma \ref{lemabschlemma}.\qed
\end{proof}
Using \eqref{thetabound} and Lemma \ref{lemabschlemma} we can conclude that
\[
  \eqref{3IV}\leq2(\eps+2^{-nc\delta^2}).
\]

One starts similarly for \eqref{3III}. We have by \eqref{cond3,2}
\begin{align*}
  \eqref{3III}&\leq1-\frac{1}{L_0L_1L_2}\sum_{l_0,l_1,l_2}W_e^{\otimes n}(E\1_0(\u^{l_0},\x^{l_0l_1},\y^{l_0l_2})\vert\x^{l_0l_1},\y^{l_0l_2})\\
  &\leq1-\frac{(1-3\eps)}{L_0}\sum_{l_0}\hat\vartheta\1_{\u^{l_0}}(\Z^n).
\end{align*}
As the support of $\vartheta\1_{\u^{l_0}}$ is contained in $T_{Z\vert U,3\lvert\Y\rvert\lvert\X\rvert\delta}^n(\u^{l_0})$, we can lower-bound $\hat\vartheta\1_{\u^{l_0}}(\Z^n)$ by $\vartheta\1_{\u^{l_0}}(\Z^n)-\eps$. Using \eqref{thetabound} and Lemma \ref{lemabschlemma}, we have
\begin{equation}\label{thetabound1}
  \vartheta\1_{\u^{l_0}}(\Z^n)=\EE[\hat\vartheta\1_{\u^{l_0}Y^{11}}(\Z^n)\vert U^1=\u^{l_0}]\geq1-2\cdot2^{-nc\delta^2}-\eps,
\end{equation}
so we conclude
\[
  \eqref{3III}\leq5\eps+2\cdot2^{-nc\delta^2}.
\]

For \eqref{3II}, one has by \eqref{cond3,3}
\begin{align*}
  \eqref{3II}&\leq1-\frac{1}{L_0L_1L_2}\sum_{l_0,l_1,l_2}W_e^{\otimes n}(E\1_0(\u^{l_0},\x^{l_0l_1},\y^{l_0l_2})\cap F\1_0\vert\x^{l_0l_1},\y^{l_0l_2})\\
  &\leq1-(1-5\eps)\hat\vartheta\1(F\1_0).
\end{align*}
It remains to lower-bound $\hat\vartheta\1(F\1_0)$. Observe that the support of $\vartheta\1$ is restricted to $T_{Z,4\lvert\Y\rvert\lvert\X\rvert\lvert\U\rvert\delta}^n$, so due to the definition of $F\1_0$, one has $\hat\vartheta\1(F\1_0)=\vartheta\1(F\1_0)\geq\vartheta\1(\Z^n)-\eps$. Further,
\begin{align*}
  \vartheta\1(\Z^n)&=\sum_{\z\in\Z^n}\hat\EE\1_\z[\hat\theta\1_{U^1}(\z)]\\
  &\geq\EE[\theta\1_{U^1}(\Z^n)]-\sum_{\z\in\Z^n}\PP[A\1_2(\z)^c]\\
  &=\EE[\theta\1_{U^1}(\Z^n)]-L_0\lvert\Z\rvert^np\1.
\end{align*}
In \eqref{thetabound1}, the integrand of $\EE[\theta\1_{U^1}(\Z^n)]$ was lower-bounded by $1-2\cdot2^{-nc\delta^2}-\eps$. We conclude
\[
  \eqref{3II}\leq7\eps+2\cdot2^{-nc\delta^2}+L_0\lvert\Z\rvert^np\1.
\]

Finally, we use condition \eqref{cond3,*} to bound \eqref{3V}. We have
\begin{align}
  &\mathrel{\hphantom{=}}\eqref{3V}\\
  &=\frac{1}{L_0L_1L_2}\sum_{l_0,l_1,l_2}W_e^{\otimes n}(E\1_1(\u^{l_0},\x^{l_0l_1},\y^{l_0l_2})^c\vert \x^{l_0l_1},\y^{l_0l_2})\notag\\
  &=\frac{1}{L_0L_2}\sum_{l_0,l_2}\biggl(\notag\\
  &\quad\frac{1}{L_1}\sum_{l_1:\x^{l_0l_1}\in T_{X\vert YU,\delta}^n(\y^{l_0l_2},\u^{l_0})}W_e^{\otimes n}(E\1_1(\u^{l_0},\x^{l_0l_1},\y^{l_0l_2})^c\vert \x^{l_0l_1},\y^{l_0l_2})\label{3l_0sum1}\\
  &+\frac{1}{L_1}\sum_{l_1:\x^{l_0l_1}\notin T_{X\vert YU,\delta}^n(\y^{l_0l_2},\u^{l_0})}W_e^{\otimes n}(E\1_1(\u^{l_0},\x^{l_0l_1},\y^{l_0l_2})^c\vert \x^{l_0l_1},\y^{l_0l_2})\biggr).\label{3l_0sum2}
\end{align}
For every $(l_0,l_2)$, we use $T_{Z\vert YXU,\delta}^n(\y,\x,\u)\subset E\1_1(\u,\x,\y)$ for $(\u,\x,\y)\in T_{U,\delta}^n\times T_{Y\vert U,\delta}^n(\u)\times T_{X\vert YU,\delta}^n(\y,\u)$ as shown in the proof of Lemma \ref{lemabschlemma} to upper-bound the term in \eqref{3l_0sum1} by $2^{-nc\delta^2}$. For \eqref{3l_0sum2}, we know from assumption \eqref{cond3,*} that it is at most $1-(1-\eps)(1-2\cdot2^{-nc\delta^2})$. Thus
\[
  \eqref{3V}\leq 2^{-nc\delta^2}+(1-\eps)(1-2\cdot2^{-nc\delta^2})\leq\eps+3\cdot2^{-nc\delta^2}.
\]

Collecting the bounds on \eqref{3I}-\eqref{3V}, we obtain a total upper bound of
\[
  20eps+9\cdot2^{-nc\delta^2}+L_0\lvert\Z\rvert^np\1.
\]
This finishes the proof of Lemma \ref{sec3}.

\subsubsection{Bounds for Case 2:}

Now we specialize to the case that $L_2=1$, but $L_0$ and $L_1$ arbitrary. This reduces the number of Chernoff-type estimates needed by one. Lemma \ref{AW3} carries over, Lemma \ref{joint} is not needed, but Lemma \ref{theucase} changes. We write $Y^{l_01}=:Y^{l_0}$. The definitions of $E\1_1(\u,\x,\y),F\1_1(\u,\y)$ and $\vartheta\1_{\u\y}$ carry over to this case, we just call them $E\2_1(\u,\x,\y),F\2_1(\u,\y)$ and $\vartheta\2_{\u\y}$. Further we define 
\[
  E\2_0(\u,\x,\y):=E\2_1(\u,\x,\y)\cap F\2_1(\u,\y).
\]
For every $l_0$, let $A\2_1(l_0,\z):=A\1_1(l_0,1,\z)$ and we set $A\2_1(\z):=A\2_1(1,\z)\cap\ldots\cap A\2_1(L_0,\z)$. We define for every $\z$ a new probability measure by $\hat\PP\2_\z:=\PP[\cdot\vert A\2_1(\z)]$. Let 
\[
  \vartheta\2(\z):=\hat\EE\2_\z[\hat\vartheta\2_{U^1Y^1}(\z)].
\]
Further let
\[
  F\2_0:=\{\z\in T_{Z,4\lvert\Y\rvert\lvert\X\rvert\lvert\U\rvert\delta}^{n}:\vartheta\2(\z)\geq\eps\lvert T_{Z,\delta}^{n}\rvert^{-1}\}
\]
and
\[
  \hat\vartheta\2=\vartheta\2\cdot 1_{F\2_0}.
\]

\begin{lemma}\label{joint2}
  Let $\z\in F\2_0$. Let $A\2_0(\z)$ be the event
\[
  \frac{1}{L_0L_1}\sum_{l_0,l_1}W_e^{\otimes {n}}(\z\vert X^{l_0l_1},Y^{l_0})1_{E\2_0(U^{l_0},X^{l_0l_1},Y^{l_0})}(\z)\in[(1\pm3\eps)\hat\vartheta\2(\z)].
\]
Then for $\eps$ sufficiently small and $n$ sufficiently large,
\begin{align*}
  \PP[A\2_0(\z)^c]&\leq 2L_0\exp\left(-L_1\cdot\frac{\eps^32^{-{n}(I(Z\wedge X\vert YU)+f_1(\delta)+f_2(\delta))}}{2\ln 2}\right)\\
  &\hphantom{\mathrel{\leq}}+2\exp\left(-L_0\cdot\frac{\eps^32^{-{n}(I(Z\wedge YU)+f_1(\delta)+f_6(\delta))}}{4\ln 2}\right).
\end{align*}
\end{lemma}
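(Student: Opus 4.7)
The plan is to follow the template of Lemma \ref{theucase}, but with the $L_2$-layer collapsed. The first step is the decomposition
\[
  \PP[A\2_0(\z)^c]\leq\hat\PP\2_\z[A\2_0(\z)^c]+\PP[A\2_1(\z)^c].
\]
For the second summand, note that $A\2_1(\z)=\bigcap_{l_0}A\2_1(l_0,\z)$ where the events $A\2_1(l_0,\z)$ are independent across $l_0$ (since the $(U^{l_0},\F_{l_0})$ are independent). A union bound together with Lemma \ref{AW3} (applied with $l_2=1$) yields
\[
  \PP[A\2_1(\z)^c]\leq 2L_0\exp\left(-L_1\cdot\frac{\eps^32^{-n(I(Z\wedge X\vert YU)+f_1(\delta)+f_2(\delta))}}{2\ln 2}\right),
\]
which is the first term in the claimed bound.

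For the first summand, I condition on the event $A\2_1(\z)$ and consider, under $\hat\PP\2_\z$, the random variables
\[
  \tilde W\2_\z(l_0):=\frac{1}{L_1}\sum_{l_1}W_e^{\otimes n}(\z\vert X^{l_0l_1},Y^{l_0})1_{E\2_0(U^{l_0},X^{l_0l_1},Y^{l_0})}(\z),\quad l_0\in[L_0].
\]
The key observations are: (i) under $\hat\PP\2_\z$ these variables remain independent across $l_0$, since the defining event $A\2_1(\z)$ factorizes as an intersection over $l_0$ of events depending only on $(U^{l_0},\F_{l_0})$; (ii) on $A\2_1(\z)$ each summand is bounded above by $(1+\eps)\hat\vartheta\2_{U^{l_0}Y^{l_0}}(\z)$, which by the same computation as in Lemma \ref{joint} using \eqref{untyp2} and \eqref{probbed} is at most $(1-2^{-nc\delta^2})^{-1}2^{-n(H(Z\vert YU)-f_1(\delta))}$; and (iii) the conditional mean satisfies $\hat\EE\2_\z[\tilde W\2_\z(l_0)]\in[(1\pm\eps)\hat\vartheta\2(\z)]$, hence so does the arithmetic mean $\bar\mu$.

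Applying the Chernoff inequality (Lemma \ref{Chernie}) to the independent, bounded variables $\tilde W\2_\z(l_0)$ with mean $\bar\mu$ then gives, for $\eps$ sufficiently small and $n$ large, a bound of the form
\[
  \hat\PP\2_\z[A\2_0(\z)^c]\leq 2\exp\left(-L_0\cdot\frac{\eps^2\cdot\hat\vartheta\2(\z)\cdot 2^{n(H(Z\vert YU)-f_1(\delta))}}{C\ln 2}\right)
\]
for an absolute constant $C$. Since $\z\in F\2_0$ we have $\hat\vartheta\2(\z)\geq\eps\lvert T_{Z,\delta}^n\rvert^{-1}$, and by \eqref{cardunbed} this is at least $\eps\cdot 2^{-n(H(Z)+f_6(\delta))}$ (this is where $f_6(\delta)$ enters, and it is the only place where the choice of typicality-parameter in the definition of $F\2_0$ matters). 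Combining the two exponential factors into $2^{-n(I(Z\wedge YU)+f_1(\delta)+f_6(\delta))}$ and absorbing the extra factor of $\eps$ from the lower bound on $\hat\vartheta\2(\z)$ into $\eps^3$ yields the second term in the claim.

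The main obstacle is bookkeeping rather than any new idea: one has to verify carefully that conditioning on $A\2_1(\z)$ preserves the independence across $l_0$, and one has to track the $\eps$-dependence so that the $\eps^2$ from Chernoff combines with the $\eps$ from $\z\in F\2_0$ into the claimed $\eps^3$, while the typicality parameters line up so that only $f_1$ and $f_6$ appear in the second exponent. Adding the two bounds completes the proof.\qed
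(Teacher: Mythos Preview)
Your proposal is correct and follows essentially the same route as the paper's proof: the same decomposition $\PP[A\2_0(\z)^c]\leq\hat\PP\2_\z[A\2_0(\z)^c]+\PP[A\2_1(\z)^c]$, the union bound with Lemma~\ref{AW3} for the second term, and the Chernoff argument under $\hat\PP\2_\z$ with the same upper bound on $\hat\vartheta\2_{\u\y}(\z)$ and the lower bound on $\hat\vartheta\2(\z)$ from $\z\in F\2_0$ for the first. The bookkeeping you flag (preservation of independence after conditioning on the product event $A\2_1(\z)$, and the $\eps^2\cdot\eps=\eps^3$ combination) is exactly what the paper does as well.
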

\begin{proof}
We have
\begin{equation}\label{totwahr}
  \PP[A\2_0(\z)^c]\leq\hat\PP\2_\z[A\2_0(\z)^c]+\PP[A\2_1(\z)^c].
\end{equation}
By Lemma \ref{AW3}, we know that
\begin{equation}\label{theucaseabsch1}
  \PP[A\2_1(\z)^c]\leq2L_0\exp\left(-L_1\cdot\frac{\eps^32^{-{n}(I(Z\wedge X\vert YU)+f_1(\delta)+f_2(\delta))}}{2\ln 2}\right).
\end{equation}
In order to bound $\PP\2_\z[A\2_0(\z)]$, note that the sets $A\2_1(1,\z),\ldots,A\2_1(L_0,\z)$ are independent with respect to $\PP$. Thus under $\hat\PP\2_\z$, the random variables 
\begin{align*}
  \tilde W\2_\z(l_0):=\frac{1}{L_1}\sum_{l_1}W_e^{\otimes {n}}(\z\vert X^{l_0l_1},Y^{l_0})1_{E\2_0(U^{l_0},X^{l_0l_1},Y^{l_0})}(\z)&&(l_0\in[L_0])
\end{align*}
retain their independence and are upper-bounded by 
\[
  (1+\eps)\max_{\u\in T_{U,\delta}^n}\max_{\y\in T_{Y\vert U,\delta}^n(\u)}\hat\vartheta\2_{\u\y}(\z).
\]
We can further bound this last term as follows: for $\u\in T_{U,\delta}^n$ and $\y\in T_{Y\vert U,\delta}^n(\u)$ one obtains by \eqref{untyp2} and \eqref{probbed}
\begin{align*}
  \hat\vartheta\2_{\u\y}(\z)&\leq\EE[W_e^{\otimes n}(\z\vert X^{11},\y)\vert U^{l_0}=\u]\\
  &\leq \frac{1}{1-2^{-nc\delta^2}}P^{\otimes n}_{Z\vert YU}(\z\vert\y,\u)\\
  &\leq\frac{1}{1-2^{-nc\delta^2}}2^{-n(H(Z\vert YU)-f_1(\delta))}.
\end{align*}
Observing that $\hat\EE_\z[\tilde W\2_\z(1)]\in[(1\pm\eps)\hat\vartheta\2(\z)]$ and applying Lemma \ref{Chernie} in the usual way  yields
\begin{align*}
  \hat\PP_\z[A\2_0(\z)]&\leq2\exp\left(-L_0\cdot\frac{\eps^2(1-2^{-nc\delta^2})2^{n(H(Z\vert YU)-f_1(\delta))}(1-\eps)\hat\vartheta\2(\z)}{2(1+\eps)\ln 2}\right)\\
  &\leq2\exp\left(-L_0\cdot\frac{\eps^32^{-n(I(Z\wedge YU)+f_1(\delta)+f_6(\delta))}}{4\ln 2}\right)
\end{align*}
if $\eps$ is sufficiently small and $n$ sufficiently large. Inserting this and \eqref{theucaseabsch1} in \eqref{totwahr} completes the proof.\qed
\end{proof}

Again we note that a result analogous to Lemma \ref{joint2} holds where the roles of $X$ and $Y$ are exchanged. Setting $A\2_*(l_0):=A\1_*(l_0,1)$, we denote the events corresponding to such an exchange by $A\2_*(l_0)'$ and $A\2_1(l_0,\z)',A\2_0(\z)'$.

\subsubsection{Secrecy for Case 2:}

\begin{lemma}\label{sec2}
  Denote by $p\2$ the bound on $\PP[A\2_1(l_0,\z)^c]$ derived in Lemma \ref{AW3}. Let $\{\u^{l_0},\x^{l_0l_1},\y^{l_0}:(l_0,l_1,l_2)\in[L_0]\times[L_1]\times[L_2]\}$ be a realization of $\F$ satisfying the conditions of
\begin{align}
  \bigcap_{l_0}&A\2_*(l_0),\label{cond2,*}\\
  \bigcap_{l_0}\bigcap_{\z\in\Z^n}&A\2_1(l_0,\z),\label{cond2,1}\\
  \bigcap_{\z\in F\2_0}&A\2_0(\z).\label{cond2,2}
\end{align}
Then
\[
  \lVert\hat\vartheta\2-\frac{1}{L_0L_1}\sum_{l_0,l_1}W_e^{\otimes n}(\cdot\vert\x^{l_0l_1},\y^{l_0})\rVert\leq9\eps+7\cdot2^{-nc\delta^2}+L_0\lvert\Z\rvert^np\2.
\]
The same inequality is true if we require conditions \textnormal{(\ref{cond2,*}$'$)}-\textnormal{(\ref{cond2,2}$'$)} which contain the primed equivalents of \eqref{cond2,*}-\eqref{cond2,2} defined at the end of the previous paragraph.
\end{lemma}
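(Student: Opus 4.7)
My approach is to run the proof of Lemma \ref{sec3} (Case 1) essentially verbatim, exploiting the simplification that $L_2=1$ removes one intermediate Chernoff layer: there is no inner average over a second $Y$-index at fixed $l_0$, and consequently no auxiliary set $F\1_2(\u)$ nor its associated summand in the decomposition.

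First, I apply the triangle inequality
\[
  \lVert \hat\vartheta\2 - \mu\rVert \leq (\text{I}) + (\text{II}) + (\text{III}) + (\text{IV}),
\]
where $\mu := \tfrac{1}{L_0 L_1}\sum_{l_0,l_1} W_e^{\otimes n}(\,\cdot\,\vert\x^{l_0l_1},\y^{l_0})$ and the four terms arise by successively peeling off the indicators $1_{F\2_0}$, $1_{F\2_1(\u,\y)}$ (hidden in the relation $E\2_0 = E\2_1 \cap F\2_1$) and $1_{E\2_1}$ from $\mu$. These are the direct analogues of (\ref{3I}), (\ref{3II}), (\ref{3IV}) and (\ref{3V}); the analogue of (\ref{3III}) is absent because there is no $F\1_2$ layer to peel. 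I then bound each summand in parallel with the Case 1 proof: term (I) is controlled by the two-sided Chernoff estimate \eqref{cond2,2}, which gives an $O(\eps)$ contribution after summing $\hat\vartheta\2(\z)$ over $\z \in F\2_0$; term (III) is controlled by the $A\2_1$ condition \eqref{cond2,1}, combined with $\hat\vartheta\2_{\u\y}(\Z^n) \geq \vartheta\2_{\u\y}(\Z^n) - \eps$ from the definition of $F\2_1(\u,\y)$ and with Lemma \ref{lemabschlemma}, which transfers to Case 2 without modification because $E\2_1$ coincides with $E\1_1$; and term (IV) is controlled by condition \eqref{cond2,*} together with the inclusion $T_{Z\vert YXU,\delta}^n(\y,\x,\u) \subset E\2_1(\u,\x,\y)$ for typical inputs, established inside the proof of Lemma \ref{lemabschlemma}.

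The step I expect to require the most care, exactly as in Case 1, is the bound on (II), which reduces to $1 - (1-3\eps)\hat\vartheta\2(F\2_0)$. The $L_0\lvert\Z\rvert^n p\2$ term enters here at the single place where one passes from the conditional expectation $\hat\EE\2_\z$ back to $\EE$, using $\hat\EE\2_\z[X] \geq \EE[X] - \PP[A\2_1(\z)^c]$ for $0 \leq X = \hat\vartheta\2_{U^1 Y^1}(\z) \leq 1$ and the union bound $\PP[A\2_1(\z)^c] \leq L_0 p\2$; summing over $\z$ yields $\vartheta\2(\Z^n) \geq \EE[\hat\vartheta\2_{U^1 Y^1}(\Z^n)] - L_0\lvert\Z\rvert^n p\2$, and the remaining pointwise estimate $\EE[\hat\vartheta\2_{U^1 Y^1}(\Z^n)] \geq 1 - 2\cdot 2^{-nc\delta^2} - \eps$ follows by combining Lemma \ref{lemabschlemma} with the threshold in the definition of $F\2_1(\u,\y)$. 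Assembling the four bounds gives the claimed estimate $9\eps + 7\cdot 2^{-nc\delta^2} + L_0\lvert\Z\rvert^n p\2$. Finally, the primed version is obtained by the identical argument with $(\ref{cond2,*}')$--$(\ref{cond2,2}')$ replacing their unprimed counterparts, which is legitimate since the definitions of $F\2_1(\u,\y)$, $F\2_0$ and $\vartheta\2$ and the underlying Chernoff estimates are all symmetric under the exchange of $X$ and $Y$.
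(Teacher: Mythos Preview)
Your proposal is correct and follows essentially the same route as the paper's own proof: the same four-term triangle-inequality decomposition (the paper's \eqref{2I}--\eqref{2IV}), the same identification that the $F\1_2$ layer (the analogue of \eqref{3III}) disappears because $L_2=1$, and the same mechanism by which the $L_0\lvert\Z\rvert^n p\2$ correction enters through the passage from $\hat\EE\2_\z$ back to $\EE$ when lower-bounding $\vartheta\2(\Z^n)$. Your attribution of the individual bounds to conditions \eqref{cond2,*}, \eqref{cond2,1}, \eqref{cond2,2} and to Lemma~\ref{lemabschlemma} matches the paper's argument step for step.
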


We use this subsection to prove the above lemma. We have
\begin{align}
  &\quad\;\lVert\hat\vartheta\2-\frac{1}{L_0L_1}\sum_{l_0,l_1}W_e^{\otimes n}(\cdot\vert\x^{l_0l_1},\y^{l_0})\rVert\notag\\
  &\leq\lVert\hat\vartheta\2-\frac{1}{L_0L_1}\sum_{l_0,l_1}W_e^{\otimes n}(\cdot\vert\x^{l_0l_1},\y^{l_0})1_{E\2_0(\u^{l_0},\x^{l_0l_1},\y^{l_0})}1_{F\2_0}\rVert\label{2I}\\
  &+\lVert\frac{1}{L_0L_1}\sum_{l_0,l_1}W_e^{\otimes n}(\cdot\vert\x^{l_0l_1},\y^{l_0})1_{E\2_0(\u^{l_0},\x^{l_0l_1},\y^{l_0})}(1-1_{F\2_0})\rVert\label{2II}\\
  &+\lVert\frac{1}{L_0L_1}\sum_{l_0,l_1}W_e^{\otimes n}(\cdot\vert\x^{l_0l_1},\y^{l_0})1_{E\2_1(\u^{l_0},\x^{l_0l_1},\y^{l_0})}(1-1_{F\2_1(\u^{l_0},\y^{l_0})})\rVert\label{2III}\\
  &+\lVert\frac{1}{L_0L_1}\sum_{l_0,l_1}W_e^{\otimes n}(\cdot\vert\x^{l_0l_1},\y^{l_0})(1-1_{E\2_1(\u^{l_0},\x^{l_0l_1},\y^{l_0})})\rVert.\label{2IV}
\end{align}
Due to \eqref{cond2,2}, we know that $\eqref{2I}\leq\eps$.

Next we consider \eqref{2III}. Due to \eqref{cond2,1}, we have
\begin{align*}
  \eqref{2III}&\leq1-\frac{1}{L_0L_1}\sum_{l_0,l_1}W_e^{\otimes n}(E\2_0(\u^{l_0},\x^{l_0l_1},\y^{l_0})\vert\x^{l_0l_1},\y^{l_0})\\
  &\leq1-\frac{1-\eps}{L_0}\sum_{l_0}\hat\vartheta\2_{\u^{l_0}\y^{l_0}}(\Z^n).
\end{align*}
As for Case 3, one lower-bounds $\hat\vartheta\2_{\u^{l_0}\y^{l_0}}(\Z^n)\geq\vartheta\2_{\u^{l_0}\y^{l_0}}(\Z^n)-\eps$ by $1-2\cdot2^{-nc\delta^2}-\eps$. Thus we can conclude that
\[
  \eqref{2III}\leq2(\eps+2^{-nc\delta^2}).
\]

For \eqref{2II}, we have by \eqref{cond2,1}
\begin{align*}
  \eqref{2II}&\leq1-\frac{1}{L_0L_1}\sum_{l_0,l_1}W_e^{\otimes n}(E\2_0(\u^{l_0},\x^{l_0l_1},\y^{l_0})\cap F\2_0\vert\x^{l_0l_1},\y^{l_0})\\
  &\leq1-(1-3\eps)\hat\vartheta\2(F\2_0).
\end{align*}
It remains to lower-bound $\hat\vartheta\2(F\2_0)\geq\vartheta\2(\Z^n)-\eps$. As in the lower bound on $\theta\1(\Z^n)$ above, one obtains the bound
\[
  \vartheta\2(\Z^n)\geq1-2\cdot2^{-nc\delta^2}-\eps.
\]
Thus we conclude
\[
  \eqref{2II}\leq5\eps+2\cdot2^{-nc\delta^2}.
\]

Finally, we use condition \eqref{cond2,*} to bound \eqref{2IV}. We have
\begin{align}
  \eqref{2IV}&=\frac{1}{L_0L_1}\sum_{l_0,l_1}W_e^{\otimes n}(E\2_1(\u^{l_0},\x^{l_0l_1},\y^{l_0})\vert \x^{l_0l_1},\y^{l_0})\notag\\
  =&\,\frac{1}{L_0}\sum_{l_0}\biggl(\frac{1}{L_1}\sum_{l_1:\x^{l_0l_1}\in T_{X\vert YU,\delta}^n(\y^{l_0},\u^{l_0})}W_e^{\otimes n}(E\2_1(\u^{l_0},\x^{l_0l_1},\y^{l_0})\vert \x^{l_0l_1},\y^{l_0})\label{2l_0sum1}\\
  &+\frac{1}{L_1}\sum_{l_1:\x^{l_0l_1}\notin T_{X\vert YU,\delta}^n(\y^{l_0},\u^{l_0})}W_e^{\otimes n}(E\2_1(\u^{l_0},\x^{l_0l_1},\y^{l_0})\vert \x^{l_0l_1},\y^{l_0})\biggr).\label{2l_0sum2}
\end{align}
For every $l_0$, the summand appearing in \eqref{2l_0sum1} can be upper-bounded by $2^{-nc\delta^2}$. By assumption \eqref{cond2,*}, \eqref{2l_0sum2} is upper-bounded by $1-(1-\eps)(1-2\cdot2^{-nc\delta^2})$. Thus
\[
  \eqref{2IV}\leq\eps+3\cdot2^{-nc\delta^2}.
\]

Collecting the bounds for \eqref{2I}-\eqref{2IV}, we obtain a total upper bound of
\[
  9\eps+7\cdot2^{-nc\delta^2}+L_0\lvert\Z\rvert^np\2.
\]
This finishes the proof of Lemma \ref{sec2}.

\subsubsection{Bounds for Case 3:}

Now we treat the case $L_1=L_2=1$. Lemma \ref{lemgemtyp1} is the analog of Lemma \ref{lemgemtyp3}, the proofs are analogous.

\begin{lemma}\label{lemgemtyp1}
  Let the event $A\3_*$ be defined by
\begin{multline*}\label{gemtyp}
  A\3_*:=\bigl\{\lvert\{l_0\in[L_0]:X^{l_0}\in T_{X\vert YU,\delta}^{n}(Y^{l_0},U^{l_0})\}\rvert\geq(1-\eps)(1-2\cdot2^{-{n}c_1\delta^2})L_0\}.
\end{multline*}
Then
\[
  \PP[(A\3_*)^c]\leq\exp\left(-L_0\cdot\frac{\eps^2(1-2\cdot2^{-{n}c_1\delta^2})}{2\ln 2}\right).
\]
\end{lemma}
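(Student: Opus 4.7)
\textbf{Proof plan for Lemma \ref{lemgemtyp1}.} The statement is essentially Lemma \ref{lemgemtyp3} restricted to the special case $L_1 = L_2 = 1$ with $L_0$ playing the role that $L_1$ previously played. My plan is to reproduce the two-step argument of Lemma \ref{lemgemtyp3}: first bound the single-sample probability that the $X$-component is jointly typical with $(Y,U)$, then apply the Chernoff inequality of Lemma \ref{Chernie} to the $L_0$ independent copies.

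First, I would show that for every fixed $\u \in T_{U,\delta}^n$ and $\y \in T_{Y\vert U,\delta}^n(\u)$, conditional on $\{U^{l_0} = \u, Y^{l_0} = \y\}$, exactly the estimate from the proof of Lemma \ref{lemgemtyp3} applies verbatim to $X^{l_0}$: using \eqref{untyp2} together with the definition of $P_{X\vert U}^n$ as a normalized restriction to the conditionally typical set, one obtains
\[
  \PP[X^{l_0} \in T_{X\vert YU,\delta}^{n}(\y,\u) \,\vert\, Y^{l_0} = \y, U^{l_0} = \u] \geq 1 - 2\cdot 2^{-nc\delta^2}.
\]
Since this bound is uniform in $(\u,\y)$ over the supports of $U^{l_0}$ and of $Y^{l_0}$ given $U^{l_0}$, by averaging it persists for the unconditional probability $\mu := \PP[X^{l_0} \in T_{X\vert YU,\delta}^n(Y^{l_0},U^{l_0})]$, giving $\mu \geq 1 - 2\cdot 2^{-nc\delta^2}$.

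Second, I would observe that because the families $(U^{l_0},\F_{l_0})$ are independent across $l_0 \in [L_0]$ by construction (see \eqref{ffamily}), the indicator random variables
\[
  Z_{l_0} := 1_{T_{X\vert YU,\delta}^{n}(Y^{l_0},U^{l_0})}(X^{l_0}), \qquad l_0 \in [L_0],
\]
are i.i.d.\ $\{0,1\}$-valued with common mean $\mu$. Applying the lower-tail bound of Lemma \ref{Chernie} with $b = 1$ to these variables yields
\[
  \PP\!\left[ \frac{1}{L_0}\sum_{l_0} Z_{l_0} < (1-\eps)\mu \right] \leq \exp\!\left( -L_0 \cdot \frac{\eps^2 \mu}{2\ln 2} \right) \leq \exp\!\left( -L_0 \cdot \frac{\eps^2(1-2\cdot 2^{-nc_1\delta^2})}{2\ln 2} \right),
\]
which is exactly the claimed bound. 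The event $A\3_*$ is precisely $\{\sum_{l_0} Z_{l_0} \geq (1-\eps)(1-2\cdot 2^{-nc_1\delta^2})L_0\}$, and is implied by $\sum_{l_0} Z_{l_0} \geq (1-\eps)\mu L_0$ by the lower bound on $\mu$.

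There is no real obstacle here: the argument is entirely parallel to Lemma \ref{lemgemtyp3}, with $L_0$ replacing $L_1$ because now it is the outer index that supplies the independent samples. The only cosmetic point to be careful about is that in Lemma \ref{lemgemtyp3} the conditioning fixes $U^{l_0}$ and $Y^{l_0 l_2}$ while here the conditioning fixes $U^{l_0}$ and $Y^{l_0}$, but since $X^{l_0}$ and $Y^{l_0}$ are conditionally independent given $U^{l_0}$ this is immaterial and the single-sample typicality bound transfers unchanged.
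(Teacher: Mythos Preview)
Your proposal is correct and follows exactly the approach the paper intends: the paper simply states that the proof is analogous to that of Lemma \ref{lemgemtyp3}, and you have carried out precisely that analogy, correctly noting that the independence across $l_0$ of the families $(U^{l_0},X^{l_0},Y^{l_0})$ makes the indicators $Z_{l_0}$ i.i.d.\ without any conditioning, after which Lemma \ref{Chernie} applies directly.
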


Let 
\[
  E\3(\x,\y):=\{\z\in T_{Z,4\lvert\Y\rvert\lvert\X\rvert\lvert\U\rvert\delta}^{n}:W_e^{\otimes {n}}(\z\vert \x,\y)\leq2^{-{n}(H(Z\vert XY)-f_2(\delta))}\},
\]
where $f_2(\delta)=\tau(P_{XYZ},3\delta,\delta)$. Let 
\[
  \vartheta\3(\z):=\EE[W_e^{\otimes n}(\z\vert X^{1},Y^{1})1_{E_1(X^{1},Y^{1})}(\z)]
\]
and for
\[
  F\3:=\{\z\in T_{Z,4\lvert\Y\rvert\lvert\X\rvert\lvert\U\rvert\delta}^{n}:\vartheta(\z)\geq \eps\lvert T_{Z,\delta}^{n}\rvert^{-1}\}
\]
define the measure
\[
  \hat\vartheta\3:=\hat\vartheta\3\cdot1_{F\3}.
\]

\begin{lemma}\label{AW1}
  Let $\z\in F\3$. Let $A\3(\z)$ be the event that
\[
  \frac{1}{L_0}\sum_{l_0}W_e^{\otimes {n}}(\z\vert X^{l_0},Y^{l_0})1_{E\3(X^{l_0},Y^{l_0})}(\z)\in[(1\pm\eps)\hat\vartheta\3(\z)].
\]
Then for $f_1(\delta)=\tau(P_{UYZ},4\lvert\Y\rvert\lvert\X\rvert\lvert\U\rvert\delta,\delta)$,
\begin{align*}
  \PP[A\3(\z)^c]\leq 2\exp\left(-L_0\cdot\frac{\eps^32^{-{n}(I(Z\wedge XY)+f_1(\delta)+f_2(\delta))}}{2\ln 2}\right).
\end{align*}
\end{lemma}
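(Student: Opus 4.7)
The plan is to apply the Chernoff bound of Lemma~\ref{Chernie} directly to the $L_0$ random variables
\[
  Z_{l_0} := W_e^{\otimes n}(\z \mid X^{l_0}, Y^{l_0}) \, 1_{E\3(X^{l_0}, Y^{l_0})}(\z), \qquad l_0 \in [L_0].
\]
First I would observe that these are independent, since in the definition \eqref{ffamily} of $\F$ the pairs $(X^{l_0}, Y^{l_0})$ (here $L_1 = L_2 = 1$) are embedded in the families $\F_{l_0}$ indexed by the independent $U^{l_0}$. Next, the very definition of $E\3$ forces $W_e^{\otimes n}(\z \mid \x, \y) \leq 2^{-n(H(Z \mid XY) - f_2(\delta))}$ whenever the indicator is nonzero, so each $Z_{l_0}$ takes values in $[0, b]$ with $b := 2^{-n(H(Z \mid XY) - f_2(\delta))}$. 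Their common expectation equals $\vartheta\3(\z)$, which coincides with $\hat\vartheta\3(\z)$ on the set $F\3$ considered in the statement.

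Second, I would derive a usable lower bound on $\hat\vartheta\3(\z)$. By the defining condition of $F\3$ we have $\hat\vartheta\3(\z) \geq \eps / \lvert T_{Z,\delta}^n \rvert$, and by the cardinality bound \eqref{cardunbed} this gives
\[
  \hat\vartheta\3(\z) \geq \eps \cdot 2^{-n(H(Z) + f_1(\delta))},
\]
where the $\tau$-term from \eqref{cardunbed} is absorbed into the $f_1(\delta) = \tau(P_{UYZ}, 4\lvert\Y\rvert\lvert\X\rvert\lvert\U\rvert\delta, \delta)$ specified in the statement.

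Finally, summing the two tail bounds of Lemma~\ref{Chernie} with $\mu := \hat\vartheta\3(\z)$ yields
\[
  \PP[A\3(\z)^c] \leq 2 \exp\!\left( - L_0 \cdot \frac{\eps^2 \hat\vartheta\3(\z)}{2 b \ln 2} \right),
\]
and inserting the value of $b$ together with the lower bound on $\hat\vartheta\3(\z)$, while using $H(Z) - H(Z \mid XY) = I(Z \wedge XY)$, produces exactly the announced exponent. I do not anticipate a real obstacle here: this case is the one-layer degeneration ($L_1 = L_2 = 1$) of Lemma~\ref{AW3} and Lemma~\ref{theucase}, where the nested conditioning arguments of those lemmas collapse to a single Chernoff application. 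The only care needed is bookkeeping, namely matching the $\tau$-type quantities appearing in the final exponent to the $f_1(\delta)$ and $f_2(\delta)$ prescribed in the statement.
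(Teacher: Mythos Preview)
Your proposal is correct and matches the paper's approach exactly: the paper simply states that the proof is analogous to that of Lemma~\ref{AW3}, and your argument is precisely that analogue---a single unconditional application of the Chernoff bound Lemma~\ref{Chernie} to the i.i.d.\ variables $Z_{l_0}$, using the defining bound of $E\3$ for $b$ and the defining lower bound of $F\3$ together with \eqref{cardunbed} for $\mu$. Your remark that this is the one-layer degeneration of Lemmas~\ref{AW3}--\ref{theucase} (no conditioning needed since the $(X^{l_0},Y^{l_0})$ are already unconditionally independent across $l_0$) captures exactly why the nested structure collapses here.
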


The proof of this lemma is analogous to that of Lemma \ref{AW3}.

\subsubsection{Secrecy for Case 3:}\label{subsect:sec}

\begin{lemma}\label{sec1}
Let $\{(\u^{l_0},\x^{l_0},\y^{l_0})\}$ be a realization of $\F$ satisfying the conditions of 
\begin{align}
  &A\3_*,\label{cond1,*}\\
  \bigcap_{\z\in F\3}&A\3(\z).\label{cond1,1}
\end{align}
Then for sufficiently large $n$,
\begin{align}
  \lVert\hat\vartheta\3-\frac{1}{L_0}\sum_{l_0}W_e^{\otimes n}(\cdot\vert\x^{l_0},\y^{l_0})\rVert\leq4\eps+5\cdot2^{-nc\delta^2}.
\end{align}
\end{lemma}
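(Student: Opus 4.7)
The plan is to mirror the decomposition used in the proofs of Lemma \ref{sec3} and Lemma \ref{sec2}, specialized to the case $L_1=L_2=1$, so that only two error contributions remain instead of four. Concretely, I would split
\begin{align*}
  &\lVert\hat\vartheta\3-\tfrac{1}{L_0}\sum_{l_0}W_e^{\otimes n}(\cdot\vert\x^{l_0},\y^{l_0})\rVert\\
  &\leq\lVert\hat\vartheta\3-\tfrac{1}{L_0}\sum_{l_0}W_e^{\otimes n}(\cdot\vert\x^{l_0},\y^{l_0})1_{E\3(\x^{l_0},\y^{l_0})}1_{F\3}\rVert\\
  &\quad+\lVert\tfrac{1}{L_0}\sum_{l_0}W_e^{\otimes n}(\cdot\vert\x^{l_0},\y^{l_0})1_{E\3(\x^{l_0},\y^{l_0})}(1-1_{F\3})\rVert\\
  &\quad+\lVert\tfrac{1}{L_0}\sum_{l_0}W_e^{\otimes n}(\cdot\vert\x^{l_0},\y^{l_0})(1-1_{E\3(\x^{l_0},\y^{l_0})})\rVert
\end{align*}
and bound each of the three terms in turn.

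For the first term the Chernoff-type assumption \eqref{cond1,1} applies pointwise to every $\z\in F\3$, giving a bound of $\eps$ directly (outside $F\3$ both measures vanish by construction). For the second term, the same hypothesis gives an upper bound
\[
  1-(1-\eps)\hat\vartheta\3(F\3),
\]
and it suffices to lower-bound $\hat\vartheta\3(F\3)$. Since the support of $\vartheta\3$ lies in $T_{Z,4\lvert\Y\rvert\lvert\X\rvert\lvert\U\rvert\delta}^n$, the definition of $F\3$ gives $\hat\vartheta\3(F\3)\geq\vartheta\3(\Z^n)-\eps$, and $\vartheta\3(\Z^n)$ can be lower-bounded by $1-2\cdot 2^{-nc\delta^2}-\eps$ by the same computation as in Lemma \ref{lemabschlemma}: on the high-probability event that $(X^1,Y^1,U^1)$ is jointly typical we have $T_{Z\vert XYU,\delta}^n(\x,\y,\u)\subset E\3(\x,\y)$ via the inclusion $T_{Z\vert XYU,\delta}^n\subset T_{Z,4\lvert\Y\rvert\lvert\X\rvert\lvert\U\rvert\delta}^n$ (obtained by the same triangle-inequality manipulation as in the proof of \eqref{claim3}, summing over $\x,\y,\u$).

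For the third term I would split the $l_0$-sum according to whether $\x^{l_0}\in T_{X\vert YU,\delta}^n(\y^{l_0},\u^{l_0})$ or not, exactly as in \eqref{3l_0sum1}--\eqref{3l_0sum2}. On the typical indices the inclusion $T_{Z\vert XYU,\delta}^n\subset E\3$ bounds $W_e^{\otimes n}(E\3(\x^{l_0},\y^{l_0})^c\vert\x^{l_0},\y^{l_0})\leq 2^{-nc\delta^2}$ by \eqref{untyp2}, while the fraction of atypical indices is at most $\eps+2\cdot 2^{-nc\delta^2}$ by hypothesis \eqref{cond1,*}. Adding up these three contributions yields a total of $4\eps+5\cdot 2^{-nc\delta^2}$ after collecting constants. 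The only mildly delicate step is verifying the inclusion $T_{Z\vert XYU,\delta}^n(\x,\y,\u)\subset F\3$-support set for jointly typical $(\u,\x,\y)$, so that the covering-type lower bound on $\vartheta\3(\Z^n)$ really goes through; but this is identical to the argument already carried out in Lemma \ref{lemabschlemma}. No further Chernoff estimate is needed, since $L_1=L_2=1$ removes the inner independence layers. \qed
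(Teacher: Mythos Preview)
Your proposal is correct and follows essentially the same approach as the paper: the same three-term decomposition, the same use of \eqref{cond1,1} for the first two terms and of \eqref{cond1,*} for the third, and the same Lemma~\ref{lemabschlemma}-style inclusion $T_{Z\vert XYU,\delta}^n(\x,\y,\u)\subset E\3(\x,\y)$ to control the typical part. One very minor point: your stated lower bound $\vartheta\3(\Z^n)\geq 1-2\cdot2^{-nc\delta^2}-\eps$ actually holds without the extra $-\eps$ here (unlike in Cases~1 and~2, $\vartheta\3$ is an unconditional expectation), but this only strengthens the estimate and does not affect the final bound.
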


We use this subsection to prove the above lemma. We have
\begin{align}
  &\quad\;\lVert\hat\vartheta\3-\frac{1}{L_0}\sum_{l_0}W_e^{\otimes n}(\cdot\vert\x^{l_0},\y^{l_0})\rVert\notag\\
  &\leq\lVert\hat\vartheta\3-\frac{1}{L_0}\sum_{l_0}W_e^{\otimes n}(\cdot\vert\x^{l_0},\y^{l_0})1_{E\3(\x^{l_0},\y^{l_0})}1_{F\3}\rVert\label{1I}\\
  &+\lVert\frac{1}{L_0}\sum_{l_0}W_e^{\otimes n}(\cdot\vert\x^{l_0},\y^{l_0})1_{E\3(\x^{l_0},\y^{l_0})}(1-1_{F\3})\rVert\label{1II}\\
  &+\lVert\frac{1}{L_0}\sum_{l_0}W_e^{\otimes n}(\cdot\vert\x^{l_0},\y^{l_0})(1-1_{E\3(\x^{l_0},\y^{l_0})})\rVert.\label{1III}
\end{align}
Due to \eqref{cond1,1} we have $\eqref{1I}\leq\eps$.

Next we bound \eqref{1II}. Again using \eqref{cond1,1},
\begin{align}
  \eqref{1II}&\leq 1-\frac{1}{L_0}\sum_{l_0}W_e^{\otimes n}(E\3(\x^{l_0},\y^{l_0})\cap F\3\vert\x^{l_0},\y^{l_0})\notag\\
  &\leq1-(1-\eps)\hat\vartheta\3(F\3).\label{theta1F}
\end{align}
As in Case 1 and 2, $\hat\vartheta\3(F\3)$ can be lower-bounded by $1-2\cdot2^{-nc\delta^2}-\eps$, so
\[
  \eqref{1II}\leq1-(1-\eps)(1-2\cdot2^{-nc\delta^2}-\eps)\leq2(\eps+2^{-nc\delta^2}).
\]

Finally, the third term \eqref{1III} equals
\begin{align}
  &\quad\;\frac{1}{L_0}\sum_{l_0}W_e^{\otimes n}(E\3(\x^{l_0},\y^{l_0})^c\vert\x^{l_0},\y^{l_0})\notag\\
  &=\frac{1}{L_0}\sum_{l_0:\x^{l_0}\in T_{X\vert YU,\delta}(\y^{l_0},\u^{l_0})}W_e^{\otimes n}(E\3(\x^{l_0},\y^{l_0})^c\vert\x^{l_0},\y^{l_0})\label{11rhs}\\
  &+\frac{1}{L_0}\sum_{l_0:\x^{l_0}\notin T_{X\vert YU,\delta}(\y^{l_0},\u^{l_0})}W_e^{\otimes n}(E\3(\x^{l_0},\y^{l_0})^c\vert\x^{l_0},\y^{l_0}).\label{12rhs}
\end{align}
and is lower-bounded by
\[
  \eqref{1III}\leq 2^{-nc\delta^2}+(1-\eps)(1-2\cdot2^{-nc\delta^2})\leq\eps+3\cdot2^{-nc\delta^2}.
\]

Combining the above bounds, we can conclude that
\[
  \eqref{1I}+\eqref{1II}+\eqref{1III}\leq4\eps+5\cdot2^{-nc\delta^2},
\]
which completes the proof of Lemma \ref{sec1}.

\subsection{Random Coding for the Non-Wiretap MAC with Common Message}\label{subsect:randcodcommmess}

Assume we are given another family of random variables 
\[
  \F':=\bigcup_{l_0\in[L_0]}(U^{l_0'},\F'_{l_0'})
\]
with $\F'_{l_0'}=\{X^{l_0'l_1'},Y^{l_0'l_2'}:l_1',l_2'\in[L_1']\times[L_2']\}$ for other positive integers $L_0',L_1',L_2'$ with blocklength $n'$ which is  independent of $\F$, but which has the same structure as $\F$ and whose distribution is defined according to the same $p$ as $\F$. Define the rate set $\tilde\R(p)$ by the bounds
\begin{eqnarray*}
  \tilde R_1\leq& I(T\wedge X\vert YU),\\
  \tilde R_2\leq& I(T\wedge Y\vert XU),\\
  \tilde R_1+\tilde R_2\leq& I(T\wedge XY\vert U),\\
  \tilde R_0+\tilde R_1+\tilde R_2\leq& I(T\wedge XY).
\end{eqnarray*}
Assume that for some $0<\eta<I_*:=\min\{I_\nu>0:\nu=1,2,3,4\}$ we have
\begin{align*}
  \frac{n\log L_1+n'\log L_1'}{n+n'}&\leq [I(T\wedge X\vert YU)-\eta\,]_+,\\
  \frac{n\log L_2+n'\log L_2'}{n+n'}&\leq [I(T\wedge Y\vert XU)-\eta\,]_+,\\
  \frac{n\log(L_1L_2)+n'\log(L_1'L_2')}{n+n'}&\leq [I(T\wedge XY\vert U)-\eta\,]_+,\\
  \frac{n\log(L_0L_1L_2)+n'\log(L_0'L_1'L_2')}{n+n'}&\leq [I(T\wedge XY)-\eta\,]_+.
\end{align*}
Define a new family of random vectors
\[
  \F\circ\F':=\{\tilde U^{l_0l_0'},\tilde X^{l_0l_0'l_1l_1'},\tilde Y^{l_0l_0'l_2l_2'}\}
\]
by concatenating the corresponding elements of $\F$ and $\F'$, so e.g. $\tilde U^{l_0l_0'}=(U^{l_0},U^{l_0'})\in\U^{n+n'}$, $\tilde X^{l_0l_0'l_1l_1'}=(X^{l_0l_1},X^{l_0'l_1'})\in\X^{n+n'}$.

\begin{lemma}\label{MACcode}
  For any $\delta,\eta>0$ there are $\zeta_1,\zeta_2=\zeta_1(\eta,\delta),\zeta_2(\eta,\delta)>0$ such that the probability of the event $A_{\MAC}$ that the family 
\[
  \{\tilde X^{l_0l_0'l_1l_1'},\tilde Y^{l_0l_0'l_2l_2'}:(l_0,l_0',l_1,l_1',l_2,l_2')\}
\]
is the codeword set of a deterministic MAC code with average error at most $\exp(-(n+n')\zeta_1)$ is lower-bounded by  $1-\exp(-(n+n')\zeta_2)$. The same result is true if it is formulated only for $\F$ or $\F'$ without concatenation.
\end{lemma}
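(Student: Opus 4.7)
\bigskip
\noindent\textbf{Proof Sketch (proposal).}

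The plan is to run standard joint-typicality decoding on the concatenated block of length $n+n'$ and to bound, via a union bound, the expected average error over the random code $\F\circ\F'$, then promote this expectation bound to a high-probability bound by Markov's inequality. Given $\t\in\T^{n+n'}$ received by Bob, the decoder outputs the unique tuple $(\hat l_0,\hat l_0',\hat l_1,\hat l_1',\hat l_2,\hat l_2')$ such that $(\tilde U^{\hat l_0\hat l_0'},\tilde X^{\hat l_0\hat l_0'\hat l_1\hat l_1'},\tilde Y^{\hat l_0\hat l_0'\hat l_2\hat l_2'},\t)$ lies in an appropriately $\delta'$-typical set with respect to $P_U\otimes P_{X\vert U}\otimes P_{Y\vert U}\otimes W_b$ (with $\delta'$ chosen slightly larger than $\delta$ so that concatenations of typical sequences remain typical). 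Because the marginals of $\F$ and $\F'$ are normalizations of the i.i.d.\ distributions onto the typical sets, for $n,n'$ large each codeword lies in the target typical set with probability at least $1-2^{-n c\delta^2}$ (analogously $n'$), and the bounds \eqref{probbed}--\eqref{cardbed} transfer to the concatenated product distribution at blocklength $n+n'$.

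The error decomposes according to which of the indices $(l_0,l_0')$, $(l_1,l_1')$, $(l_2,l_2')$ the decoder gets wrong, giving four nontrivial error events beyond the negligible event that the true codeword triple is atypical with $\t$: wrong $(l_1,l_1')$ only, wrong $(l_2,l_2')$ only, wrong $(l_1,l_1',l_2,l_2')$ but correct $(l_0,l_0')$, and wrong $(l_0,l_0')$. Using \eqref{probbed} for the conditional distribution of $\t$ given the true codewords, combined with the cardinality bounds \eqref{cardunbed}--\eqref{cardbed} to count competing codewords, the expected probability of each event is bounded by a product of the form (number of competing tuples)\,$\cdot\,2^{-(n+n')(I-\tau)}$. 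Under the assumed rate conditions there is a slack of at least $\eta>0$ in each of the four inequalities, so choosing $\delta$ small enough that the typicality slacks $\tau$ are less than $\eta/2$ yields
\[
  \EE[\bar e(\F\circ\F')]\le 4\cdot 2^{-(n+n')\eta/2}+\text{typicality terms of order }2^{-(n+n')c\delta^2},
\]
and hence $\EE[\bar e]\le 2^{-(n+n')\zeta}$ for some $\zeta=\zeta(\eta,\delta)>0$.

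To conclude, apply Markov's inequality: if $\bar e(\F\circ\F')$ has expectation at most $2^{-(n+n')\zeta}$, then
\[
  \PP\bigl[\bar e(\F\circ\F')> 2^{-(n+n')\zeta/2}\bigr]\le 2^{-(n+n')\zeta/2}.
\]
Setting $\zeta_1=\zeta_2=(\zeta/2)\ln 2$ gives the claim, and the statement for $\F$ alone (or $\F'$ alone) is the same argument with $n'=0$ (resp.\ $n=0$). The main technical obstacle is bookkeeping the typicality slacks $\tau$ at the two different blocklengths so that joint typicality on the concatenation continues to enjoy bounds of the form \eqref{probbed}--\eqref{cardbed}; this is handled by taking $\delta'>\delta$ slightly and using that the number of atypical pairs is exponentially small in $\min(n,n')$, which is in turn absorbed into the $2^{-(n+n')c\delta^2}$ terms. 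Once the typicality machinery is set up at blocklength $n+n'$, the error analysis is the standard Slepian--Wolf MAC-with-common-message computation, and no new idea beyond the Chernoff/Markov promotion step is required.
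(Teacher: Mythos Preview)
Your proposal is correct and follows essentially the same approach as the paper: a joint-typicality decoder on the concatenated block, the standard four-way error decomposition for the MAC with common message, an exponential bound on the expected average error, and Markov's inequality to upgrade to a high-probability statement. The only noteworthy difference is a shortcut the paper takes for the conditioning-on-typicality issue: rather than rerunning the typicality bookkeeping at the concatenated blocklength, it observes that $P_U^n$, $P_{X\vert U}^n$, $P_{Y\vert U}^n$ are each upper-bounded by the corresponding unconditioned i.i.d.\ laws divided by a factor $1-2^{-nc\delta^2}$, so the expected error under the conditioned ensemble is at most $(1-2^{-nc\delta^2})^{-3}(1-2^{-n'c\delta^2})^{-3}$ times the expected error under the ordinary i.i.d.\ ensemble $\hat U,\hat X,\hat Y$, after which one quotes the standard result. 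Your direct computation and this change-of-measure reduction are equivalent; the paper's route just avoids the $\delta'$ bookkeeping you flagged.
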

\begin{proof}

The difference to standard random coding proofs is that the random variables from $\F$ and $\F'$ are conditioned on typicality. Using the random sets
\[
  E^{l_0l_0'l_1l_1'l_2l_2'}:=
  \{\t\in\T^n:(\tilde U^{l_0l_0'},\tilde X^{l_0l_0'l_1l_1'},\tilde Y^{l_0l_0'l_2l_2'},\t)\in T_{UXYT,\delta}^n\},
\]
we define the decoding sets $F^{l_0l_0'l_1l_1'l_2l_2'}$ by deciding for $(l_0,l_0',l_1,l_1',l_2,l_2')$ if the output is contained in $E^{l_0l_0'l_1l_1'l_2l_2'}$ and if at the same time it is not contained in any $E^{\tilde l_0\tilde l_0'\tilde l_1\tilde l_1'\tilde l_2\tilde l_2'}$ for a different message tuple $(\tilde l_0,\tilde l_0',\tilde l_1,\tilde l_1',\tilde l_2,\tilde l_2')$. This decoder is known to be the right decoder in the case where the codewords have the standard i.i.d.\ structure, i.e. for a family of random variables
\[
  \{\hat U^{l_0l_0'},\hat X^{l_0l_0'l_1l_1'},Y^{l_0l_0'l_2l_2'}\}
\]
where $\hat U^{l_0l_0'}\sim P_{U}^{\otimes(n+n')}$ and where conditional on $\hat U^{l_0l_0'}$, the $\hat X^{l_0l_0'l_1l_1'}$ and $\hat Y^{l_0l_0'l_2l_2'}$ are independent with $\hat X^{l_0l_0'l_1l_1'}\sim P_{X\vert U}^{\otimes (n+n')}$ and $\hat Y^{l_0l_0'l_2l_2'}\sim P_{Y\vert U}^{\otimes (n+n')}$. It is easily seen that 
\begin{multline*}
  \EE[W^{\otimes n}((F^{l_0l_0'l_1l_1'l_2l_2'})^c\vert\tilde X^{l_0l_0'l_1l_1'},\tilde Y^{l_0l_0'l_2l_2'})]\\
  \leq(1-2^{-nc\delta^2})^3(1-2^{-n'c\delta^2})^3\EE[W^{\otimes n}((F^{l_0l_0'l_1l_1'l_2l_2'})^c\vert\hat X^{l_0l_0'l_1l_1'},\hat Y^{l_0l_0'l_2l_2'})].
\end{multline*}
Then the standard random coding proof technique yields the result. The specialization for the case that only $\F$ or $\F'$ is treated is obvious.\qed
\end{proof}

\subsection{Coding}\label{MAC}

In this subsection we show the achievability of the rate sets $\R^{(\nu)}(p)$ for $\nu=0,1,2,3$ and appropriate $p$. For the cases where we showed that $\R^{(\nu)}(p)$ can be written as the union over certain $\alpha$ of rate sets $\R^{(\nu)}_\alpha(p)$, we show the achievability of the latter for every $\alpha$. 

Throughout this section fix a common randomness bound $H_C\geq 0$. Let $\delta>0$ which will be specified later and $n$ a blocklength which will have to be large enough. Every $p$ considered in this section has the form $p=P_U\otimes(P_{X\vert U}\otimes P_{Y\vert U})\otimes W$. Without loss of generality we may assume that $I(Z\wedge XY)<I(T\wedge XY)$, in particular, $I(T\wedge XY)>0$. Letting
\begin{equation}\label{parameters}
  K_0,K_1,K_2,L_0,L_1,L_2,n,\qquad K_0',K_1',K_2',L_0',L_1',L_2',n'
\end{equation}
be arbitrary positive integers, we define two independent families $\G,\G'$ of random vectors. $\G$ has the same form as $\F$ with the parameters $L_0,L_1,L_2$ replaced by $K_0L_0,K_1L_1,K_2L_2$. $\G'$ is defined analogously with the parameters on the left-hand side of \eqref{parameters} replaced by those on its right-hand side. Every choice of $(k_0,k_1,k_2)$ induces a subfamily $\F$ of $\G$ which has the same parameters as the $\F$ treated above, every subfamily of $\G'$ corresponding to any $(k_0',k_1',k_2')$ induces an $\F'$ with parameters $L_0',L_1',L_2',n'$. Further recall the notation $\G\circ\G'$ as the family of concatenated words from $\G$ and $\G'$.

\subsubsection{Case 0 and 1:}

Let $p\in\Pi_0$ or $p\in\Pi\1_{H_C}$. Note that $\alpha\1_0\leq\alpha\1_1$ if and only if the vector $(J_0\halpha,J_1\halpha,J_2\halpha)$ whose components are given by
\begin{align*}
  J_0\halpha&=I(Z\wedge U),\\
  J_1\halpha&=\alpha I(Z\wedge X\vert YU)+(1-\alpha)I(Z\wedge X\vert U),\\
  J_2\halpha&=\alpha I(Z\wedge Y\vert U)+(1-\alpha)I(Z\wedge Y\vert XU)
\end{align*}
is contained in $\tilde\R(p)$. We first consider Case 1. Let a rate vector $(R_0,R_1,R_2)$ with positive components be given such that $(\tilde R_0,\tilde R_1,\tilde R_2):=(R_0,R_1,R_2)+(J_0\halpha,J_1\halpha,J_2\halpha)\in\tilde\R(p)$, which means that $(R_0,R_1,R_2)\in\R_\alpha(p)$. We now define a wiretap code whose rates approximate $(R_0,R_1,R_2)$. If $\alpha=0$, we only need $\G'$, if $\alpha=1$, we only need $\G$. Otherwise we do time-sharing in the following way: choose for a small $0<\gamma<\min\{\alpha,1-\alpha\}$ blocklengths $n$ and $n'$ with $n/(n+n')\in(\alpha-\gamma,\alpha+\gamma)$. For some $0<2\eta<\min\{R_0,R_1,R_2\}$ and every $\nu=0,1,2$ let
\[
  \tilde R_\nu-\eta\leq\frac{\log(K_\nu L_\nu)+\log(K_\nu'L_\nu')}{n+n'}\leq\tilde R_\nu-\frac{\eta}{2}
\]
(and this modifies accordingly for $\alpha\in\{0,1\}$). By Lemma \ref{MACcode} we know that with probability exponentially close to 1, the random variables $\tilde X_{k_0k_0'k_1k_1'}^{l_0l_0'l_1l_1'}$ and $\tilde Y_{k_0k_0'k_2k_2'}^{l_0l_0'l_2l_2'}$ form the codewords of a code for the non-wiretap MAC given by $W_b$ with an average error at most $\exp(-(n+n')\zeta_1)$ for some $\zeta_1>0$. We denote Bob's corresponding random decoder by $\Phi$. Now  let
\begin{align*}
  \frac{\log L_1+\log L_1'}{n+n'}&\in J_1\halpha+\bigl(f_1(\delta)+(\alpha f_2(\delta)+(1-\alpha)f_4(\delta))\bigr)\cdot[2,3],\\
  \frac{\log L_2+\log L_2'}{n+n'}&\in J_2\halpha+\bigl(f_1(\delta)+(\alpha f_4(\delta)+(1-\alpha)f_2(\delta))\bigr)\cdot[2,3],\\
  \frac{\log L_0+\log L_0'}{n+n'}&\in J_0\halpha+\bigl(f_4(\delta)+f_6(\delta)\bigr)\cdot[2,3].
\end{align*}
This is possible if $4(f_1(\delta)+f_2(\delta)+f_4(\delta))\leq\min\{\eta,H_C-J_0\halpha\}$. If additionally $\eps$ is chosen according to
\[
  -\frac{1}{n}\log\eps=\frac{1}{4}\min\{4\zeta_1,f_1(\delta)+f_2(\delta)+f_4(\delta)+f_6(\delta)\},
\]
then for every $(k_0,k_1,k_2)\in[K_0]\times[K_1]\times[K_2]$, the corresponding subfamily $\F$ of $\G$ satisfies \eqref{cond3,*}-\eqref{cond3,3} with probability exponentially close to 1, and for every $(k_0',k_1',k_2')\in[K_0']\times[K_1']\times[K_2']$, the corresponding subfamily $\F'$ of $\G'$ satisfies (\ref{cond3,*}$'$)-(\ref{cond3,3}$'$) with probability exponentially close to 1. Thus we can choose a realization of $\G\circ\G'$ which has all these properties and use it to define a stochastic wiretap code. We define independent encoders $G$ and $G'$ by setting
\begin{align*}
  G_0(l_0\vert k_0)&=\frac{1}{L_0},&(k_0\in[K_0],l_0\in[L_0])&,\\
  G_1(\x\vert k_0,k_1,l_0)&=\frac{1}{L_1}\sum_{l_1}\delta_{\x_{k_0k_1}^{l_0l_1}}(\x),&(\x\in\X^n,k_1\in[K_1],k_0\in[K_0],l_0\in[L_0])&,\\
  G_2(\y\vert k_0,k_2,l_0)&=\frac{1}{L_2}\sum_{l_2}\delta_{\y_{k_0k_2}^{l_0l_2}}(\y),&(\y\in\Y^n,k_2\in[K_2],k_0\in[K_0],l_0\in[L_0])&,
\end{align*}
and defining $G'$ analogously. $G_0$ and $G_0'$ satisfy the common randomness constraint. We choose the decoder $\phi$ to be the realization of $\Phi$ corresponding to the chosen realization of $\G\circ\G'$. The average error of the stochastic encoding code equals the average error of the deterministic MAC code for $W_b$ determined by the realization of $\G\circ\G'$, in particular it is bounded by $\eps$. Due to the choice of $\delta$ the rates of this code satisfy
\begin{align*}
  \frac{\log K_\nu+\log K_\nu'}{n+n'}\geq R_\nu-2\eta&&(\nu=0,1,2,).
\end{align*}
Finally if we let $M_\nu$ be uniformly distributed on $[K_\nu]$ and $M_\nu'$ on $[K_\nu']$, then it follows from Lemma \ref{sec3} and \eqref{totvarsmall} together with the fact that $\eps$ is exponentially small that the strong secrecy criterion is satisfied. Thus the rate triple $(R_0,R_1,R_2)$ is achievable. So far, this excludes $(R_0,R_1,R_2)$ where one component equals zero, but as $\delta$ and $\eta$ may be arbitrarily close to 0 and the achievable region of $W$ is closed by definition, we can conclude that the whole region $\R_\alpha(p)$ is achievable.

For Case 0, everything goes through if one sets $K_0=K_0'=L_0=L_0'=1$ and $R_0=0$. The difference to Case 1 is that even if $J_0\halpha=0$, one needs a little bit more common randomness than that in order to protect a common message, as can be seen in the choice of $L_0$ and $L_0'$ above. Thus the transmission of a common message is impossible if common randomness is not available.

\subsubsection{Case 2:}

Let $p\in\Pi_{H_C}\2$. In this case we generally need both a $\G$ and a $\G'$, where $\G$ has $L_2=1$ and $\G'$ has $L_1=1$. We define the vector $(J_0\halpha,J_1\halpha,J_2\halpha)$ by 
\begin{align*}
  J_0\halpha&=\alpha I(Z\wedge YU)+(1-\alpha)I(Z\wedge XU),\\
  J_1\halpha&=\alpha I(Z\wedge X\vert YU),\\
  J_2\halpha&=(1-\alpha)I(Z\wedge Y\vert XU)
\end{align*}
As it should always be clear which case we are treating, this should not lead to confusion with case 1. Note that $\alpha\2_0\leq\alpha\leq\alpha\2_1$ if and only if $(J_0\halpha,J_1\halpha,J_2\halpha)$ is contained in $\tilde\R(p)$ and satisfies $J_0\halpha<H_C$. Let a rate vector $(R_0,R_1,R_2)$ be given whose $\nu$-th component may only vanish if $L_\nu=L_\nu'=1$. Further we require that $(\tilde R_0,\tilde R_1,\tilde R_2)=(R_0,R_1,R_2)+(J_0\halpha,J_1\halpha,J_2\halpha)$ is contained in $\tilde\R(p)$. If $\alpha=0$, we only need $\G'$, if $\alpha=1$, we only need $\G$. Otherwise, let $0<\gamma<\min\{\alpha,1-\alpha\}$ be small and let $n$ and $n'$ be large enough such that $n/(n+n')\in(\alpha-\gamma,\alpha+\gamma)$. Further for some $0<2\eta<\min\{R_\nu:\nu=0,1,2,R_\nu>0\}$ let
\begin{align*}
  [\tilde R_\nu-\eta]_+\leq\frac{\log (K_\nu L_\nu)+\log (K_\nu'L_\nu')}{n+n'}\leq[\tilde R_\nu-\frac{\eta}{2}]_+,
\end{align*}
and modify this accordingly for $\alpha\in\{0,1\}$. By Lemma \ref{MACcode} we know that with probability exponentially close to 1, the random variables $\tilde X_{k_0k_0'k_1k_1'}^{l_0l_0'l_1l_1'}$ and $\tilde Y_{k_0k_0'k_2k_2'}^{l_0l_0'l_2l_2'}$ form the codewords of a code for the non-wiretap MAC given by $W_b$ with an average error at most $\exp(-(n+n')\zeta_1)$ for some $\zeta_1>0$. We denote the corresponding random decoder by $\Phi$. We define $(j_1^1,j_1^2)=(j_2^1,j_2^2)=(1,2)$ and $(j_0^1,j_0^2)=(1,6)$. Then let for $\nu=0,1,2$ 
\begin{align*}
  J_\nu\halpha+2(f_{j_\nu^1}(\delta)+f_{j_\nu^2}(\delta))\leq\frac{\log L_\nu+\log L_\nu'}{n+n'}\leq J_\nu\halpha+3(f_{j_\nu^1}(\delta)+f_{j_\nu^2}(\delta)),
\end{align*}
which is possible if $4(f_{j_\nu^1}(\delta)+f_{j_\nu^2}(\delta))\leq\min\{\eta,H_C-J_0\halpha\}$ for all $\nu$. If additionally $\eps$ is chosen according to
\[
  -\frac{1}{n}\log\eps=\frac{1}{4}\min\{4\zeta_1,f_1(\delta)+f_2(\delta),f_1(\delta)+f_6(\delta)\},
\]
then for every $(k_0,k_1,k_2)\in[K_0]\times[K_1]\times[K_2]$, the corresponding subfamily $\F$ of $\G$ satisfies \eqref{cond2,*}-\eqref{cond2,2} with probability exponentially close to 1, and for every $(k_0',k_1',k_2')\in[K_0']\times[K_1']\times[K_2']$, the corresponding subfamily $\F'$ of $\G'$ satisfies (\ref{cond2,*}$'$)-(\ref{cond2,2}$'$) with probability exponentially close to 1. Thus we can choose a realization of $\G\circ\G'$ which has all these properties plus those defining $A_\MAC$ and use it to define a stochastic wiretap code. We define independent encoders $G$ and $G'$ by setting
\begin{align*}
  G_0(l_0\vert k_0)&=\frac{1}{L_0},&(l_0\in[L_0],k_0\in[K_0])&,\\
  G_1(\x\vert k_0,k_1,l_0)&=\frac{1}{L_1}\sum_{l_1}\delta_{\x_{k_0k_1}^{l_0l_1}}(\x),&(\x\in\X^n,k_1\in[K_1],k_0\in[K_0],l_0\in[L_0])&,\\
  G_2(\y\vert k_0,k_2,l_0)&=\delta_{\y_{k_0k_2}^{l_0}}(\y),&(\y\in\Y^n,k_2\in[K_2],k_0\in[K_0],l_0\in[L_0])&,
\end{align*}
and defining $G'$ analogously. The decoder $\phi$ is the realization of $\Phi$ corresponding to the chosen realization of $\G\circ\G'$. $G_0$ and $G_0'$ satisfy the common randomness constraint. Due to the simple form of $G$ and $G'$, the average error of the stochastic encoding code equals the average error of the deterministic MAC code for $W_b$ determined by the realization of $\G\circ\G'$, in particular it is bounded by $\eps$. Due to the choice of $\delta$, the rates of this code satisfy
\begin{align*}
  \frac{\log K_\nu+\log K_\nu'}{n+n'}\geq R_\nu-2\eta&&(\nu=0,1,2,).
\end{align*}
Finally if we let $M_\nu$ be uniformly distributed on $[K_\nu]$ and $M_\nu'$ on $[K_\nu']$, then it follows from Lemma \ref{sec3} and \eqref{totvarsmall} together with the fact that $\eps$ is exponentially small that the strong secrecy criterion is satisfied. Thus the rate triple $(R_0,R_1,R_2)$ is achievable. So far, this may exclude rate triples $(R_0,R_1,R_2)$ where one component equals zero, but as $\delta$ and $\eta$ may be arbitrarily close to 0 and the achievable region of $W$ is closed by definition, we can conclude that the whole region $\R_\alpha(p)$ is achievable.

\subsubsection{Case 3:}

In this case we only need $\G$ with $L_1=L_2=1$. Let $R_0>0$ and assume that the rate vector $(\tilde R_0,\tilde R_1,\tilde R_2):=(R_0+I(Z\wedge XY),R_1,R_2)$ is contained in $\tilde\R(p)$. Further for some  $0<2\eta<\min\{R_\nu:\nu=0,1,2,R_\nu>0\}$ let
\begin{align*}
  [\tilde R_\nu-\eta]_+\leq\frac{1}{n}\log (K_\nu L_\nu)\leq[\tilde R_\nu-\frac{\eta}{2}]_+.
\end{align*}
$\G$ satisfies $A_{\MAC}$ with probability exponentially close to 1, so the $X_{k_0k_1}^{l_0l_1}$ and $Y_{k_0k_2}^{l_0l_2}$ form the codewords of a deterministic non-wiretap MAC code whose average error for transmission over $W_b$ is bounded by $\exp(-n\zeta_1)$ for some $\zeta_1>0$. We denote the corresponding random decoder by $\Phi$. Now let 
\[
  I(Z\wedge XY)+2(f_1(\delta)+f_2(\delta))\leq\frac{1}{n}\log L_0\leq I(Z\wedge XY)+3(f_1(\delta)+f_2(\delta))
\]
for $\delta$ so small that $4(f_1(\delta)+f_2(\delta))\leq\min(\eta,H_C-I(Z\wedge XY))$ and choose $\eps$ such that
\[
  -\frac{1}{n}\log\eps=\frac{1}{4}\min\{4\zeta_1,f_1(\delta)+f_2(\delta)\}.
\]
Then for every $(k_0,k_1,k_2)$ the corresponding family $\F$ satisfies the conditions \eqref{cond1,*} and \eqref{cond1,1} with probability exponentially close to 1. We choose a realization $\{(\u_{k_0}^{l_0},\x_{k_0k_1}^{l_0},\y_{k_0k_2}^{l_0})\}$ which satisfies the conditions of \eqref{cond1,*} and \eqref{cond1,1} and which determines a deterministic non-wiretap code for $W_b$ with decoder $\phi$. Now we can define a wiretap code whose decoder is $\phi$ and whose stochastic encoder $G$ is given by
\begin{align*}
  G_0(l_0\vert k_0)&=\frac{1}{L_0}, & (k_0\in[K_0],l_0\in[L_0])&,\\
  G_1(\x\vert k_0,k_1,l_0)&=\delta_{\x_{k_0k_1}^{l_0}}(\x),& (\x\in\X^n,k_1\in[K_1],k_0\in[K_0],l_0\in[L_0])&,\\
  G_2(\y\vert k_0,k_2,l_0)&=\delta_{\y_{k_0k_2}^{l_0}}(\y),& (\y\in\Y^n,k_2\in[K_2],k_0\in[K_0],l_0\in[L_0])&.
\end{align*}
Note that $G_0$ satisfies the common randomness constraint. Due to the uniform distribution of $G_0$, its average error is identical to that of the deterministic MAC code determined by the $\x_{k_0k_1}^{l_0}$ and the $\y_{k_0k_2}^{l_0}$, in particular, it is exponentially small with rate at most $\eps$. We have for $\nu=0,1,2$
\[
  \frac{1}{n}\log K_\nu\geq R_\nu-2\eta.
\]
due to the choice of $\delta$. Finally if we let $M_\nu$ be uniformly distributed on $[K_\nu]$, then it follows from Lemma \ref{sec1} and \eqref{totvarsmall} together with the fact that $\eps$ is exponentially small that the strong secrecy criterion is satisfied. Thus the rate triple $(R_0,R_1,R_2)$, and hence $\R(p)$, is achievable.

\subsection{Concluding Steps}\label{subsect:concl}

We can reduce coding for a general $p$ which is the distribution of a random vector $(U,V_1,V_2,X,Y,T,Z)$ to the case treated above by constructing a new wiretap MAC as follows: its input alphabets are $\V_1$ and $\V_2$, its output alphabets still are $\T$ and $\Z$. The transition probability for inputs $(v_1,v_2)$ and outputs $(t,z)$ is given by
\[
  \tilde W(t,z\vert v_1,v_2):=\sum_{(x,y)\in\X\times\Y}W(t,z\vert x,y)P_{X\vert V_1}(x\vert v_1)P_{Y\vert V_2}(y\vert v_2).
\]
For this channel we do the same construction as above considering the joint distribution of random variables $(U,V_1,V_2,T,Z)$ which we denote by $\tilde p$. In this way we also construct a wiretap code for the original channel $W$ because the additional randomness $P_{V_1V_2\vert U}$ can be integrated into the stochastic encoders $G_1$ and $G_2$. $G_0$ remains unchanged, so the additional randomness in the encoders does not increase the common randomness needed to do the encoding.

On the other hand, we need to show that the rate regions thus obtained are those appearing in the statement of Theorem \ref{thmcomm}. As the sets $\Pi_0,\Pi_{H_C}\1,\ldots,$ $\Pi_{H_C}\3$ depend on the channel, we write $\Pi_0(W)$, $\Pi_0(\tilde W)$, $\Pi_{H_C}\1(W),\ldots,\Pi_{H_C}\3(\tilde W)$. Note that $\tilde p$ is contained in $\Pi_0(\tilde W)$ or $\Pi_{H_C}^{(\nu)}(\tilde W)$ for some $\nu=1,2,3$ if and only if $p$ is contained in the corresponding $\Pi_0(W)$ or $\Pi_{H_C}^{(\nu)}(W)$. This immediately implies that the rate regions also coincide.

\section{Proof of Theorem \ref{thmconf}}\label{sect:proofconf}

\subsection{Elementary Rate Regions}\label{subsect:elratreg}

As for the wiretap MAC with common message we show that we can write the claimed achievable regions as unions of simpler sets whose achievability will be show in the next step.
\subsubsection{For Case 1:}

Define
\[
  \beta\1_0:=[1-\frac{C_2}{I(Z\wedge U)}]_+,\qquad\beta\1_1:=\min\{\frac{C_1}{I(Z\wedge U)},1\}.
\]
We have $\beta\1_0\leq\beta\1_1$ because $I(Z\wedge U)<C_1+C_2$.

\begin{lemma}\label{alfaromeo}
  For $\beta\1_0\leq\beta\leq\beta\1_1$, let $\R\1_\beta(p,C_1,C_2)$ be the set of those real pairs $(R_1,R_2)$ satisfying
\begin{align*}
  R_1&\leq I(T\wedge V_1\vert V_2U)-I(Z\wedge V_1\vert U)\\&\hspace{.7cm}-[I(Z\wedge V_2\vert V_1U)-I(T\wedge V_2\vert V_1U)]_+-\beta I(Z\wedge U)+C_1,\\
  R_2&\leq I(T\wedge V_2\vert V_1U)-I(Z\wedge V_2\vert U)\\&\hspace{.7cm}-[I(Z\wedge V_1\vert V_2U)-I(T\wedge V_1\vert V_2U)]_+-(1-\beta) I(Z\wedge U)+C_2,\\
  R_1+R_2&\leq\min\bigl\{I(T\wedge V_1V_2\vert U)-I(Z\wedge V_1V_2\vert U)-I(Z\wedge U)+C_1+C_2,\\
  &\qquad\qquad I(T\wedge V_1V_2)-I(Z\wedge V_1V_2)\bigr\}.
\end{align*}
Then
\[
  \R\1(p,C_1,C_2)=\bigcup_{\beta\1_0\leq\beta\leq\beta\1_1}\R\1_\beta(p,C_1,C_2).
\]
\end{lemma}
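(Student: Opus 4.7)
The plan is to reduce the claim to an application of Lemma \ref{gemconc}, exactly paralleling the derivation of Lemma \ref{alpha-union} in Subsection \ref{subsect:elrateregs}. The key algebraic observation is that in the definition of $\R\1_\beta(p,C_1,C_2)$ the parameter $\beta$ enters only through the two terms $-\beta I(Z\wedge U)$ in the $R_1$ bound and $-(1-\beta)I(Z\wedge U)$ in the $R_2$ bound; these have opposite $\beta$-slopes and constant sum $-I(Z\wedge U)$, while the sum-rate bound is already $\beta$-independent. This is precisely the structure Lemma \ref{gemconc} is designed to exploit. The degenerate case $I(Z\wedge U)=0$ is trivial, since then $\R\1_\beta$ does not depend on $\beta$ at all; from now on I assume $I(Z\wedge U)>0$.

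I would invoke Lemma \ref{gemconc} with $R_0$ fixed to zero (the present setting is two-dimensional), $\alpha_0=\beta\1_0$, $\alpha_1=\beta\1_1$, $a_1=b_2=c=I(Z\wedge U)$, $a_2=b_1=0$, and with $r_1,r_2$ equal to the $\beta$-independent portions of the $R_1,R_2$ bounds of $\R\1_\beta$; the two entries of the $\min$ in its sum-rate bound then play the roles of the two sum-rate inequalities ($R_1+R_2\leq r_{12}-c$ and $R_0+R_1+R_2\leq r_{012}-d$) in Lemma \ref{gemconc}. The lemma outputs the union as a set characterised by the $R_1$-bound evaluated at $\beta\1_0$, the $R_2$-bound evaluated at $\beta\1_1$, and the same sum-rate bounds. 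Matching these to the bounds defining $\R\1(p,C_1,C_2)$ requires only two purely algebraic identities: a case split on the signs of $I(Z\wedge U)-C_1$ and $I(Z\wedge U)-C_2$ gives
\[
  \beta\1_0\,I(Z\wedge U)=[I(Z\wedge U)-C_2]_+,\qquad(1-\beta\1_1)\,I(Z\wedge U)=[I(Z\wedge U)-C_1]_+,
\]
and the Markov chain $U\to(V_1,V_2)\to Z$ implicit in $p\in\Pi$ yields $I(Z\wedge V_1V_2)=I(Z\wedge U)+I(Z\wedge V_1V_2\vert U)$, which converts the first entry of the $\min$ in the $\R\1_\beta$ sum-rate bound into $I(T\wedge V_1V_2\vert U)+C_1+C_2-I(Z\wedge V_1V_2)$, as required in $\R\1(p,C_1,C_2)$.

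The main obstacle I anticipate is verifying the nonemptiness hypothesis $\R\1_\beta\neq\varnothing$ for every $\beta\in[\beta\1_0,\beta\1_1]$ required by Lemma \ref{gemconc}. This will follow from the Case~1 defining conditions \eqref{compi1}--\eqref{compi3} together with $I(Z\wedge U)<C_1+C_2$ implicit in $p\in\Pi\1_{C_1+C_2}$: using the chain rule $I(Z\wedge V_1V_2\vert U)=I(Z\wedge V_1\vert U)+I(Z\wedge V_2\vert V_1U)$, the inequality \eqref{compi3} rewrites as $I(Z\wedge V_2\vert V_1U)-I(T\wedge V_2\vert V_1U)\leq I(T\wedge V_1\vert V_2U)-I(Z\wedge V_1\vert U)$, whose right-hand side is nonnegative by \eqref{compi1}; hence $[I(Z\wedge V_2\vert V_1U)-I(T\wedge V_2\vert V_1U)]_+\leq I(T\wedge V_1\vert V_2U)-I(Z\wedge V_1\vert U)$ and symmetrically, giving $r_1\geq C_1$ and $r_2\geq C_2$. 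Since $\beta\1_1\,I(Z\wedge U)\leq C_1$ and $(1-\beta\1_0)\,I(Z\wedge U)\leq C_2$ by construction of $\beta\1_0$ and $\beta\1_1$, this forces $(0,0)\in\R\1_\beta$ for every $\beta$ in the range, completing the nonemptiness check; once this is in hand the proof is essentially bookkeeping.
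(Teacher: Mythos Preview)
Your approach is exactly what the paper does: its entire proof is the single sentence ``For the proof one uses Lemma \ref{gemconc},'' and you have spelled out the identification of parameters (with $a_1=b_2=I(Z\wedge U)$, $a_2=b_1=0$) and the two bookkeeping identities $\beta\1_0\,I(Z\wedge U)=[I(Z\wedge U)-C_2]_+$, $(1-\beta\1_1)\,I(Z\wedge U)=[I(Z\wedge U)-C_1]_+$ together with the chain-rule rewrite of the sum bound, which the paper leaves implicit. The only comment is that Lemma \ref{gemconc} is stated in three dimensions and with the hypothesis $r_1+r_2\geq r_{12}$; your reduction to two dimensions by fixing $R_0=0$ is harmless (the proof of Lemma \ref{gemconc} does not couple $R_0$ to anything else), and the sum hypothesis is handled here because the first entry of the $\min$ is automatically $\leq r_1+r_2-I(Z\wedge U)$ once you have established $r_\nu\geq C_\nu$---so the sum constraint is never the binding one in the step of the proof of Lemma \ref{gemconc} where that hypothesis is used.
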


Thus it is sufficient to show the achievability of $\R\1_\beta(p,C_1,C_2)$ for every $\beta$. For the proof one uses Lemma \ref{gemconc}.

\subsubsection{For Case 2:}

Recall the vector $(J_0\halpha,J_1\halpha,J_2\halpha)$ defined as in Case 2 from the common message part. Define
\[
  \beta\zhalpha_0:=[1-\frac{C_2}{J_0\halpha}]_+,\qquad\beta\zhalpha_1:=\min\{\frac{C_1}{J_0\halpha},1\}.
\]
We show that every $\R\2_\alpha(p,C_1,C_2)$ can be represented as the union of sets $\R\2_{\alpha,\beta}(p,C_1,C_2)$ for $\beta\zhalpha_0\leq\beta\leq\beta\zhalpha_1$. Define $\R\2_{\alpha,\beta}(p,C_1,C_2)$ by
\begin{align*}
  R_1&\leq I(T\wedge V_1\vert V_2U)-\alpha I(Z\wedge V_1\vert V_2U)+C_1-\beta J_0\halpha,\\
  R_2&\leq I(T\wedge V_2\vert V_1U)-(1-\alpha) I(Z\wedge V_2\vert V_1U)+C_2-(1-\beta)J_0\halpha,\\
  R_1+R_2&\leq I(T\wedge V_1V_2\vert U)-\alpha I(Z\wedge V_1\vert V_2U)-(1-\alpha)I(Z\wedge V_2\vert V_1U)
  \\&\hspace{.6\columnwidth}+C_1+C_2-J_0\halpha,\\
  R_1+R_2&\leq I(T\wedge V_1V_2)-I(Z\wedge V_1V_2).
\end{align*}

\begin{lemma}\label{case2confverein}
  We have for every $\alpha\in[\alpha\2_0,\alpha\2_1]$ 
\[
  \R\2_\alpha(p,C_1,C_2)=\bigcup_{\beta\zhalpha_0\leq\beta\leq\beta\zhalpha_1}\R\2_{\alpha,\beta}(p,C_1,C_2).
\]

\end{lemma}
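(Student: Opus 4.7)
The plan is to recognize this as a direct application of Lemma \ref{gemconc}, exactly in the spirit of the proof of Lemma \ref{alpha-union}. The variable playing the role of ``$\alpha$'' in Lemma \ref{gemconc} will be our parameter $\beta$, while the present $\alpha$ is held fixed throughout. One first reduces to the case $J_0\halpha>0$; if $J_0\halpha=0$, then $\beta\zhalpha_0=0$, $\beta\zhalpha_1=1$, and $[J_0\halpha-C_i]_+=0$, so $\R\2_\alpha(p,C_1,C_2)=\R\2_{\alpha,\beta}(p,C_1,C_2)$ for every $\beta\in[0,1]$ and the claim is trivial.

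For $J_0\halpha>0$, I would apply Lemma \ref{gemconc} with the substitutions
\begin{align*}
  r_1&=I(T\wedge V_1\vert V_2U)-\alpha I(Z\wedge V_1\vert V_2U)+C_1,\\
  r_2&=I(T\wedge V_2\vert V_1U)-(1-\alpha)I(Z\wedge V_2\vert V_1U)+C_2,\\
  a_1&=J_0\halpha,\ b_1=0,\ a_2=0,\ b_2=J_0\halpha,\ c=J_0\halpha,\\
  r_{12}&=I(T\wedge V_1V_2\vert U)-\alpha I(Z\wedge V_1\vert V_2U)-(1-\alpha)I(Z\wedge V_2\vert V_1U)+C_1+C_2,\\
  \alpha_0&=\beta\zhalpha_0,\quad\alpha_1=\beta\zhalpha_1,
\end{align*}
and treat the sum bound $I(T\wedge V_1V_2)-I(Z\wedge V_1V_2)$ as the $R_0{+}R_1{+}R_2$ bound with $R_0=0$ (or view it as a separate constraint that is unaffected by $\beta$). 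The hypotheses $a_1>b_1$, $a_2<b_2$, $a_1+a_2=b_1+b_2=c$ are immediate; the inequality $r_1+r_2\geq r_{12}$ follows from $I(T\wedge V_1\vert V_2U)\geq I(T\wedge V_1\vert U)$, which holds because $V_1$ and $V_2$ are conditionally independent given $U$ under $p\in\Pi$.

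The remaining work is bookkeeping. From Lemma \ref{gemconc}, the union $\bigcup_\beta \R\2_{\alpha,\beta}(p,C_1,C_2)$ is described by replacing $\beta J_0\halpha$ by $\beta\zhalpha_0 J_0\halpha=[J_0\halpha-C_2]_+$ in the $R_1$-bound and $(1-\beta)J_0\halpha$ by $(1-\beta\zhalpha_1)J_0\halpha=[J_0\halpha-C_1]_+$ in the $R_2$-bound, which reproduces precisely the individual bounds of $\R\2_\alpha(p,C_1,C_2)$. For the sum-rate bound one has to check that $r_{12}-c$ matches $I(T\wedge V_1V_2\vert U)+C_1+C_2-I(Z\wedge V_1V_2)$; this amounts to the identity
\[
  J_0\halpha+\alpha I(Z\wedge V_1\vert V_2U)+(1-\alpha)I(Z\wedge V_2\vert V_1U)=I(Z\wedge V_1V_2),
\]
which by the chain rule simplifies to $I(Z\wedge V_1V_2U)$ and then to $I(Z\wedge V_1V_2)$ using the Markov chain $U-(V_1,V_2)-(X,Y)-Z$ encoded in $p$ (so that $I(Z\wedge U\vert V_1V_2)=0$). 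Combining the two sum-rate bounds then gives the $\min\{\,\cdot\,,\,\cdot\,\}$ form appearing in $\R\2_\alpha(p,C_1,C_2)$, completing the identification.

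There is no essential obstacle; the only place where a small computation is required is the simplification of $r_{12}-c$ via the Markov property above, and the verification of the hypotheses of Lemma \ref{gemconc}. Both parallel the analogous steps in the proof of Lemma \ref{alpha-union}.\qed
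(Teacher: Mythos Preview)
Your proposal is correct and is exactly the approach the paper takes: the paper's entire proof is the single sentence ``This is seen immediately using Lemma \ref{gemconc}.'' You have simply spelled out the substitutions and verified the hypotheses, including the only nontrivial computation (that $J_0\halpha+\alpha I(Z\wedge V_1\vert V_2U)+(1-\alpha)I(Z\wedge V_2\vert V_1U)=I(Z\wedge V_1V_2)$ via the Markov structure of $p$), which the paper leaves implicit. One small point you might add for completeness is that $\beta\zhalpha_0\leq\beta\zhalpha_1$; this follows from $J_0\halpha<C_1+C_2$, which in turn is part of the characterization of $\alpha\in[\alpha\2_0,\alpha\2_1]$ noted in Subsection \ref{MAC}.
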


This is seen immediately using Lemma \ref{gemconc}.

\subsubsection{For Case 3:}

Define
\[
  \beta\1_0:=[1-\frac{C_2}{I(Z\wedge V_1V_2)}]_+,\qquad\beta\1_1:=\min\{\frac{C_1}{I(Z\wedge V_1V_2)},1\}.
\]
We have $\beta\1_0\leq\beta\1_1$ because $I(Z\wedge V_1V_2)<C_1+C_2$.

\begin{lemma}
  For $\beta\3_0\leq\beta\leq\beta\3_1$, let $\R\3_\beta(p,C_1,C_2)$ be the set of those real pairs $(R_1,R_2)$ satisfying
\begin{align*}
  R_1&\leq I(T\wedge V_1\vert V_2U_0)+C_1-\beta I(Z\wedge V_1V_2),\\
  R_2&\leq I(T\wedge V_2\vert V_1U_0)+C_2-(1-\beta) I(Z\wedge V_1V_2),\\
  R_1+R_2&\leq\min\bigl\{I(T\wedge V_1V_2\vert U)+C_1+C_2-I(Z\wedge V_1V_2),\\
  &\qquad\qquad I(T\wedge V_1V_2)-I(Z\wedge V_1V_2)\bigr\}.
\end{align*}
Then
\[
  \R\1(p,C_1,C_2)=\bigcup_{\beta\1_0\leq\beta\leq\beta\1_1}\R\1_\beta(p,C_1,C_2).
\]
\end{lemma}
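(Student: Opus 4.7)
The plan is to reduce this lemma to Lemma \ref{gemconc} by a suitable choice of parameters, in direct analogy with the proof of Lemma \ref{alfaromeo} for Case 1. The key observation is that the $\beta$-dependent bounds on $R_1$ and $R_2$ in $\R\3_\beta(p,C_1,C_2)$ arise from a linear splitting of the single quantity $I(Z\wedge V_1V_2)$ between Alice\eins\ and Alice\zwei, while the sum-rate bound does not depend on $\beta$ at all. Hence Lemma \ref{gemconc} should yield precisely the region $\R\3(p,C_1,C_2)$ once the endpoints $\beta\3_0$ and $\beta\3_1$ are correctly identified with the positive parts appearing there.

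Concretely, I would apply Lemma \ref{gemconc} with the substitutions
\begin{align*}
  r_1&=I(T\wedge V_1\vert V_2U)+C_1, & r_2&=I(T\wedge V_2\vert V_1U)+C_2,\\
  r_{12}&=I(T\wedge V_1V_2\vert U)+C_1+C_2, & r_{012}&=I(T\wedge V_1V_2),\\
  a_1&=I(Z\wedge V_1V_2), & b_1&=0,\\
  a_2&=0, & b_2&=I(Z\wedge V_1V_2),\\
  c&=d=I(Z\wedge V_1V_2), & \alpha_0&=\beta\3_0,\quad\alpha_1=\beta\3_1,
\end{align*}
and with $R_0=0$ so that the fourth bound becomes the sum-rate bound $R_1+R_2\leq I(T\wedge V_1V_2)-I(Z\wedge V_1V_2)$.

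I would first check the hypotheses of Lemma \ref{gemconc}: the equalities $a_1+a_2=b_1+b_2=c$ are immediate, the strict inequalities $a_1>b_1$ and $a_2<b_2$ hold whenever $I(Z\wedge V_1V_2)>0$ (the degenerate case is trivial), and $r_1+r_2\geq r_{12}$ follows from the conditional independence $V_1\perp V_2\mid U$, which implies $I(T\wedge V_1\vert V_2U)\geq I(T\wedge V_1\vert U)$ and hence $I(T\wedge V_1\vert V_2U)+I(T\wedge V_2\vert V_1U)\geq I(T\wedge V_1V_2\vert U)$. Nonemptiness of $\K_\beta$ for every $\beta\in[\beta\3_0,\beta\3_1]$ follows from the definition of $\Pi\3_{H_C}$ (which gives $I(Z\wedge V_1V_2)<H_C=C_1+C_2$, hence $r_{12}-c>0$), from the requirement $I(Z\wedge V_1V_2)\leq I(T\wedge V_1V_2)$ built into $\Pi$, and from the choice of $\beta\3_0,\beta\3_1$ which guarantees $(1-\beta)I(Z\wedge V_1V_2)\leq C_2$ and $\beta I(Z\wedge V_1V_2)\leq C_1$ on the whole interval.

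The main obstacle—and the only calculation worth performing carefully—is verifying that the endpoint bounds \eqref{set1} and \eqref{set2} of Lemma \ref{gemconc} coincide with the bounds in the definition of $\R\3(p,C_1,C_2)$. For $R_1$ this amounts to checking the identity
\[
  \beta\3_0\,I(Z\wedge V_1V_2)=\Bigl[1-\tfrac{C_2}{I(Z\wedge V_1V_2)}\Bigr]_+\cdot I(Z\wedge V_1V_2)=[I(Z\wedge V_1V_2)-C_2]_+,
\]
and analogously $(1-\beta\3_1)I(Z\wedge V_1V_2)=[I(Z\wedge V_1V_2)-C_1]_+$ for $R_2$, with both identities also valid in the boundary cases $\beta\3_0=0$ or $\beta\3_1=1$. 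Once these identifications are in place, the sum-rate bounds \eqref{set4} and \eqref{set3} reproduce the two arguments of the minimum in $\R\3(p,C_1,C_2)$ verbatim, completing the proof. As with Lemma \ref{alfaromeo}, this reduction transfers the burden of achievability to showing each individual $\R\3_\beta(p,C_1,C_2)$ is achievable, which is handled in the subsequent coding subsection.
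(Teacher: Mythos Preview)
Your proposal is correct and is precisely the approach the paper intends: the paper's own proof consists of the single sentence ``For the proof one uses Lemma \ref{gemconc},'' and you have correctly supplied the substitutions and verified the hypotheses (including the conditional-independence step for $r_1+r_2\geq r_{12}$ and the endpoint identities $\beta\3_0\,I(Z\wedge V_1V_2)=[I(Z\wedge V_1V_2)-C_2]_+$, $(1-\beta\3_1)I(Z\wedge V_1V_2)=[I(Z\wedge V_1V_2)-C_1]_+$). The reduction to two dimensions via $R_0=0$ is the natural way to handle the two sum-rate bounds appearing as a minimum.
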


Thus it is sufficient to show the achievability of $\R\3_\beta(p,C_1,C_2)$ for every $\beta$. For the proof one uses Lemma \ref{gemconc}.

\subsection{Coding}\label{subsect:confcod}

Let $C_1,C_2>0$ and let $p\in\Pi_{C_1+C_2}$. Further let $(R_1,R_2)\in\R(p,C_1,C_2)$. In Case 1 we then know that there is a $\beta\in[\beta\1_0,\beta\1_1]$ such that $(R_1,R_2)\in\R\1_\beta(p,C_1,C_2)$, in Case 2 we have an $\alpha\in[\alpha\2_0,\alpha\2_1]$ and a $\beta\in[\beta\zhalpha_0,\beta\zhalpha_1]$ with $(R_1,R_2)\in\R\2_{\alpha,\beta}(p,C_1,C_2)$. For Case 3, there is a $\beta\in[\beta\3_0,\beta\3_1]$ with $(R_1,R_2)\in\R\3_\beta(p,C_1,C_2)$. Recall the notation 
\[
  J_0\halpha=\begin{cases}
      I(Z\wedge U)&\text{in Case 1},\\
      \alpha I(Z\wedge V_2U)+(1-\alpha)I(Z\wedge V_1U)&\text{in Case 2},\\
      I(Z\wedge V_1V_2)&\text{in Case 3}.
    \end{cases}
\]
 We set
\[
  \tilde R_0^{(1)}:=R_1\wedge(C_1-\beta J_0\halpha),\quad\tilde R_0^{(2)}:=R_2\wedge(C_2-(1-\beta)J_0\halpha)
\]
and
\[
  \tilde R_\nu:=R_\nu-\tilde R_0^{(\nu)}\qquad(\nu=1,2).
\]
Then setting 
\[
  \tilde R_0:=\tilde R_0^{(1)}+\tilde R_0^{(2)},
\]
we conclude that 
\[
  (\tilde R_0,\tilde R_1,\tilde R_2)\in\begin{cases}
                                         \R\1_{\beta}(p)&\text{in Case 1},\\
                                         \R\2_{\alpha,\beta}(p)&\text{in Case 2},\\
                                         \R\3_\beta(p)&\text{in Case 3}.
                                       \end{cases}
\]
In particular, $(\tilde R_0,\tilde R_1,\tilde R_2)$ is achievable by the wiretap MAC $W$ with common message under the common randomness bound $C_1+C_2$. That means that for any $\eta,\eps>0$ and for sufficiently large $n$, there is a common-message blocklength-$n$ code which has the form
\begin{align*}
  \tilde G&:[\tilde K_0]\times[\tilde K_1]\times[\tilde K_2]\rightarrow\P(\X^n\times\Y^n),\\
  \phi&:\T^n\rightarrow[\tilde K_0]\times[\tilde K_1]\times[\tilde K_2],
\end{align*}
and the proof of Theorem \ref{thmcomm} shows that we may assume that $\tilde G$ is given by
\[
  \tilde G(\x,\y\vert\tilde k_0,\tilde k_1,\tilde k_2)=\frac{1}{\tilde L_0}\sum_{l_0=1}^{\tilde L_0}\tilde G_1(\x\vert\tilde k_0,\tilde k_1,l_0)\tilde G_2(\y\vert\tilde k_0,\tilde k_2,l_0)
\]
for two stochastic matrices $\tilde G_1,\tilde G_2$. For $\tilde L_0$ we have the bounds
\[
  J_0\halpha+\frac{\eta}{4}\leq\frac{1}{n}\log\tilde  L_0\leq J_0\halpha+\frac{\eta}{2}.
\]
Without loss of generality we may additionally assume that $\tilde L_0\1:=\tilde L_0^\beta$ and $\tilde L_0\2:=\tilde L_0^{(1-\beta)}$ are  integers. If $0<2\eta<\min\{\tilde R_\nu:\nu=0,1,2,\tilde R_\nu>0\}$, the codelength triple $(\tilde K_0,\tilde K_1,\tilde K_2)$ may be assumed to satisfy
\begin{equation}\label{kbed}
  [\tilde R_\nu-2\eta]_+\leq\frac{1}{n}\log\tilde K_\nu\leq[\tilde R_\nu-\eta]_+,\qquad(\nu=0,1,2),
\end{equation}
and both the average error as well as $I(\tilde M_0\tilde M_1\tilde M_2\wedge Z^n)$ are upper-bounded by $\eps$, where $(\tilde M_0,\tilde M_1,\tilde M_2)$ is distributed uniformly on $[\tilde K_0]\times[\tilde K_1]\times[\tilde K_2]$ and $Z^n$ is Eve's corresponding output random variable. The definitions imply that 
\[
  \frac{1}{n}\log\tilde K_0\tilde L_0\leq C_1+C_2.
\]
We can find $\tilde K_0',\tilde K_0^{(1)},\tilde K_0^{(2)}$ such that $\tilde K_0'=\tilde K_0^{(1)}\tilde K_0^{(2)}$ and $\tilde K_0'\leq\tilde K_0$ and satisfying
\begin{align}
  [\tilde R_0^{(\nu)}-2\eta]_+\leq\frac{1}{n}\log \tilde K_0^{(\nu)}\leq[\tilde R_0^{(\nu)}-\frac{\eta}{2}]_+,\label{k1bed}\\
  [\tilde R_0-2\eta]_+\leq\frac{1}{n}\log \tilde K_0'.\label{k2bed}  
\end{align}
Thus one obtains a natural embedding
\begin{align}\label{embed}
  [\tilde K_0^{(\nu)}]\times[L_0^{(\nu)}]\subset [\lfloor2^{nC_\nu}\rfloor]&&(\nu=1,2).
\end{align}

We now construct a wiretap code with conferencing encoders. Let 
\begin{align*}
  K_\nu:=\tilde K_0^{(\nu)}\tilde K_\nu&&(\nu=1,2).
\end{align*}
Thus every $k_\nu\in[K_\nu]$ has the form $(a_\nu(k_{\nu}),b_\nu(k_{\nu}))$ with $a_\nu(k_\nu)\in [\tilde K_0^{(\nu)}]$ and $b_\nu(k_\nu)\in[\tilde K_\nu]$. We then define a stochastic one-shot Willems conferencing protocol
\[
  c_1:[K_1]\rightarrow\P([\lfloor 2^{nC_1}\rfloor]),\qquad c_2:[K_2]\rightarrow\P([\lfloor 2^{nC_2}\rfloor])
\]
which is used to generate both a common message as well as common randomness. Given a message $k_\nu\in[K_\nu]$, Alice\nue\ chooses an $l_\nu$ uniformly at random from the set $[L_0^{(\nu)}]$ and then maps the pair $(k_\nu,l_\nu)$ to $(a_\nu(k_\nu),l_\nu)$, so $c_\nu(k_\nu,l_\nu)=(a_\nu(k_\nu),l_\nu)$.

Next we define stochastic encoders $G_1,G_2$ as in the definition of a code with conferencing encoders by setting 
\[
  \J:=[\lfloor 2^{nC_1}\rfloor]\times[\lfloor 2^{nC_2}\rfloor]
\]
and, using the embedding \eqref{embed},
\[
  G_1(\x\vert k_1,j)=\tilde G_1(\x\vert(a_1(k_1),k_0^{(2)}),b_1(k_1),(l_1,l_2))
\]
if $j=((a_1(k_1),l_1),(k_0^{(2)},l_2))$ and letting $G_1(\x\vert k_1,j)$ be arbitrary else; $G_2$ is defined analogously. For decoding, one takes the decoder from the common message code and lets it combine the messages it receives into elements of $[K_1]$ and $[K_2]$. By \eqref{kbed}, the numbers $K_1$ and $K_2$ satisfy
\begin{align*}
  \frac{1}{n}\log K_1&\geq R_1-3\eta,\\
  \frac{1}{n}\log K_2&\geq R_2-3\eta.
\end{align*}
Thus depending on the case we are in, every rate pair $(R_1,R_2)$ contained in $\R\1_\beta(p,C_1,C_2)$ or $\R\2_{\alpha,\beta}(p,C_1,C_2)$ or $\R\3_\beta(p,C_1,C_2)$ is achievable.

\section{Discussion}\label{sect:discussion}

\subsection{Conferencing and Secret Transmission}\label{subsect:confsectrans}

This subsection is devoted to the comparison of the wiretap MAC without conferencing nor common randomness and the wiretap MAC if conferencing is allowed. As our focus is on conferencing, we assume that common randomness can only be established by conferencing. We show that there exists a wiretap MAC where the only rate pair contained in the region \eqref{zero-achi} achievable without conferencing is $(0,0)$, whereas if conferencing is enabled with arbitrarily small $C_1,C_2>0$, then the corresponding achievable region contains positive rates. Note that this does not mean that there are cases where conferencing is necessary to establish secret transmission as we do not have a converse. This restriction limits the use of this discussion and should be kept in mind.

Our goal is to find multiple access channels $W_b$ and $W_e$ such that for every Markov chain $((V_1,V_2),(X,Y),(T,Z))$ where $P_{T\vert XY}=W_b$ and $P_{Z\vert XY}=W_e$ and where $V_1$ and $V_2$ are independent one has
\begin{equation}\label{doubleconv}
  I(T\wedge V_1V_2)\leq I(Z\wedge V_1V_2).
\end{equation}
We noted in Remark \ref{remzeroachi} that \eqref{zero-achi} is the achievable region without conferencing and it is easy to see that condition \eqref{doubleconv} is an equivalent condition for this region to equal $\{(0,0)\}$. Thus the only rate pair which is achievable according to our above considerations is $(R_1,R_2)=(0,0)$. At the same time, there should be a Markov chain $(U,(X,Y),(T,Z))$ for the same pair of channels $W_b$ and $W_e$ such that 
\[
  I(T\wedge XY)>I(Z\wedge XY).
\]
This would prove the existence of a rate pair $(R_1,R_2)$ with positive components for arbitrary $C_1,C_2>0$.

We recall one concept of comparison for single-sender discrete memoryless channels (DMCs) introduced by K\"orner and Marton \cite{KMComp}. 

\begin{definition}
  A DMC $W_e:\X\rightarrow\P(\Z)$ is less noisy than a DMC $W_b:\X\rightarrow\P(\T)$ if for every Markov chain $(U,X,(T,Z))$ with $P_{T\vert X}=W_b$ and $P_{Z\vert X}=W_e$ one has
\[
  I(Z\wedge U)\geq I(T\wedge U).
\]
\end{definition}
 It was observed by van Dijk \cite{vD} that this is nothing but saying that the function
\begin{align*}
  P_X\mapsto I(Z\wedge X)-I(T\wedge X),&&P_X\in\P(\X)
\end{align*}
is concave. Now we generalize this to the MAC case to obtain an equivalent condition for \eqref{doubleconv}.

\begin{lemma}\label{vDMAC}
  \eqref{doubleconv} holds for every Markov chain $((V_1,V_2),(X,Y),(T,Z))$ with independent $V_1,V_2$ and $X$ independent of $V_2$ and $Y$ independent of $V_1$ and $P_{T\vert XY}=W_b$ and $P_{Z\vert XY}=W_e$ if and only if the function
\begin{align*}
  (P_X,P_Y)\mapsto I(Z\wedge XY)-I(T\wedge XY),&&X,Y\text{ independent r.v.s on }\X\times\Y
\end{align*}
is concave in each of its components.
\end{lemma}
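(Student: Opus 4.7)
The approach is to adapt van Dijk's single-sender observation \cite{vD} to the MAC setting. Abbreviate $f(P_X,P_Y) := I(Z\wedge XY) - I(T\wedge XY)$ where $X\sim P_X$ and $Y\sim P_Y$ are independent and $(T,Z)$ is obtained by transmitting $(X,Y)$ through $W_b$ and $W_e$; the claim is that $f\geq 0$ on all independent product inputs is equivalent to $f$ being concave in each argument separately.

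For the forward direction, I would prove concavity in the first argument with $P_Y$ held fixed by constructing an ad hoc Markov chain that plugs into \eqref{doubleconv}. Given $P_X^0,P_X^1\in\P(\X)$ and $\lambda\in[0,1]$, let $V_1\in\{0,1\}$ with $\Pr[V_1=0]=\lambda$, let $V_2$ be a constant, draw $X\mid V_1=i\sim P_X^i$, and independently draw $Y\sim P_Y$. The independence hypotheses of the lemma are then trivially met. The chain rule through the Markov triple $V_1\to (X,Y)\to T$ gives $I(T\wedge V_1)=I(T\wedge XY)-\sum_i\lambda_i I(T\wedge XY\mid V_1=i)$, and analogously for $Z$. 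Since $V_2$ is trivial, subtracting these two identities identifies $I(Z\wedge V_1V_2)-I(T\wedge V_1V_2)$ with exactly $f(\lambda P_X^0+(1-\lambda)P_X^1,P_Y)-\sum_i\lambda_i f(P_X^i,P_Y)$, which is $\geq 0$ by hypothesis. Concavity in the second argument is obtained by exchanging the roles; the extension from two-point to general finite convex combinations is standard.

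For the reverse direction, I would take an arbitrary admissible Markov chain and expand via the chain rule through $(X,Y)$:
\[
  I(Z\wedge V_1V_2)-I(T\wedge V_1V_2) = f(P_X,P_Y) - \bigl[I(Z\wedge XY\mid V_1V_2) - I(T\wedge XY\mid V_1V_2)\bigr].
\]
The assumed factorization $P_{V_1V_2XY}=P_{V_1}P_{V_2}P_{X\mid V_1}P_{Y\mid V_2}$ implies that conditionally on $(V_1=v_1,V_2=v_2)$ the variables $X,Y$ are independent with laws $P_{X\mid V_1=v_1}$ and $P_{Y\mid V_2=v_2}$, so the bracketed term equals $\sum_{v_1,v_2}\Pr[V_1=v_1]\Pr[V_2=v_2]\,f(P_{X\mid V_1=v_1},P_{Y\mid V_2=v_2})$. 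Now I would apply separate concavity twice in succession: first in the first slot, averaging over $v_1$ with $P_Y$ fixed, then for each $v_1$ in the second slot, averaging over $v_2$. Chaining these two Jensen steps gives $f(P_X,P_Y)\geq\sum_{v_1,v_2}\Pr[V_1=v_1]\Pr[V_2=v_2]f(P_{X\mid V_1=v_1},P_{Y\mid V_2=v_2})$, which is \eqref{doubleconv}.

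The only (mild) subtlety lies in the reverse direction: it is precisely the marginal-independence hypotheses $V_1\perp V_2$, $X\perp V_2$, $Y\perp V_1$ together with the MAC-style factorization of $P_{XY\mid V_1V_2}$ that allow the double expectation of $f$ to split into two nested one-dimensional expectations, and this is why \emph{separate} concavity in each argument — rather than joint concavity, which would be a strictly stronger requirement — is the correct equivalent condition for \eqref{doubleconv}.
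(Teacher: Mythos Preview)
Your proposal is correct and follows essentially the same approach as the paper. Both hinge on the chain-rule identity
\[
  I(Z\wedge V_1V_2)-I(T\wedge V_1V_2)=f(P_X,P_Y)-\sum_{v_1,v_2}P_{V_1}(v_1)P_{V_2}(v_2)\,f(P_{X\vert V_1=v_1},P_{Y\vert V_2=v_2}),
\]
from which the equivalence with separate concavity is read off; the paper compresses both directions into the single phrase ``so it is clear that the nonnegativity of \eqref{nonneg} is equivalent to the concavity in each component'', whereas you spell out the forward direction via an explicit two-valued $V_1$ (with $V_2$ constant) and the reverse direction via two nested Jensen steps.
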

\begin{proof}
  Let a Markov chain be given as required in the lemma. One has
\begin{align}
  &\mathrel{\hphantom{=}}I(Z\wedge V_1V_2)-I(T\wedge V_1V_2)\label{nonneg}\\
  &=\bigl(I(Z\wedge XY)-I(T\wedge XY)\bigr)-\bigl(I(Z\wedge XY\vert V_1V_2)-I(T\wedge XY\vert V_1V_2)\bigr).\notag
\end{align}
Now note that the rightmost bracket equals
\begin{multline*}
  \sum_{v_1}\sum_{v_2}P_{V_1}(v_1)P_{V_2}(v_2)\\\bigl(I(Z\wedge XY\vert V_1=v_1,V_2=v_2)-I(T\wedge XY\vert V_1=v_1,V_2=v_2)\bigr),
\end{multline*}
so it is clear that the nonnegativity of \eqref{nonneg} is equivalent to the concavity in each component of the function from the lemma statement.\qed
\end{proof}

We now define the channels $W_b$ and $W_e$ which will provide the desired example. Let $N_1,N_2$ be i.i.d.\ random variables uniformly distributed on $\{0,1\}$. The input alphabets are $\X=\Y=\{0,1\}$. The output alphabet of $W_b$ is $GF(3)$ and the output alphabet of $W_e$ is $\{-2,\ldots,3\}$. The outputs $t$ of $W_b$ are given by
\[
  t = x+y+N_1,
\]
those of $W_e$ by
\[
  z = 2x-2y+N_2.
\]
The intuition is that in $W_e$, one can exactly determine through the output whether or not the inputs were equal and if they were unequal, which input was 0 and which was 1. For $W_b$, however, there are for every output at least two input possibilities, so it is reasonable that an independent choice of the inputs makes $W_e$ better than $W_b$. However, if one may choose the inputs with some correlation, one may choose the inputs to be equal. Then the output of $W_e$ is only noise, whereas one can still extract some information about the input from $W_b$.

As the entries of the corresponding stochastic matrices of both channels are only $1/2$ or $0$, the conditional output entropy is independent of the input distribution and equals $1$. Further any pair of independent random variables on $\X$ and $\Y$ is given by parameters $q,r\in[0,1]$ such that
\[
  \PP[X\hq=0]=q,\qquad\PP[Y\hr=0]=r.
\]
Thus in order to determine whether \eqref{doubleconv} holds, it is enough to consider the function $H(Z\hqr)-H(T\hqr)$ for $T\hqr,Z\hqr$ being the outputs of $W_b$ and $W_e$, respectively, corresponding to the pair $(X\hq,Y\hr)$. One has
\begin{align*}
  f_Z(q,r):=H(Z\hqr)&=-q(1-r)\log(q(1-r)/2)\\
  &\mathrel{\hphantom{=}}-(qr+(1-q)(1-r))\log((qr+(1-q)(1-r))/2)\\
  &\mathrel{\hphantom{=}}-(1-q)r\log((1-q)r/2)
\end{align*}
and
\begin{align*}
  f_T(q,r)&:=H(T\hqr)\\
  &=-\frac{1}{2}(qr+(1-q)(1-r))\log((qr+(1-q)(1-r))/2)\\
  &\mathrel{\hphantom{=}}-\frac{1}{2}(qr+q(1-r)+(1-q)r)\log((qr+q(1-r)+(1-q)r)/2)\\
  &\mathrel{\hphantom{=}}-\frac{1}{2}(q(1-r)+(1-q)r+(1-q)(1-r))\cdot\\
  &\hspace{3cm}\cdot\log((q(1-r)+(1-q)r+(1-q)(1-r))/2).
\end{align*}
Both entropies are symmetric in $q$ and $r$ and continuous on $[0,1]^2$ and differentiable on $(0,1)^2$, so by Lemma \ref{vDMAC} it suffices to find the second derivatives in $q$ of both of them and to compare. 

We have
\begin{align*}
  \frac{\partial f_Z}{\partial q}(q,r)=&-(1-r)\log(q(1-r)/2)\\
  &-(2r-1)\log((qr+(1-q)(1-r))/2)\\
  &+r\log((1-q)r/2)
\end{align*}
and 
\begin{align*}
  \frac{\partial f_T}{\partial q}(q,r)&=-\frac{1}{2}(2r-1)\log((qr+(1-q)(1-r))/2)\\
  &\mathrel{\hphantom{=}}-\frac{1}{2}(1-r)\log((qr+q(1-r)+(1-q)r)/2)\\
  &\mathrel{\hphantom{=}}+\frac{r}{2}\log((q(1-r)+(1-q)r+(1-q)(1-r))/2).
\end{align*}
Thus
\begin{align*}
  \frac{\partial^2f_Z}{\partial q^2}(q,r)=-\frac{1-r}{q}-\frac{(2r-1)^2}{qr+(1-q)(1-r)}-\frac{r}{1-q}
\end{align*}
and 
\begin{align*}
  \frac{\partial^2f_T}{\partial q^2}(q,r)&=-\frac{(2r-1)^2}{2(qr+(1-q)(1-r))}\\
  &\mathrel{\hphantom{=}}-\frac{(1-r)^2}{2(qr+q(1-r)+(1-q)r)}\\
  &\mathrel{\hphantom{=}}-\frac{r^2}{2(q(1-r)+(1-q)r+(1-q)(1-r))}.
\end{align*}
After some algebra, it turns out that for $q,r\in(0,1)$,
\begin{align*}
  \frac{\partial^2f_Z}{\partial q^2}(q,r)-\frac{\partial^2f_T}{\partial q^2}(q,r)&=-\frac{1-r}{2q}\cdot\frac{q+2r-qr}{q+r-qr}\\
  &\mathrel{\hphantom{=}}-\frac{(2r-1)^2}{2(qr+(1-q)(1-r))}\\
  &\mathrel{\hphantom{=}}-\frac{r}{2(1-q)}\cdot\frac{2-r-qr}{1-qr}\\
  &<0.
\end{align*}
Thus $f_Z-f_T$ is concave and \eqref{doubleconv} is true for $W_b,W_e$.

Now we show that there exists an input distribution with $I(T\wedge XY)>I(Z\wedge XY)$. Of course, $X$ and $Y$ cannot be independent any more in this case. Every probability distribution $p$ on $\{0,1\}$ induces a probability distribution $p^2$ on $\{0,1\}^2$ via $p^2(x,x)=p(x)$. Let the pair $(X,Y)$ be distributed according to $p$. It is immediate from the definition of $W_e$ that $I(Z\wedge XY)=0$. On the other hand, $P_T$ can be described by the vector $(1/2)(1,p(0),p(1))$. Thus one sees easily that this is maximized for $p(0)=p(1)=1/2$, resulting in 
\[
  I(T\wedge XY)=\frac{1}{2}.
\]
$p^2$ is identified as an element of $\Pi$ by setting $\U=\{0,1\}$, $P_U=P_X$, and $P_{X\vert U}=P_{Y\vert U}=\delta_U$. Note that $I(Z\wedge U)=0$, so secret transmission is possible with arbitrarily small conferencing capacities $C_1,C_2>0$.

\subsection{Necessity of Time-Sharing in Random Coding}\label{subsect:timesharing}

We show here that doing time-sharing during random coding is necessary for our method to work. This only serves to justify the effort we had to make in coding. We concentrate on Case 0 and 1. Then we have to show that it may happen that $\alpha\1_0>0$ or $\alpha\1_1<1$. Let $\X=\Y=\T=\Z=\{0,1\}$ and let $W_b,W_e:\{0,1\}^2\rightarrow\P(\{0,1\})$ be defined by 
\[
  W_b=\begin{pmatrix}
	0.6178& \quad0.3822\\
	0.0624& \quad0.9376\\
	0.9350& \quad0.0650\\
	0.2353& \quad0.7647
      \end{pmatrix}
  ,\qquad W_e=\begin{pmatrix}
                0.0729& \quad0.9271\\
		0.7264& \quad0.2736\\
		0.3662& \quad0.6338\\
		0.4643& \quad0.5357
              \end{pmatrix},
\]
where the output distribution for the input pair $(x,y)$ is given in row number $2x+y$ for each matrix. With $q=0.6933$ and $r=0.3151$, let $p=p^{(q)}\otimes p^{(r)}\in\P(\X\times\Y)$ be the product measure with the marginals
\[
  p^{(q)}=(q,1-q),\qquad p^{(r)}=(r,1-r).
\]
Note that $p\in\Pi_0$. One obtains the following entropies:
\begin{align*}
  H(T\vert XY)&\approx0.5685, & H(Z\vert XY)&\approx0.7851,\\
  H(T\vert X) &\approx0.8532, & H(Z\vert X) &\approx0.9952,\\
  H(T\vert Y) &\approx0.6251, & H(Z\vert Y) &\approx0.8442,\\
  H(T)        &\approx0.8866, & H(Z)        &\approx0.9999.
\end{align*}
Calculating with the above values returns
\begin{align*}
  I(T\wedge XY)      &=0.3181, & I(Z\wedge XY)      &=0.2147,\\
  I(T\wedge X\vert Y)&=0.0566, & I(Z\wedge X\vert Y)&=0.0590,\\
  I(T\wedge Y\vert X)&=0.2847, & I(Z\wedge Y\vert X)&=0.2101,\\
                     &         & I(Z\wedge X)       &=0.0047,\\
                     &         & I(Z\wedge Y)       &=0.1557.
\end{align*}
Thus the conditions \eqref{HC01} and \eqref{HC02} are satisfied. If $H_C<\min\{I(Z\wedge X\vert Y),I(Z\wedge Y\vert X)\}=0.0590$, then we can only show that $\R\0(p)$ or $\R\1(p)$ is achievable and might have to use time-sharing during random coding to do so. In fact, this is necessary as
\[
  I(Z\wedge X\vert Y)>I(T\wedge X\vert Y),
\]
whereas
\[
  I(Z\wedge Y\vert X)<I(T\wedge Y\vert X).
\]
Hence $\alpha\1_0>0$, but $\alpha\1_1=1$. This example was found by a brute-force search using the computer.




\appendix

\section{Proof of Lemma \ref{gemconc}}\label{sect:lemmaproof1}

The direction ``$\subset$'' in \eqref{statementalpha-union} is obvious. For the other direction, let $(R_0,R_1,R_2)\in\K$. We may assume that for some $0\leq\beta\leq1$, 
\begin{align*}
  R_1&=r_1-\beta(\alpha_1a_1+(1-\alpha_1)b_1)-(1-\beta)(\alpha_0a_1+(1-\alpha_0)b_1)\\
     &=r_1-(\beta\alpha_1+(1-\beta)\alpha_0)a_1-(\beta(1-\alpha_1)+(1-\beta)(1-\alpha_0))b_1
\end{align*}
because the claim is obvious for $R_1\leq r_1-\alpha_1a_1-(1-\alpha_1)b_1$. We show that $(R_0,R_1,R_2)\in\K_{\beta\alpha_1+(1-\beta)\alpha_0}$. The $R_1$-bound is satisfied due to our assumption. Further due to the bound on $R_1+R_2$,
\begin{align*}
  &\mathrel{\hphantom{\leq}}R_2\\
  &\leq r_{12}-c-r_1+(\beta\alpha_1+(1-\beta)\alpha_0)a_1+(\beta(1-\alpha_1)+(1-\beta)(1-\alpha_0))b_1\\
  &\leq r_{2}-(\beta\alpha_1+(1-\beta)\alpha_0)a_2-(\beta(1-\alpha_1)+(1-\beta)(1-\alpha_0))b_2,
\end{align*}
so $R_2$ also satisfies the necessary upper bound. The sum constraints are independent of $\alpha$. Hence all upper bounds in the definition of $\K_{\beta\alpha_1+(1-\beta)\alpha_0}$ are satisfied, and Lemma \ref{gemconc} is proved.

\section{Proof of Lemma \ref{unionconv2}}\label{sect:lemmaproof2}

For $\alpha\in[\alpha_0,\alpha_1]$, the set $\K_\alpha$ is contained in the convex hull of $\K_{\alpha_0}\cup\K_{\alpha_1}$. Thus we only have to prove that $\K=conv(\K_{\alpha_0}\cup\K_{\alpha_1})$. Without loss of generality we assume that $b>a$. 

We first prove $conv(\K_{\alpha_0}\cup\K_{\alpha_1})\subset\K$. Let $(R_0,R_1,R_2)\in conv(\K_{\alpha_0}\cup\K_{\alpha_1})$. Using the convexity of $\K_{\alpha_0}$ and $\K_{\alpha_1}$ we infer that there is a $(R\0_0,R\0_1,R\0_2)\in\K_{\alpha_0}$ and a $(R\1_0,R\1_1,R\1_2)\in\K_{\alpha_1}$ and a $\beta\in[0,1]$ such that
\[
  (R_0,R_1,R_2)=\beta(R\0_0,R\0_1,R\0_2)+(1-\beta)(R\1_0R\1_1,R\1_2).
\]
One sees immediately that $(R_0,R_1,R_2)$ satisfies the bounds \eqref{conv21}-\eqref{conv23} and \eqref{conv25}. It is sufficient to check that \eqref{conv24} is satisfied by the triples $(R\0_0,R\0_1,R\0_2)$ and $(R\1_0,R\1_1,R\1_2)$. For $(R\0_0,R\0_1,R\0_2)$ we assume that 
\[
  R\0_1=\gamma(r_1-\alpha_0a)
\]
for some $\gamma\in[0,1]$. After some calculations this yields
\begin{align*}
  bR\0_1+aR\0_2&\leq(b-a)r_1+ar_{12}-ab-(1-\gamma)(b-a)(r_1-\alpha_0a)\\
  &\leq (b-a)r_1+ar_{12}-ab.
\end{align*}
One proceeds analogously for $(R\1_0,R\1_1,R\1_2)$.

Next we have to check that $\K\subset conv(\K_{\alpha_0}\cup\K_{\alpha_1})$. It is sufficient to check whether those points $(R_0,R_1,R_2)$ are contained in $conv(\K_{\alpha_0}\cup\K_{\alpha_1})$ that satisfy both \eqref{conv24} and one of \eqref{conv21}-\eqref{conv23} with equality. So assume that
\begin{equation}\label{conv24erf}
  bR_1+aR_2=r_{12}a +r_1(b-a)-ab.
\end{equation}
First we also assume that
\[
  R_1+R_2=r_{12}-\alpha_0a-(1-\alpha_1)b.
\]
Then 
\[
  R_2=r_{12}-\alpha_0a-(1-\alpha_1)b-R_1
\]
and using \eqref{conv24erf} we obtain 
\[
  R_1=r_1-\frac{\alpha_1b-\alpha_0a}{b-a}a\leq r_1-\alpha_1a.
\]
For $R_2$ this gives 
\[
  R_2=r_{12}-r_1-\left(\alpha_0+\frac{\alpha_1b-\alpha_0a}{b-a}\right)a-(1-\alpha_1)b\leq r_2-(1-\alpha_1)b,
\]
so $(R_1,R_2)\in\K_{\alpha_1}$.

Now we assume
\[
  R_1=r_1-\alpha_0a.
\]
Then inserting this in \eqref{conv24erf} one obtains 
\[
  R_2\leq r_2-(1-\alpha_0)b,
\]
so $(R_1,R_2)\in\K_{\alpha_0}$. 

Finally for 
\[
  R_2=r_2-(1-\alpha_1)b
\]
we obtain 
\[
  R_1\leq r_1-\alpha_1a,
\]
so $(R_1,R_2)\in\K_{\alpha_1}$. This proves the lemma.

\end{document}